\newtheorem{lemma}{Lemma}
\newtheorem{corollary}{Corollary}
\newtheorem*{conjecture*}{Conjecture}
\newtheorem*{lemma*}{Lemma}
\newtheorem{theorem}{Theorem}
\newtheorem{proposition}{Proposition} 
\newtheorem{definition}{Definition}
\theoremstyle{remark}
\newtheorem{remark}{Remark}
\title{A complete characterization of pairs of binary phylogenetic trees with identical $A_k$-alignments}
\author{
  Mirko Wilde\\
  \texttt{Institute of Mathematics and Computer Science, University of Greifswald, Germany}
  \\
  \texttt{mirko.wilde@uni-greifswald.de}
  \and
  \textbf{Mareike Fischer}\\
  \texttt{Institute of Mathematics and Computer Science, University of Greifswald, Germany}\\
  \texttt{mareike.fischer@uni-greifswald.de}\\\texttt{email@mareikefischer.de}
}
\begin{document}
\maketitle

\begin{abstract}{Phylogenetic trees play a key role in the reconstruction of evolutionary relationships. Typically, they are derived from aligned sequence data (like DNA, RNA, or proteins) by using optimization criteria like, e.g., maximum parsimony (MP). It is believed that the latter is able to reconstruct the \enquote{true} tree, i.e., the tree that generated the data, whenever the number of substitutions required to explain the data with that tree is relatively small compared to the size of the tree (measured in the number $n$ of leaves of the tree, which represent the species under investigation). However, reconstructing the correct tree from any alignment first and foremost requires the given alignment to perform differently on the \enquote{correct} tree than on others.

A special type of alignments, namely so-called $A_k$-alignments, has gained considerable interest in recent literature. These alignments consist of all binary characters (\enquote{sites}) which require precisely $k$ substitutions on a given tree. It has been found that whenever $k$ is small enough (in comparison to $n$), $A_k$-alignments uniquely characterize the trees that generated them. However, recent literature has left a significant gap between $n\leq 2k+2$  -- namely the cases in which no such characterization is possible -- and $n\geq 4k$  -- namely the cases in which this characterization works. It is the main aim of the present manuscript to close this gap, i.e., to present a full characterization of all pairs of trees that share the same $A_k$-alignment. In particular, we show that indeed every binary phylogenetic tree with $n$ leaves is uniquely defined by its $A_k$-alignments if $n\geq 2k+3$. By closing said gap, we also ensure that our result is optimal.}
\end{abstract}

\keywords{maximum parsimony, phylogenetic tree, Buneman theorem}

\pacs{05C05 , 05-08 , 05C90 , 92B05 , 92-08}

\section{Introduction}\label{sec1}

Phylogenetic trees play a key role in the investigation of evolution \cite{Felsenstein2004, Semple2003}. They can be reconstructed and evaluated using different methods and criteria. One of the most famous such criteria is maximum parsimony (MP) \cite{Fitch1971}. However, just as other criteria, MP does not always recover a unique tree -- or, in other words, there are data (e.g., DNA alignments) for which several trees are equally good and thus equally likely to be the \enquote{true} tree, i.e., the tree that generated the data. Understanding in which cases there is a unique MP tree is of high relevance as a first step to reconstruct the correct tree. However, in order for any method to be able to reconstruct the \enquote{correct} tree from a sequence alignment, this alignment has to perform differently on that tree than on others. In terms of MP, it is necessary that the alignment's parsimony score on said tree differs from that on other trees.

In the present manuscript, we analyze a specific type of data, namely so-called $A_k$-alignments, which contain all binary characters (e.g., functions that attribute 0's and 1's to all species under investigation depending on whether or not a certain characteristic is present in the species or not) that require precisely $k$ substitutions on a given tree $T$ -- i.e., these characters have \emph{parsimony score} $k$ on $T$. Said alignments have gained considerable interest in the recent literature as it has been shown that when $k$ is small enough (compared to the number $n$ of species under investigation), $A_k$-alignments characterize the underlying tree. In particular, this has been shown to be the case whenever $n \geq 20k$ \cite{Fischer2023}), and this bound was later improved to $n \geq 4k$ \cite{WildeFischer2023}. Moreover, it has been shown that whenever $n\leq 2k+2$, there are cases in which two trees share the same $A_k$-alignment, making a characterization of individual trees based on their $A_k$-alignments impossible in said cases. Thus, there is a gap in the literature for values of $n$ in the set $\{2k+3,\ldots,4k-1\}$, for which it is not known if the $A_k$-alignment of a tree characterizes this tree. 

In the present manuscript, we first close the aforementioned gap. In particular, we show that for all $n\geq 2k+3$, the $A_k$-alignment of a tree $T$ uniquely characterizes $T$. Note that the fact that we close the gap also implies that our result is optimal in the sense that the bound of $n \geq 2k+3$ cannot be further improved. 

The closed gap can be viewed as an important first step towards showing the reliability of MP as a tree reconstruction criterion whenever the number of changes is small enough, which has been frequently observed in data already \cite{Sourdis1988}. 

The second main aim of our manuscript is then to characterize pairs of (non-isomorphic) trees which share the same $A_k$-alignment and whose leaf number is smaller than $2k+3$. This will shed further light on the \enquote{problematic} case in which different trees with the same $A_k$-alignment.

\section{Preliminaries}

\subsection{Terminology}
We adopt in large parts the terminology which we already used in \cite{WildeFischer2023}. As our results lie in the intersection of phylogenetic combinatorics and graph theory, we need basic concepts from both fields.

\subsubsection*{Basic concepts from phylogenetic combinatorics}
A simple connected graph $T=(V,E)$ is called a \emph{tree} if $T$ does not contain a cycle. The vertices in $T$ with degree at most $1$ are called \emph{leaves}. A \emph{phylogenetic $X$-tree} $T=(V,E)$ is a tree such that no vertex has degree $2$ and such that the leaves of the tree are bijectively labelled by and thus in the following also identified with $X$. The finite set $X$ is often referred to as \emph{species set} or \emph{taxon set}. Whenever there is no ambiguity concerning $X$ or whenever $X$ is irrelevant, we often write \emph{phylogenetic tree} instead of phylogenetic $X$-tree. Moreover, unless stated otherwise, we may assume $X=\{1,\ldots,n\}$. A phylogenetic $X$-tree with maximum degree $3$ is called \emph{binary}. A \emph{rooted} phylogenetic $X$-tree $T$ is a tree with at most one vertex of degree 2 and whose leaf set is bijectively labelled by $X$ and in which one inner vertex $\rho$ is declared to be the \emph{root}. In case such a rooted phylogenetic $X$-tree contains a degree-2 vertex, this vertex has to be the root. Moreover, the special case in which a tree consists only of one vertex, for technical reasons this tree is also considered to be a rooted binary phylogenetic $X$-tree with its only vertex being at the same time the root and its only leaf. Note that whenever $\vert X \vert \geq 2$, a rooted binary phylogenetic $X$-tree can be subdivided into its two \emph{maximal pending subtrees}, i.e., the subtrees adjacent to the root, cf. trees $T_A$ and $T_B$ in and their maximal pending subtrees depicted in Figure \ref{fig:treedecomp}.

As the graph theoretical isomorphism concept is not strong enough to capture evolutionary relationships, the concept of a \emph{phylogenetic isomorphism} is stricter than its graph theoretical analog in the following sense: Let $T=(V,E)$ and $\widetilde{T}=(\widetilde{V},\widetilde{E})$ be two phylogenetic $X$-trees which are isomorphic as graphs, i.e., there exists an bijection $\phi: V \rightarrow \widetilde{V}$ such that $\{v,\widetilde{v}\}\in E(T) \Longleftrightarrow \{\phi(v), \phi(\widetilde{v})\}\in E(\widetilde{T})$. If additionally $\phi(x) = x$ holds for all $x\in X$, and additionally -- in the rooted case -- if $\phi(\rho_T)=\rho_{\widetilde{T}}$ for the roots $\rho_T$ of $T$ and $\rho_{\widetilde{T}}$ of $\widetilde{T}$, then $\phi$ is called an \emph{isomorphism of (rooted) phylogenetic $X$-trees}. Whenever we talk about \emph{isomorphic} phylogenetic $X$-trees, we mean this to imply the existence of such an isomorphism. If $T$ and $\widetilde{T}$ are isomorphic phylogenetic $X$-trees, we denote this by $T\cong \widetilde{T}$.

 \subsubsection*{Basic concepts from graph theory}
 As some of our results are inspired by Menger's famous theorem, we now  introduce some concepts from this context. Let $G=(V,E)$ be a simple graph and let $P= v_1, \ldots, v_k$  (for $k\in \mathbb{N}_{\geq 1}$) be a sequence of vertices in $V$. If the elements of this sequence are pairwise distinct and if $\{v_i, v_{i+1}\}\in E$ for $i\in \{1, \ldots, k-1\}$, then $P$ is called a \emph{path}. Then $v_1, v_k$ are called the \emph{endpoints} of the path, and in the case $k\geq 3$ we call $v_2, \ldots, v_{k-1}$ the \emph{interior vertices} of the path. We do not distinguish between $v_1, v_2, \ldots, v_{k-1}, v_k$ and $v_k, v_{k-1}, \ldots, v_2, v_1$, so these sequences are considered to denote the same path. Now let $A,B$ be two subsets of $V$. If $P=v_1, \ldots, v_k$ is a path and $A\cap \{v_1, \ldots, v_k\} = \{v_1\}, B\cap \{v_1, \ldots, v_k\} = \{v_k\}$, we call $P$ an $A$-$B$-path. Of course $P$ is an $A$-$B$-path if and only if it is an $B$-$A$-path. In the case that one of the sets $A$ or $B$ is a singleton $\{v\}$, we sloppily drop the brackets. In particular, we can speak of a \emph{$v$-$A$-path}, of an \emph{$A$-$w$-path} or of a \emph{$v$-$w$-path} if $v$ and $w$ are vertices of $G$ and $A$ is a subset of $V$. If $T=(V,E)$ is a phylogenetic $X$-tree and $P= v_1, \ldots, v_k$ with $k\geq 2$, $v_2,\ldots,v_{k-1} \in V$ and $v_1, v_k\in X$, then $P$ is called a \emph{leaf-to-leaf-path}.

\subsubsection*{Characters, $X$-splits and alignments}
One goal of phylogenetics is the reconstruction of phylogenetic $X$-trees from data (such as binary character data). Thus, we now introduce the concepts which are used for modelling such data.
A function $f: X \rightarrow \{a,b\}$ is called a \emph{binary character}. Moreover, if $g: V(T) \rightarrow \{a,b\}$ is a function with the property that ${g}{\vert_ X}=f$, i.e., with the property that the restriction of $g$ on $X$ coincides with $f$, then $g$ is called an \emph{extension} of $f$ on $T$. Let $ch(g,T) = \{e=\{v,w\}\in E(T): g(v) \neq g(w)\}$ be the set of \emph{changing edges} of $g$ on $T$. Then $|ch(g,T)|$ is called the \emph{changing number} of $g$ on $T$. For a given binary character $f: X\rightarrow \{a,b\}$ and a given phylogenetic $X$-tree $T=(V,E)$,  let $\widetilde{g}$ be an extension of $f$ with the minimal changing number. Then we define $l(f,T)$ as the changing number of $\widetilde{g}$. In this case $l(f,T)$ is called the \emph{parsimony score} of $f$ on $T$, and $\widetilde{g}$ is called a \emph{most parsimonious} or \emph{minimal} extension of $f$ on $T$.

There is an important connection between binary characters and another concept which will turn out to be very important for our paper, namely so-called \emph{$X$-splits}:  Let $\emptyset \neq A,B\subset X$ such that $X=A\cup B$ is a bipartition. Then the unordered pair $A,B$ is called an \emph{$X$-split} and is denoted by $\sigma = A\vert B$. If $f: X \rightarrow \{a,b\}$ is binary character with $\{a,b\} \subseteq f(X)$ (i.e., $f$ employs both character states $a$ and $b$), then let $\emptyset\neq A_f = f^{-1}(\{a\})$ and $\emptyset\neq B_f=f^{-1}(\{b\})$. In this case, $A_f\vert B_f$ is an $X$-split and is called the $X$-split \emph{induced by $f$}. We note that for each $X$-split $A\vert B$ there are exactly two binary characters which induce it (as the roles of $A$ and $B$ can be interchanged).  

There is another reason why $X$-splits play an important role in phylogenetics: They can be used to model the edges of an phylogenetic $X$-tree. Let $T = (V,E)$ be a phylogenetic $X$-tree and $e\in E$. Then $e$ defines a unique $X$-split $A_e\vert B_e$ such that for each $x,y\in X$ we have that $|A_e\cap \{x,y\}| = |B_e\cap \{x,y\}| = 1$ if and only if $e$ is contained in the unique $x$-$y$-path of $T$. In this case, $A_e \vert B_e$ is called an $X$-split \emph{induced by $T$}. We use the notation $\Sigma(T)$ for the set of all $X$-splits induced by $T$. Now let $\sigma$ be an $X$-split with $\sigma = A\vert B$ and $\lvert A\rvert \leq \lvert B\rvert$. Then we define $\lvert \sigma \rvert = \lvert A \rvert$ and call this value the \emph{size} of $\sigma$. If $\lvert \sigma \rvert = 1$, then $\sigma$ is called \emph{trivial}. Note that each edge which contains a leaf induces a trivial split. This implies that for a given set $X$, each trivial $X$-split is contained in $\Sigma(T)$ for each phylogenetic $X$-tree. For this reason it is often convenient to consider $\Sigma^\ast(T)$, which is defined as the set of all non-trivial $X$-splits induced by $T$. Equivalently, $\Sigma^\ast(T)$ can be characterized as the set of all $X$-splits induced by edges of $T$ which are not incident to any leaf. 

Note that two binary characters $\sigma_1=A_1\vert B_1$ and $\sigma_2= A_2 \vert B_2$ on a taxon set $X$ are called \emph{compatible} if at least one of the four intersections $A_1\cap A_2$, $A_1\cap B_2$, $B_1\cap A_2$ or $B_1 \cap B_2$ is empty.  

It is not hard to see that there is a correlation between $X$-splits induced by a phylogenetic $X$-tree $T$ and binary characters on $X$ with parsimony score $1$ on $T$. If $l(f,T) = 1$, then there is a unique most parsimonious extension $g$ of $f$ which has exactly one changing edge $e$. In this case, the $X$-split $A_f\vert B_f$ induced by $f$ is the same as the $X$-split induced by $e$. If we denote by $A_1(T)$ the set of all binary characters $f$ with parsimony score $1$ on $T$, then the set of all $X$-splits induced by characters in $A_1(T)$ is precisely the set $\Sigma(T)$. Therefore, for a given phylogenetic $X$-tree $T$ it seems to be a natural generalization to define $A_k(T)$ as the set of all binary characters $f$ on $X$ with parsimony score $k$ on $T$. If $A$ is a set of binary characters on $X$, then we call $A$ an \emph{$A_k$-alignment} if there is a phylogenetic $X$-tree $T$ with $A = A_k(T)$. Note that $A_k$-alignments are the most important concept in our manuscript, as we seek to answer the question of when these character sets uniquely define a tree. 

\subsubsection*{Phylogenetic tree operations}
Throughout our manuscript, we will require certain phylogenetic tree operations as tools in our proofs. There are three concepts which will be particularly important in this regard: The \emph{nearest neighbor interchange (NNI) moves}, \emph{cherry reduction} and \emph{attaching a leaf/cherry}.

 If there is some inner edge $e = \{v,w\}$ of a binary phylogenetic $X$-tree $T$, there are two \emph{NNI moves induced by $e$} defined in the following way: Let $e_1, e_2, e_3, e_4$ be the four edges adjacent to $e$ in $T$. We can assume that $v$ is incident with $e_1$ and $e_2$ and that $w$ is incident with $e_3, e_4$. Let $e_1 = \{v, v_1\}, e_2 = \{v,v_2\}, e_3 = \{w, w_1\}$ and $e_4 = \{w, w_2\}$. Then, a new binary phylogenetic $X$-tree $T_1$ can be constructed by deleting $e_1$ and $e_3$ from $T$ and inserting new edges $f_1 = \{v,w_1\}, f_2 = \{w,v_1\}$ (cf. Figure \ref{fig:NNI}). Analogously, another binary phylogenetic $X$-tree $T_2$ can be constructed by deleting $e_1$ and $e_4$ from $T$ and inserting new edges $f_3 = \{v,w_2\}, f_4 = \{w, v_1\}$. We call the elements of $\{T_1,T_2\}$ the \emph{NNI neighbors} of $T$ and the set $\{T, T_1, T_2\}$ the \emph{NNI neighborhood} of $T$ (i.e., $T$ is part of its own neighborhood, even though it is not its own neighbor). 

\begin{figure}
    \centering
\includegraphics{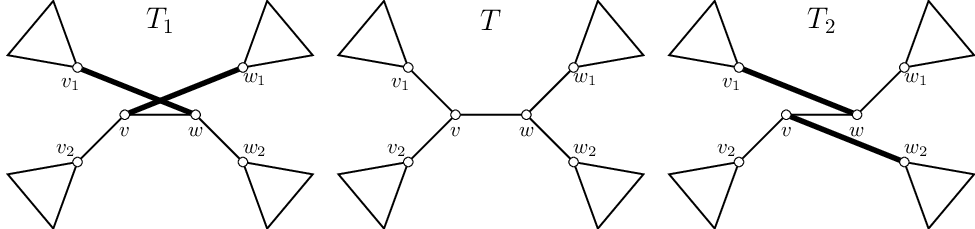}
\caption{NNI neighborhood of $T$.}
\label{fig:NNI}
\end{figure}

Next we define two types of \emph{cherry reductions}. Let $T$ be a binary phylogenetic $X$-tree with $\vert X \vert >1$. For each $x\in X$, let $v_x$ be the unique vertex in $T$ which is adjacent to $x$. If $x,y\in X$ and $v_x = v_y$, then we call $[x,y]$ a \emph{cherry} of $T$. We will consider the following options to construct a new binary phylogenetic $X$-tree. 
\begin{enumerate}
    \item A  \emph{cherry reduction of type 1 using $[x,y]$} works as follows: Delete $x,y$. Then relabel the resulting new leaf $v_x$ by $y$. This operation induces a phylogenetic $X\setminus \{x\}$-tree which we will denote by $T^1$.
    \item A \emph{cherry reduction of type 2 using $[x,y]$} works as follows: Delete $x,y$ and $v_x$. Then suppress the resulting degree-$2$ vertex. This operation induces a phylogenetic $X\setminus \{x,y\}$-tree which we will denote by $T^2$.
\end{enumerate}

 We remark that $T^1$ and $T^2$ are binary if $\lvert X\rvert \geq 3$. Cherry reductions are often used in inductive proofs because they reduce the number of leaves. There are two facts which contribute to the usefulness of cherry reductions in this context: First, every binary phylogenetic $X$-tree with at least three vertices contains at least one cherry. Second, every binary phylogenetic tree with at least four vertices contains at least two disjoint cherries  \cite[Proposition 1.2.5]{Semple2003}.

The last tree operation we want to introduce is that of \emph{attaching a leaf or cherry}. This can be regarded as an inverse operation to cherry reductions in the sense that one can obtain $T^1$ and $T$ from $T^2$ in the following way: There is a unique edge $e\in E(T^2)\setminus E(T)$ resulting from the last step in a cherry reduction of type 2 in which a degree-2 vertex is suppressed. So by starting with $T^2$, then subdividing $e$ by introducing a degree-2 vertex $v$, we get the suppressed degree-2 vertex back. Then we add $y$ to the taxon set of $T^2$ and insert an edge $\{v,y\}$. In this way, we obtain $T^1$ from $T^2$. We call this operation \emph{attaching leaf $y$ to edge $e$}. Now we repeat this operation by attaching leaf $x$ to $\{v,y\}$. By the  concatenation of these steps we ultimately obtain $T$ from $T^2$. We call this sequence of operations \emph{attaching cherry $[x,y]$ to $e$}. So if we first apply a cherry reduction of type 2 to $T$ and then attach the deleted cherry to the unique element of $E(T^2)\setminus E(T)$, we have $T$ again. On the other hand, if we first attach a cherry $[x,y]$ to some binary phylogenetic $X\setminus \{x,y\}$-tree $T'$ and then apply a cherry reduction of type 2 using $[x,y]$, we have $T'$ again.

We conclude this section by noting that whenever we reduce a binary phylogenetic $X$-tree by a cherry reduction, we can reduce a given binary character $f:X\rightarrow \{a,b\}$ together with it by simply restricting it to the remaining taxon set. This is formalized in the following definition.

\begin{definition}[Definition 1 in \cite{WildeFischer2023}]\label{def:cherryred}
Let $T$ be a binary phylogenetic $X$-tree. Let $T^1$ and $T^2$ be the two trees derived from $T$ when performing a cherry reduction of type 1 or 2, respectively, to cherry $[x,y]$ of $T$. Moreover, let $f: X\to \{a,b\}$ be a binary character.
\begin{enumerate}
\item If $f(x)=f(y)$, we define $f^1$ as the restriction of $f$ to the taxa of $T^1$.
\item If $f(x)\neq f(y)$, we define $f^2$ as the restriction of $f$ to the taxa of $T^2$.
\end{enumerate}
\end{definition}

We are now in the position to state some known results from the literature which we will use subsequently.

\subsection{Known results}
As we have seen in the previous section, the parsimony score of a character is usually defined and analyzed in terms of minimal extensions over a given tree $T$. However, for our results knowing other methods for characterizing the parsimony score of a character will be useful, too. Thus, we will explain some results from the literature which are helpful in this regard. At first we will introduce a statement about the relationship between the parsimony score of a character and the existence of some specific collection of edge-disjoint leaf-to-leaf paths. This statement is a corollary of Menger's famous theorem. Then we will mention some results about cherry reductions and show how the calculation of a parsimony score for a character with respect to a given tree can be reduced to the calculation of the parsimony scores of restrictions of this character to the taxon sets of smaller trees.

\subsubsection*{Known results on leaf-to-leaf paths and their relation to maximum parsimony}

As there are various versions of Menger's theorem \cite{Diestel2017} and for the sake of completeness, we state the theorem of Menger in the version on which the results are based which we cite subsequently (\cite{Menger1927}, \cite[Theorem 3.3.1]{Diestel2017}).

\begin{theorem}[Menger's theorem \cite{Menger1927}]\label{thm:menger3}
Let $G$ be a graph with vertex set $V$ and $A,B\subset V$. Then the minimal cardinality of a subset $W\subset V$ such that $G \setminus W$ contains no $A$-$B$-path is equal to the maximum number of vertex-disjoint $A$-$B$-paths.
\end{theorem}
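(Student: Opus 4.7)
The plan is to prove the theorem via two separate inequalities. Write $p$ for the maximum number of vertex-disjoint $A$-$B$-paths in $G$ and $k$ for the minimum cardinality of a subset $W \subseteq V$ with no $A$-$B$-path in $G \setminus W$, so the goal is to show $p = k$.

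The first direction $p \leq k$ is immediate: given any collection $P_1, \ldots, P_p$ of vertex-disjoint $A$-$B$-paths and any separator $W$, each $P_i$ must share at least one vertex with $W$ (otherwise $P_i$ would survive in $G \setminus W$), and by vertex-disjointness these shared vertices are distinct, forcing $|W| \geq p$.

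For the reverse inequality $k \leq p$, I would proceed by induction on $|E(G)|$, possibly after a preliminary reduction to the case $A \cap B = \emptyset$ (vertices in $A \cap B$ give trivial length-zero paths that can be removed along with their contribution of $1$ to both $p$ and $k$). The base case is an edgeless graph, where both quantities are zero. For the inductive step, fix an edge $e = \{u,v\}$ of $G$ and distinguish two subcases. If the minimum $A$-$B$-separator of $G - e$ still has size at least $k$, then by the inductive hypothesis $G - e$ contains $k$ vertex-disjoint $A$-$B$-paths, which automatically lie in $G$. Otherwise $G - e$ admits a separator $S$ of size exactly $k - 1$, so $S \cup \{u\}$ and $S \cup \{v\}$ are minimum $A$-$B$-separators of $G$ of size $k$; write $S' = S \cup \{u\}$ and partition $V$ into the $A$-side $V_A$ and the $B$-side $V_B$ relative to $S'$, yielding subgraphs $G_A = G[V_A \cup S']$ and $G_B = G[V_B \cup S']$ that overlap exactly on $S'$. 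Apply the inductive hypothesis to $G_A$ (with terminal sets $A$ and $S'$) and to $G_B$ (with terminal sets $S'$ and $B$), each of which must also have minimum separator size $k$ since any smaller separator would extend to one in $G$. This yields $k$ vertex-disjoint paths in each half, which can be concatenated through the vertices of $S'$ to produce $k$ vertex-disjoint $A$-$B$-paths in $G$.

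The main obstacle is justifying that the inductive step is legitimate, namely that both $G_A$ and $G_B$ have strictly fewer edges than $G$ so that the induction applies, and that the paths on the two sides can be consistently glued along $S'$ without unintended vertex collisions. The delicate subcase is when the chosen separator $S'$ coincides with $A$ or $B$, which collapses one of the two subproblems; this is handled by exploiting the freedom to instead work with $S \cup \{v\}$, or by observing that in such a degenerate situation $e$ itself contributes the desired extra path. Once these case distinctions are settled, the concatenation argument combined with the vertex-disjointness on each side completes the proof.
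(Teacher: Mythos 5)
First, note that the paper does not prove this statement at all: Menger's theorem is quoted as a known classical result (with references to Menger 1927 and to Diestel, Theorem 3.3.1), so there is no in-paper argument to compare yours against; your sketch has to be judged as a proof of the classical theorem. Its skeleton is the standard edge-deletion induction, and most of it is sound: the inequality $p\leq k$, the reduction to $A\cap B=\emptyset$, the edgeless base case, and the subcase in which every $A$-$B$-separator of $G-e$ still has size at least $k$ are all fine, as is the observation that otherwise $S\cup\{u\}$ and $S\cup\{v\}$ are minimum separators of $G$ with $|S|=k-1$.

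The genuine gap sits exactly where you defer it. With your symmetric split along the single separator $S'=S\cup\{u\}$, the claim that \emph{both} $G_A$ and $G_B$ have strictly fewer edges than $G$ is false in general: $e$ itself is an edge of $G_B$ (its endpoint $u$ lies in $S'$ and $v$ lies on the $B$-side), so $|E(G_B)|<|E(G)|$ requires some edge incident with $V_A$, which fails precisely when $V_A$ consists of vertices of $A$ with no incident edges, essentially when $A=S'$; moreover, even $e\notin E(G_A)$ rests on a relabelling you never justify, namely that $u$ is reachable from $A$ and $v$ from $B$ in $(G-e)-S$ (argue it: since $|S|=k-1<k$, some $A$-$B$-path of $G$ avoids $S$ and must traverse $e$; orienting it from $A$ to $B$ gives the labelling). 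The clean classical repair, which your closing sentence only gestures at, is to split \emph{asymmetrically}: apply the induction hypothesis to the $A$-side of $S\cup\{u\}$ together with $S\cup\{u\}$ (terminal sets $A$ and $S\cup\{u\}$) and to the $B$-side of $S\cup\{v\}$ together with $S\cup\{v\}$ (terminal sets $S\cup\{v\}$ and $B$). The first subgraph omits $v$ and the second omits $u$, so both omit $e$ and have fewer edges; each has separator number exactly $k$ because an $A$-$(S\cup\{u\})$-separator of the first (respectively a $(S\cup\{v\})$-$B$-separator of the second) already separates $A$ from $B$ in $G$; and the two families of $k$ disjoint paths are glued by matching endpoints on $S$ and joining the path ending at $u$ to the path beginning at $v$ through the edge $e$. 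One further sentence is needed to check that the concatenations are $A$-$B$-paths in the strict sense of the paper (they meet $A$ only in their first and $B$ only in their last vertex), which holds because every vertex of $A$ outside $S\cup\{u\}$ lies on the $A$-side, and symmetrically for $B$. With these additions your outline becomes a complete proof; as written, the decisive half of the inductive step is asserted rather than proved.
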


The following result can be derived as a direct consequence of Menger's theorem and is the standard variant of this theorem within the area of mathematical phylogenetics.

\begin{proposition}[Corollary 5.1.8 in \cite{Semple2003}]\label{prop:menger0}
Let $T$ be a binary phylogenetic $X$-tree, and let $f: X\rightarrow \{a,b\}$ be a binary character. Then, $l(f,T)$ is equal to the maximum number of edge-disjoint $A_f$-$B_f$-paths in $T$.
\end{proposition}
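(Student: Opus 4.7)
The plan is to establish both inequalities separately, with the edge-disjoint form of Menger's theorem as the main tool.

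For the easy direction $l(f,T) \geq m$, where $m$ denotes the maximum number of edge-disjoint $A_f$-$B_f$-paths in $T$, I would fix an arbitrary extension $g$ of $f$ on $T$ and argue as follows. If $P = v_1, \ldots, v_k$ is any $A_f$-$B_f$-path, then $g(v_1) \neq g(v_k)$, so walking from $v_1$ to $v_k$ one must encounter at least one edge of $P$ on which $g$ changes value. Taking $m$ pairwise edge-disjoint such paths produces $m$ distinct changing edges, so $|ch(g,T)| \geq m$. Minimising over all extensions yields $l(f,T) \geq m$.

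For the harder direction $l(f,T) \leq m$, I would produce an explicit extension whose changing number equals $m$. First I would deduce from Theorem~\ref{thm:menger3} the edge-disjoint variant of Menger's theorem, giving an edge set $F \subseteq E(T)$ with $|F| = m$ such that $T - F$ contains no $A_f$-$B_f$-path. The standard route is to apply Theorem~\ref{thm:menger3} to the graph $T'$ obtained from $T$ by subdividing each edge once: vertex separators in $T'$ lying on the new subdivision vertices correspond bijectively to edge separators in $T$ of the same size, and one may shift any separator vertex that lies in $V(T)$ to an adjacent subdivision vertex without enlarging it. Because $T$ is a tree, $T - F$ decomposes into exactly $|F|+1$ subtrees, and each such subtree contains leaves from at most one of the classes $A_f$, $B_f$ — otherwise $T - F$ would still host an $A_f$-$B_f$-path.

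Given such an $F$, I would then define an extension $g$ of $f$ by assigning the value $a$ to every vertex in a subtree that contains some leaf of $A_f$, the value $b$ to every vertex in a subtree that contains some leaf of $B_f$, and (say) the value $a$ to every vertex in a subtree containing no leaves. By the previous paragraph this is well-defined and satisfies $g|_X = f$, while every changing edge of $g$ necessarily lies in $F$, so $l(f,T) \leq |ch(g,T)| \leq |F| = m$. I expect the main technical hurdle to be precisely this second direction: rigorously passing from the vertex-form of Menger's theorem stated as Theorem~\ref{thm:menger3} to its edge-form, and verifying the monochromaticity of the components of $T - F$. Once those two points are nailed down, the construction of the extension, and with it the equality $l(f,T)=m$, falls out immediately.
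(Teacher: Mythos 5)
Your first inequality is correct, and so is the final construction: given an edge set $F$ with $|F|=m$ separating $A_f$ from $B_f$, the components of $T-F$ are monochromatic (the interior vertices of a leaf-to-leaf path in a tree are internal vertices, hence not in $A_f\cup B_f$), and the resulting extension has all its changing edges in $F$. The gap is exactly where you yourself suspected it: the passage from the vertex form of Menger's theorem (Theorem~\ref{thm:menger3}) to the edge form via subdivision. The assertion that \enquote{one may shift any separator vertex that lies in $V(T)$ to an adjacent subdivision vertex without enlarging it} is not a standard fact and is false for trees in general: take a star with centre $v$ of degree four and leaves $a_1,a_2\in A_f$, $b_1,b_2\in B_f$. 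In the subdivided star, $\{v\}$ is a minimum $A_f$-$B_f$-separator, but no single subdivision vertex separates $A_f$ from $B_f$ (the minimum edge cut has size two). So any correct proof of your shifting step must use that $T$ is binary and that $A_f,B_f$ consist of leaves, and your sketch invokes neither.

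The step can be repaired, but it needs a genuine argument. For an internal vertex $v\in W$ of degree three with incident subdivision vertices $s_1,s_2,s_3$: if no $(W\setminus\{v\})\cup\{s_i\}$ separates, then for each $i$ there is an $A_f$-$B_f$-path avoiding $W\setminus\{v\}$ that passes through $v$ using the two subdivision vertices other than $s_i$; the three branch-segments of these paths show that the two endpoints lying in a common branch of $T-v$ must receive the same state (otherwise one obtains an $A_f$-$B_f$-path avoiding $W$ entirely), and chasing these equalities around the three branches forces three pairwise distinct states, a contradiction with having only two states. Alternatively, you can sidestep the conversion altogether by deriving the edge version of Menger's theorem in the standard way (line graph, or super-source/super-sink construction), or by a direct induction on the tree. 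As written, however, the second direction rests on an unproved claim whose general form is false, so the proof is not complete. For context: the paper does not prove this proposition either; it quotes it as Corollary 5.1.8 of \cite{Semple2003}, so there is no in-paper argument to compare your route against.
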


The following lemma, which was very useful in \cite{WildeFischer2023} for the proof of Theorem \ref{thm:bound4k}, is strongly related to Proposition \ref{prop:menger0}. It does not only tell us the maximal number of edge-disjoint leaf-to-leaf paths in a binary phylogenetic tree, but also guarantees that we can choose one endpoint of at least one such path.

\begin{lemma}\label{lem:leafchoice}(Lemma 2 in \cite{WildeFischer2023})
Let $T$ be a binary phylogenetic tree on taxon set $X$ with $\vert X \vert = n\geq 2$. Let $x\in X$. Then $T$ contains $p=\left\lfloor\frac{n}{2}\right\rfloor$ edge-disjoint leaf-to-leaf paths $P_1,\ldots,P_{p}$ such that $x$ is an endpoint of $P_1$. 
\end{lemma}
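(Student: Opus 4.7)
The natural plan is induction on $n = |X|$, using cherry reduction of type~2 to reduce to a tree with two fewer leaves, and then lifting the resulting paths back to $T$.

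For the base cases $n = 2$ and $n = 3$, we have $p = 1$. If $n=2$, $T$ is a single edge with leaves $x$ and some $y$, and $P_1 = x, y$ does the job. If $n=3$, $T$ is a star with central vertex $c$ and three leaves, and we may take $P_1$ as a path from $x$ through $c$ to any other leaf.

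For the inductive step $n \geq 4$, I would invoke \cite[Proposition 1.2.5]{Semple2003} (already cited in the excerpt) to obtain two disjoint cherries of $T$. Since they are disjoint, at least one of them, say $[a,b]$, does not contain $x$. Let $v$ be the inner vertex adjacent to $a$ and $b$, and let $u$ be the third neighbor of $v$. Apply a cherry reduction of type~2 using $[a,b]$ to obtain a binary phylogenetic $(X\setminus\{a,b\})$-tree $T^2$; this reduction removes the edges $\{a,v\}, \{b,v\}, \{v,u\}$ and replaces the two edges incident to $u$ in $T$ (other than $\{u,v\}$) with a single new edge $e^\ast$ of $T^2$. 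The leaf $x$ is still a leaf of $T^2$, and $|X\setminus\{a,b\}| = n-2 \geq 2$, so the induction hypothesis yields $\lfloor (n-2)/2 \rfloor = p - 1$ edge-disjoint leaf-to-leaf paths $P_1', \ldots, P_{p-1}'$ in $T^2$ with $x$ an endpoint of $P_1'$.

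I would then lift each $P_i'$ to a leaf-to-leaf path $P_i$ in $T$ by replacing, if present, the edge $e^\ast$ by the length-two detour $u_1, u, u_2$ through the resurrected vertex $u$. Since $e^\ast$ appears in at most one of the paths $P_i'$ (they are edge-disjoint in $T^2$) and since no other edge of $T^2$ has been altered, the paths $P_1, \ldots, P_{p-1}$ remain edge-disjoint in $T$, and their endpoints are still leaves of $T$. Finally, setting $P_p$ to be the cherry path $a, v, b$ adds one more leaf-to-leaf path whose edges $\{a,v\}$ and $\{b,v\}$ are not present in $T^2$ and hence cannot appear in any $P_i$ for $i < p$. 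This produces $p = \lfloor n/2 \rfloor$ edge-disjoint leaf-to-leaf paths, with $x$ still an endpoint of $P_1$, completing the induction.

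The main thing to verify carefully is the availability of a cherry disjoint from $x$ in the inductive step, which is why the two-disjoint-cherries fact is essential; the only other subtlety is the bookkeeping around the edge $e^\ast$ introduced by suppressing the degree-two vertex, and ensuring the lift preserves both edge-disjointness and the property that the lifted paths are leaf-to-leaf paths of $T$. Both of these are local observations about the single cherry reduction being applied, so no additional machinery is needed.
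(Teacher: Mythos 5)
Your proof is correct: the induction via a cherry $[a,b]$ avoiding $x$, the type-2 cherry reduction, the lift across the suppressed vertex $u$ (with the only new edges being the detour $\{u_1,u\},\{u,u_2\}$, which cannot clash with $\{a,v\},\{b,v\}$), and the added path $a,v,b$ all check out, and $\lfloor (n-2)/2\rfloor = \lfloor n/2\rfloor - 1$ gives the right count. The paper itself only cites this lemma from \cite{WildeFischer2023} without reproducing a proof, and your cherry-reduction induction is exactly the kind of argument that reference and this paper's toolkit are built on, so there is nothing to add.
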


We will later state a variant of Lemma \ref{lem:leafchoice}, namely Lemma \ref{lem:leafchoice2}, which will be useful to prove the main statements of our manuscript.

Now, as already stated above, some of our proofs use inductive arguments. Thus, we need to  derive the parsimony score of some binary character $f$ on the taxon set $X$ a given tree $T$ from the parsimony scores of restrictions $f^1$ and $f^2$ of $f$ to the taxon sets of smaller trees. The main ingredient in achieving this is the following lemma.

\begin{lemma}[Lemma 3 and Lemma 4 in \cite{WildeFischer2023}]\label{lem:cherryred1} Let $k \in \mathbb{N}_{\geq 1}$, $n \in \mathbb{N}_{\geq 3}$ and let $X=\{1,\ldots,n\}$. Let $T$ be a binary phylogenetic $X$-tree with $|X|=n$ and a cherry $[x,y]$. 
Consider the partition $A_k(T) = A_1 \cup A^1_2\cup A^2_2$ with $A_1$ defined as the alignment containing all characters $f\in A_k(T)$ for which $f(x) = f(y)$, with $A^1_2$ defined as the alignment containing all characters $f\in A_k(T)$ for which $f(x) = a, f(y) = b$ and with $A^2_2$ defined as the alignment containing all characters $f\in A_k(T)$ for which $f(x) = b, f(y) = a$.
Let $T^1$ and $T^2$ be the two trees derived from $T$ when performing a cherry reduction of type 1 or 2, respectively, using cherry $[x,y]$. Consider the following maps given by Definition \ref{def:cherryred}:
\begin{enumerate}
\item If $f\in A_1$, then $f$ is mapped to $f^1$.
\item If $f\in A^i_2$, then $f$ is mapped to $\left(f^2, i\right)$ for $i = 1,2$.
\end{enumerate}
Then, the first map is a bijection from $A_1$ to $A_k(T^1)$ and the second map is a bijection from $A^1_2 \cup A^2_2$ to $A_{k-1}(T^2) \times \{1,2\}$.
\end{lemma}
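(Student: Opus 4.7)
The proof naturally splits into two symmetric parts, one for each map. In both cases the strategy is the same: verify that the map is well-defined (i.e., lands in $A_k(T^1)$ or $A_{k-1}(T^2) \times \{1,2\}$ respectively), and then exhibit an explicit inverse. The key technical input is a local Fitch-style analysis of minimal extensions on the cherry $[x,y]$, which tells us precisely how the parsimony score shifts under each reduction type.

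\emph{First map, $A_1 \to A_k(T^1)$.} For any $f \in A_1$ we have $f(x) = f(y) = c$ for some $c \in \{a,b\}$. I would first argue that there is always a minimal extension $g$ of $f$ on $T$ with $g(v_x) = c$: if some minimal extension assigned the opposite color to $v_x$, then both pendant edges $\{x,v_x\}$ and $\{y,v_x\}$ would be changing, and reassigning $v_x$ to $c$ would remove these two changes while creating at most one new change on the third edge incident to $v_x$, contradicting minimality. Since the type-1 reduction only deletes $x$ and relabels $v_x$ as $y$, restricting $g$ to $V(T^1)$ yields an extension of $f^1$ with the same changing edges, so $l(f^1, T^1) \leq k$. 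Conversely, any extension of $f^1$ on $T^1$ lifts to $T$ by setting $g(x) := c$; since $g(v_x)$ already equals $f(y) = c$ on the lifted side, no new change is introduced. Hence $l(f^1, T^1) = k$, establishing well-definedness. Injectivity follows because $f(x) = f(y) = f^1(y)$ lets us recover $f$ from $f^1$, and the same lifting gives surjectivity.

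\emph{Second map, $A_2^1 \cup A_2^2 \to A_{k-1}(T^2) \times \{1,2\}$.} For $f \in A_2^i$ we have $f(x) \neq f(y)$, so any extension of $f$ on $T$ contains at least one changing pendant edge in the cherry. A minimal extension contains exactly one: picking $g(v_x) \in \{f(x),f(y)\}$ makes exactly one of $\{x,v_x\}, \{y,v_x\}$ changing, which is optimal (assigning $v_x$ a third value is impossible here, and reassigning never decreases the count among the two pendant edges below one). Deleting $x,y,v_x$ and suppressing the resulting degree-2 vertex removes this single pendant change and merges two edges that share a common color, so the restriction of $g$ descends to an extension of $f^2$ on $T^2$ with exactly $k-1$ changing edges. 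Conversely, any extension of $f^2$ on $T^2$ lifts by subdividing the unique edge of $E(T^2)\setminus E(T)$, assigning $v_x$ the color of whichever neighbor is preferable, and attaching $x,y$ with their prescribed colors; this adds exactly one changing edge. Hence $l(f^2, T^2) = k-1$. Injectivity is immediate: the index $i$ recovers $f\vert_{\{x,y\}}$ and $f^2$ recovers $f$ on $X\setminus\{x,y\}$. For surjectivity, given $(h,i) \in A_{k-1}(T^2) \times \{1,2\}$, define $f$ to agree with $h$ off the cherry and to realize $i$ on $\{x,y\}$; the lifting shows $l(f,T) = k$, so $f \in A_2^i$.

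\emph{Main obstacle.} Everything reduces to the two local statements about minimal extensions on the cherry: that in the $f(x)=f(y)$ case there exists a minimal extension with $g(v_x) = f(x)$, and that in the $f(x)\neq f(y)$ case there exists one with exactly one changing pendant edge. These are standard consequences of the Fitch algorithm applied at the cherry, but they must be carefully stated to make the subsequent restriction/lifting bookkeeping transparent. Once these are in place, well-definedness, injectivity, and surjectivity follow by essentially verifying that the restriction and lifting operations are mutually inverse on the appropriate character sets.
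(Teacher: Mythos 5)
This lemma is imported: the paper states it with a citation to Lemmas~3 and~4 of \cite{WildeFischer2023} and gives no proof of its own, so there is nothing in this manuscript to compare your argument against line by line. On its own merits, your restriction/lifting argument is correct and is the standard way to prove this statement: for each of the two maps you establish $l(f^1,T^1)=k$ (resp.\ $l(f^2,T^2)=k-1$) by descending a minimal extension and lifting one back, and the explicit inverses ($f$ recovered from $f^1$ via $f(x)=f^1(y)$, and from $(f^2,i)$ via the index $i$) give bijectivity. One small point of rigor: your claim that the descended extension in the second case has \emph{exactly} $k-1$ changing edges tacitly uses two further local facts about a minimal extension, namely that $g(v_x)$ agrees with its third neighbor $u$ and that $g(u)$ agrees with at least one neighbor of $u$ (so that suppressing $u$ does not alter the change count on the merged edge); both follow by the same reassignment trick you already use, or you can sidestep them entirely by only claiming the inequality $l(f^2,T^2)\leq k-1$ from the descent and getting equality from your lifting bound $l(f,T)\leq l(f^2,T^2)+1$. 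With that caveat made explicit (and noting the degenerate case where $T^2$ is a single vertex, which the paper's conventions allow), your proof is complete.
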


There is an equivalent formulation of Lemma \ref{lem:cherryred1} in the language of matrices. As is often done for alignments in mathematical phylogenetics, for a given phylogenetic $X$-tree we can represent $A_k(T)$ as a matrix with $|X|$ rows and $|A_k(T)|$ columns. Each row represents an element of the taxon set and each column a binary character with parsimony score $k$. An entry $(i,j)$ of the matrix has value $a$ if the function represented by column $j$ maps the taxon represented by row $i$ to $a$, otherwise it has value $b$. Then the lemma says that we can order the columns and rows such that the first two rows represent the taxa of cherry $[x,y]$ and the matrix $A_k(T)$ can be represented as a composition of the matrix $A_k(T^1)$ and $A_{k-1}(T^2)$ as indicated in Table \ref{table_alignmentdecomp}.

\begin{table}\begin{center}$\underbrace{
\begin{tabular}{ |cccc| } 
 \hline
 $aaa$ $\cdots$& $bbb$ $\cdots$& $aaa$ $\cdots$ & $bbb$ $\cdots$
\\
 \cline{1-2}
 $aaa$ $\cdots$ & $bbb$ $\cdots$ & \multicolumn{1}{|c}{$bbb\cdots$} & $aaa$ $\cdots$\\ 
 \cline{3-4}
 \multicolumn{2}{|c|}{ \rule{0pt}{3ex}$A_k(T^1)$} & $A_{k-1}(T^2)$ & $A_{k-1}(T^2)$\\ 
 \hline
\end{tabular}
}_{\text{\scalebox{1.5}{$A_k(T)$}}}$ \end{center}\caption{Decomposition of alignment $A_k(T)$ in terms of $A_k(T^1)$ and $A_{k-1}(T^2)$.}\label{table_alignmentdecomp}
\end{table}

As a consequence of this matrix representation, $A_k(T^1)$ and $A_{k-1}(T^2)$ are mere submatrices of $A_k(T)$. From this we can immediately conclude the following corollary.

\begin{corollary}[adapted from Corollary 2 in \cite{WildeFischer2023}]\label{cor:cherryred} Let $k \in \mathbb{N}_{\geq 2}$. Let $T, \widetilde{T}$ be binary phylogenetic $X$-trees and let $[x,y]$ be a cherry contained in both $T$ and $\widetilde{T}$. Let $T^1$ and $\widetilde{T}^1$ as well as $T^2$ and $\widetilde{T}^2$ be the phylogenetic trees resulting from cherry reductions of types 1 and 2, respectively, using cherry $[x,y]$. Then, we have $A_k(T) = A_k(\widetilde{T})$ if and only if $A_k(T^1) = A_k(\widetilde{T}^1)$ and $A_{k-1}(T^2) = A_{k-1}(\widetilde{T}^2)$.
\end{corollary}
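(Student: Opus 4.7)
My plan is to derive this corollary as an essentially immediate consequence of Lemma \ref{lem:cherryred1}, applied in parallel to both trees $T$ and $\widetilde{T}$. The key observation is that the three-part decomposition $A_k(T) = A_1 \cup A^1_2 \cup A^2_2$ is defined by a condition on $(f(x), f(y))$ alone: whether $f(x)=f(y)$, or $(f(x),f(y))=(a,b)$, or $(f(x),f(y))=(b,a)$. This criterion is intrinsic to a character on $X$ and does not refer to the tree at all. Similarly, the bijections $f\mapsto f^1$ and $f\mapsto (f^2,i)$ from the lemma amount to restricting $f$ to $X\setminus\{x\}$ (respectively $X\setminus\{x,y\}$, augmented by the \enquote{side} index $i$), so they too are canonical in $T$ versus $\widetilde{T}$.

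For the forward implication ($\Rightarrow$), I assume $A_k(T)=A_k(\widetilde{T})$. Because the partition criterion is tree-free, the parts $A_1$, $A^1_2$ and $A^2_2$ of $A_k(T)$ coincide with the corresponding parts of $A_k(\widetilde{T})$. Applying Lemma \ref{lem:cherryred1} to $T$ gives bijections $A_1 \to A_k(T^1)$ and $A^1_2\cup A^2_2 \to A_{k-1}(T^2)\times\{1,2\}$, and applying the lemma to $\widetilde{T}$ gives the same-looking bijections with $\widetilde{T}^1$ and $\widetilde{T}^2$ on the target side. Since the domains are equal and the maps are defined identically (pure restriction), the images must also be equal, yielding $A_k(T^1)=A_k(\widetilde{T}^1)$ and $A_{k-1}(T^2)=A_{k-1}(\widetilde{T}^2)$.

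For the backward implication ($\Leftarrow$), I would argue that the bijections of Lemma \ref{lem:cherryred1} have explicit inverses that depend only on the cherry $[x,y]$: a character $g$ on $X\setminus\{x\}$ lifts to the unique $f$ on $X$ with $f(y)=g(y)$ and $f(x)=f(y)$, while a pair $(h,i)$ with $h$ a character on $X\setminus\{x,y\}$ lifts to the unique $f$ on $X$ extending $h$ with $(f(x),f(y))=(a,b)$ if $i=1$ and $(b,a)$ if $i=2$. Hence from $A_k(T^1)=A_k(\widetilde{T}^1)$ the inverse of the first bijection forces $A_1(T)=A_1(\widetilde{T})$ as subsets of characters on $X$, and from $A_{k-1}(T^2)=A_{k-1}(\widetilde{T}^2)$ the inverse of the second bijection forces $A^1_2(T)\cup A^2_2(T) = A^1_2(\widetilde{T})\cup A^2_2(\widetilde{T})$. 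Taking the union of the two equalities yields $A_k(T)=A_k(\widetilde{T})$.

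I do not expect any real obstacle: the matrix picture in Table \ref{table_alignmentdecomp} already makes the equivalence transparent, since $A_k(T^1)$ and $A_{k-1}(T^2)$ appear as labeled submatrices of $A_k(T)$. The only point requiring care is making explicit that the bijections of Lemma \ref{lem:cherryred1} are canonical, i.e., independent of the tree, so that the \enquote{same} map can be applied simultaneously on the $T$-side and the $\widetilde{T}$-side; this is what allows a set-theoretic equality of codomains to pull back to a set-theoretic equality of domains.
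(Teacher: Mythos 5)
Your proposal is correct and follows essentially the same route as the paper: the paper derives the corollary directly from Lemma~\ref{lem:cherryred1} via the observation that the decomposition of $A_k(T)$ into $A_k(T^1)$ and $A_{k-1}(T^2)$ (the matrix picture of Table~\ref{table_alignmentdecomp}) is canonical, i.e., determined by the values of each character at $x$ and $y$ and independent of the tree. Your explicit verification that the bijections and their inverses are tree-free restrictions/lifts is exactly the point the paper treats as immediate.
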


Note that Corollary \ref{cor:cherryred} was stated in \cite{WildeFischer2023} only for the special case that $T$ and $\widetilde{T}$ are NNI neighbors. However, the proof did not use this condition, which implies that the statement also holds for the more general case stated here.

\subsubsection*{Known results concerning tree characterizations and $A_k$-alignments}

The main purpose of our manuscript is the characterization of pairs of trees that share the same $A_k$-alignment. In this regard, some progress has already been made in the literature, and we will now summarize the current state of research. The first statement in this context is covering the cases $k=1$ and $k=2$. 
\begin{proposition}[adapted from Corollary 1 in \cite{Fischer2023} and Proposition 1 in \cite{Fischer2019}]\label{prop:A1A2} Let $k \in \{1,2\}$. Let $T$ and $\widetilde{T}$ be two binary phylogenetic $X$-trees. Then, $T \cong \widetilde{T}$ if and only if $A_k(T)=A_k(\widetilde{T})$.
\end{proposition}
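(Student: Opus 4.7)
The statement is immediate in the forward direction, since parsimony scores are invariant under phylogenetic isomorphism, so $T \cong \widetilde T$ at once gives $A_k(T) = A_k(\widetilde T)$. The converse splits naturally into the two cases $k = 1$ and $k = 2$, which I would handle separately, using the $k = 1$ case later inside the induction for $k = 2$.

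For $k = 1$, I would exploit the tight correspondence between $A_1$-characters and the $X$-splits induced by the edges of $T$. Each $f \in A_1(T)$ has a unique minimal extension with a single changing edge $e$, and this edge induces exactly the $X$-split $A_f \vert B_f$; conversely, every edge of $T$ yields two $A_1$-characters, one for each labeling of the two blocks. Hence the alignment $A_1(T)$ determines $\Sigma(T)$ and vice versa, so $A_1(T) = A_1(\widetilde T)$ forces $\Sigma(T) = \Sigma(\widetilde T)$. The classical Buneman (splits-equivalence) theorem then yields $T \cong \widetilde T$.

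For $k = 2$, I would proceed by induction on $n = \vert X \vert$, with the small cases (say $n \leq 4$) checked by direct enumeration of the at most three binary phylogenetic $X$-trees and their $A_2$-alignments. The crucial observation for the inductive step is that the set of cherries of $T$ can be read off from $A_2(T)$ alone. For distinct $x, y \in X$, consider the character $f_{x,y}$ with $A_{f_{x,y}} = \{x, y\}$. Since $\vert A_{f_{x,y}} \vert = 2$, Proposition \ref{prop:menger0} gives $l(f_{x,y}, T) \leq 2$; moreover a short case analysis shows $l(f_{x,y}, T) = 1$ if and only if $\{x, y\} \vert X \setminus \{x, y\}$ lies in $\Sigma(T)$, which happens precisely when $[x, y]$ is a cherry. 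Thus $[x, y]$ is a cherry of $T$ iff $f_{x,y} \notin A_2(T)$, a criterion purely in terms of the alignment. In particular $A_2(T) = A_2(\widetilde T)$ forces $T$ and $\widetilde T$ to share the same set of cherries.

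Fix any common cherry $[x, y]$ and apply Corollary \ref{cor:cherryred} with $k = 2$: this delivers $A_2(T^1) = A_2(\widetilde T^1)$ on the smaller taxon set $X \setminus \{x\}$, whence the induction hypothesis gives $T^1 \cong \widetilde T^1$. Attaching the common cherry $[x, y]$ to the unique edge of each reduced tree that arises in the inverse cherry-reduction operation recovers $T$ and $\widetilde T$ as isomorphic phylogenetic $X$-trees. The main obstacle is the cherry-identification criterion for $k = 2$: essentially all the geometric content is concentrated in the Menger-based argument that matches cherries with the \emph{absence} of the two-support characters $f_{x,y}$ from $A_2(T)$. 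Once that combinatorial characterization is in hand, the induction, together with the already proved $k = 1$ case if one prefers to also invoke the $T^2$-reduction, runs itself.
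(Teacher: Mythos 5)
Your argument is correct, but note that the paper itself does not prove Proposition \ref{prop:A1A2}: it imports it from the cited literature (\cite{Fischer2019,Fischer2023}), remarking only that those proofs rest on Buneman's theorem. Your $k=1$ step is exactly that standard argument ($A_1(T)$ and $\Sigma(T)$ determine each other, then Theorem \ref{thm:buneman}). Your $k=2$ step is, in effect, a miniature of the strategy the paper uses later for general $k$ in Theorem \ref{thm:bound2k+3}: read off the common cherries from the alignment, reduce, and induct via Corollary \ref{cor:cherryred}. What you do differently, and what makes it work so cleanly here, is twofold. First, your cherry criterion ($[x,y]$ is a cherry iff the character with support $\{x,y\}$ is absent from $A_2(T)$, since $l(f_{x,y},T)\in\{1,2\}$ and it equals $1$ exactly when $\{x,y\}\,\vert\,X\setminus\{x,y\}\in\Sigma(T)$) is more elementary than the path-based Proposition \ref{prop:cherry} the paper needs for larger $k$, and it works for all $n\geq 4$ rather than only $n\geq 2k+1$. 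Second, by lifting the isomorphism through the type-1 reduction you avoid the attachment-edge ambiguity entirely: $T^1\cong\widetilde{T}^1$ forces $T\cong\widetilde{T}$ because both trees are recovered by attaching $x$ to the pendant edge of $y$, which the label-preserving isomorphism must match. One caveat on your closing aside: relying on the $T^2$-reduction alone would \emph{not} suffice, since $T^2\cong\widetilde{T}^2$ does not determine the edge to which the cherry $[x,y]$ is reattached -- this is precisely the difficulty the paper resolves with Lemma \ref{lem:preparation_for_proof} in the general case -- but as your argument actually rests on the $T^1$-reduction, this does not affect its validity.
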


The proofs of Proposition \ref{prop:A1A2} are both based on the classic theorem by Buneman \cite{Buneman1971} (also known as the Splits Equivalence Theorem \cite{Semple2003}), which we also need in the present manuscript.

\begin{theorem}[Buneman's Theorem, Splits Equivalence Theorem]\label{thm:buneman}
     Let $T$ and $\widetilde{T}$ be two binary phylogenetic $X$-trees. Then, $T \cong \widetilde{T}$ if and only if $\Sigma(T)=\Sigma(\widetilde{T})$. Moreover, for a set of $X$-splits $\Sigma$ there exists a unique phylogenetic tree $T$ with $\Sigma(T)=\Sigma$ if and only if the elements of $\Sigma$ are pairwise compatible.
\end{theorem}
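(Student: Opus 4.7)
The plan is to handle the two assertions of the theorem separately, using standard techniques from phylogenetic combinatorics.

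For the first assertion, the forward direction ($T \cong \widetilde{T} \Rightarrow \Sigma(T) = \Sigma(\widetilde{T})$) is immediate: any isomorphism $\phi: V(T)\to V(\widetilde{T})$ fixing $X$ pointwise sends every edge $e\in E(T)$ to an edge $\phi(e)\in E(\widetilde{T})$, and the two components of $T\setminus e$ are mapped bijectively onto the two components of $\widetilde{T}\setminus \phi(e)$. Since the restriction of $\phi$ to $X$ is the identity, the $X$-splits induced by $e$ and $\phi(e)$ coincide, so $\Sigma(T)\subseteq \Sigma(\widetilde{T})$, and equality follows by symmetry.

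The converse direction I would prove by induction on $n=|X|$. The cases $n\leq 3$ are trivial because the phylogenetic tree structure on $X$ is then uniquely determined. For $n\geq 4$, fix a cherry $[x,y]$ of $T$; it exists by the standard fact quoted in the preliminaries. The split $\{x,y\}\mid X\setminus\{x,y\}$ lies in $\Sigma(T)$ (it is induced by the edge incident to the parent of the cherry) and hence in $\Sigma(\widetilde{T})$. Combining this split with the trivial splits $\{x\}\mid X\setminus\{x\}$ and $\{y\}\mid X\setminus\{y\}$, I would argue that the only way an edge of $\widetilde{T}$ can induce $\{x,y\}\mid X\setminus\{x,y\}$ is if $x$ and $y$ share a common neighbor in $\widetilde{T}$, i.e., $[x,y]$ is also a cherry of $\widetilde{T}$. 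Applying a cherry reduction of type 2 using $[x,y]$ to both trees, one checks directly that $\Sigma(T^2) = \Sigma(\widetilde{T}^2)$ (the split $\{x,y\}\mid X\setminus\{x,y\}$ is absorbed into a trivial split on $T^2$, while the remaining non-trivial splits of $T$ and $\widetilde{T}$ restrict identically to $X\setminus\{x,y\}$). The inductive hypothesis then gives $T^2\cong \widetilde{T}^2$, and re-attaching the cherry $[x,y]$ to the unique edge of $E(T^2)\setminus E(T)$ (which is the same edge in both trees by the split correspondence) produces the required isomorphism $T\cong \widetilde{T}$.

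For the second assertion, one direction is easy: if $\Sigma=\Sigma(T)$ for some phylogenetic $X$-tree $T$, then for any two edges $e_1,e_2\in E(T)$ the edge $e_2$ lies in exactly one of the two components of $T\setminus e_1$, which forces one of the four intersections of the induced split-parts to be empty. For the converse, given a pairwise compatible system $\Sigma$, the strategy is the classical hierarchy construction. Fix an arbitrary $x_0\in X$ and for each $\sigma=A\mid B\in \Sigma$ let $C_\sigma$ be whichever of $A,B$ does not contain $x_0$. Pairwise compatibility translates exactly to the statement that any two sets $C_{\sigma_1},C_{\sigma_2}$ are either disjoint or nested, so that $\mathcal{H}=\{C_\sigma : \sigma\in\Sigma\}\cup\{\{x\}:x\in X\}\cup\{X\}$ is a laminar family on $X$. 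Such a laminar family is in natural bijection with a rooted tree (ordered by reverse inclusion), and unrooting at $x_0$ yields a phylogenetic $X$-tree whose edge-induced split system is precisely $\Sigma$. Uniqueness of this tree is then a consequence of the first assertion.

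The main obstacle will be the construction step in the second assertion, namely checking that the laminar family $\mathcal{H}$ really does give rise to a (binary) phylogenetic $X$-tree whose edge-induced splits are exactly $\Sigma$, without spurious additional splits or missing ones; this requires careful bookkeeping of trivial versus non-trivial splits. In the first assertion, the delicate point is the inductive step, where one needs to pin down that the split $\{x,y\}\mid X\setminus\{x,y\}$ together with the trivial splits $\{x\}\mid X\setminus\{x\}$ and $\{y\}\mid X\setminus\{y\}$ forces $[x,y]$ to be a cherry of $\widetilde{T}$; this is intuitive but must be verified from the definition of an edge-induced split.
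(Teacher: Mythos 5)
This theorem is not proved in the paper at all: it is Buneman's classical Splits Equivalence Theorem, quoted as a known result with references to Buneman (1971) and Semple and Steel (2003), so there is no internal proof to compare yours against. Judged on its own, your proposal reproduces the standard textbook argument and is sound in outline: the forward directions are immediate, the converse of the first assertion follows by your cherry-based induction (and your claim that the split $\{x,y\}\,\vert\, X\setminus\{x,y\}$ forces $[x,y]$ to be a cherry of $\widetilde{T}$ is correct, since the component of $\widetilde{T}$ on the $\{x,y\}$-side has no degree-2 vertices and hence must be the path $x$--$v$--$y$), and the converse of the second assertion is the usual hierarchy (laminar family) construction obtained by rooting at a fixed taxon $x_0$. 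Two points deserve more care than your sketch gives them. First, in the inductive step you assert that the cherry is re-attached to \enquote{the same edge in both trees by the split correspondence}; to make this precise, observe that the attachment edge of $T^2$ induces the unique bipartition $A\vert B$ of $X\setminus\{x,y\}$ for which both $A\cup\{x,y\}\ \vert\ B$ and $A\ \vert\ B\cup\{x,y\}$ lie in $\Sigma(T)$ (any subtree hanging off a vertex strictly between the two edges inducing these splits could only have leaves in $\{x,y\}$, so there is exactly one such vertex and it carries the cherry); since $\Sigma(T)=\Sigma(\widetilde{T})$ and an $X$-fixing isomorphism preserves induced splits, the isomorphism $T^2\cong\widetilde{T}^2$ really does match the two attachment edges. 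Second, the statement as quoted in the paper is slightly imprecise, and your construction silently corrects it: every phylogenetic $X$-tree induces all trivial splits, so a pairwise compatible $\Sigma$ omitting some trivial split (e.g.\ $\Sigma=\emptyset$) admits no tree with $\Sigma(T)=\Sigma$; by adding the singletons to your hierarchy you in fact prove existence and uniqueness of the tree $T$ with $\Sigma(T)=\Sigma\cup\{\text{all trivial } X\text{-splits}\}$, which is the standard formulation of the theorem and the only form in which the paper applies it (it only ever invokes it for split systems of trees, which automatically contain the trivial splits).
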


The following result is useful when counting characters in $A_k$-alignments. 

\begin{theorem} \label{thm:lengthAk} \citep{Steel1993,Steel2016}
Let $T$ be a binary phylogenetic $X$-tree with $\vert X\vert =n$. Then, we have:
$$\vert A_k(T)\vert = \frac{2n-3k}{k}\binom{n-k-1}{k-1} \cdot 2^k.$$
\end{theorem}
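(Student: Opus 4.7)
The plan is to prove the formula by induction on $n$, using the cherry-reduction decomposition from Lemma \ref{lem:cherryred1}. Since every binary phylogenetic $X$-tree on $n\geq 3$ leaves contains a cherry $[x,y]$, Lemma \ref{lem:cherryred1} (or the matrix representation summarised in Table \ref{table_alignmentdecomp}) yields the recursion
\[ |A_k(T)| \;=\; |A_k(T^1)| \;+\; 2\,|A_{k-1}(T^2)|, \]
where $T^1$ and $T^2$ are binary phylogenetic trees on $n-1$ and $n-2$ leaves, respectively. I would carry out a double induction, on $k$ with $n$ as the inner variable, so that both $|A_k(T^1)|$ and $|A_{k-1}(T^2)|$ are covered by the induction hypothesis. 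A pleasant feature of the recursion is that its right-hand side depends only on $n$ and $k$ (and not on the choice of cherry or on the tree itself), so the closed form is automatically tree-independent.

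For the base case $k=1$ I would argue directly: any binary phylogenetic $X$-tree on $n\geq 2$ leaves has exactly $2n-3$ edges, each of which induces a unique $X$-split, and each $X$-split arises from exactly two binary characters of parsimony score $1$. Therefore $|A_1(T)|=2(2n-3)$, which matches the closed form $\tfrac{2n-3}{1}\binom{n-2}{0}\cdot 2 = 2(2n-3)$. For the boundary range where $2n<3k$ or $n-k-1 < k-1$, the closed form vanishes, consistent with Proposition \ref{prop:menger0}: there is no room for $k$ edge-disjoint leaf-to-leaf paths in a binary tree on too few leaves.

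The inductive step then reduces, after dividing the recursion by $2^k$ and substituting the inductive hypothesis, to the algebraic identity
\[ \frac{2n-3k}{k}\binom{n-k-1}{k-1} \;=\; \frac{2n-2-3k}{k}\binom{n-k-2}{k-1} \;+\; \frac{2n-1-3k}{k-1}\binom{n-k-2}{k-2}. \]
Applying Pascal's rule $\binom{n-k-1}{k-1} = \binom{n-k-2}{k-1} + \binom{n-k-2}{k-2}$ to the left-hand side reduces the claim to a linear combination of the same two binomial coefficients; collecting terms and using the ratio $(k-1)\binom{n-k-2}{k-1} = (n-2k)\binom{n-k-2}{k-2}$ closes the identity. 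This step is routine but fiddly.

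The main obstacle, in my view, is not the combinatorial content but the bookkeeping of base cases and degenerate parameter ranges. One has to ensure that after the cherry reductions the trees $T^1$ and $T^2$ are still binary phylogenetic trees with enough leaves for the induction hypothesis to apply, and that the closed-form expression evaluates to the correct (possibly zero) value on the boundary of its range, so that the recursion stays consistent at every step of the induction. Once this is done carefully, the theorem follows.
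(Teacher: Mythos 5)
Your proof is correct, but note that the paper does not prove this statement at all: it is quoted from the literature (Steel 1993; Steel 2016), where it is obtained by a direct enumeration/generating-function argument rather than by recursion. Your route is therefore genuinely different and has the appeal of being self-contained within the paper's own toolkit: the recursion $|A_k(T)|=|A_k(T^1)|+2|A_{k-1}(T^2)|$ is exactly what Lemma \ref{lem:cherryred1} (equivalently Table \ref{table_alignmentdecomp}) gives, its right-hand side depends only on $(n,k)$, the base case $|A_1(T)|=2(2n-3)$ is correct, and the binomial identity you reduce to does hold (I checked it via Pascal's rule and $(k-1)\binom{n-k-2}{k-1}=(n-2k)\binom{n-k-2}{k-2}$, and numerically, e.g.\ $f(8,3)=f(7,3)+2f(6,2)=40+72=112$). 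The only point to tighten is the degenerate range you yourself flag: the claim that the closed form ``vanishes'' whenever $n<2k$ is true only under the convention $\binom{a}{b}=0$ for $a<b$ \emph{including negative} $a$; with the generalized binomial, e.g.\ $\binom{-1}{k-1}=(-1)^{k-1}\neq 0$, the formula would be wrong for $n\leq k$. The cleanest fix is to treat all $n<2k$ as a directly verified base regime (where $A_k(T)=\emptyset$ by Proposition \ref{prop:menger0}, and the formula is $0$ under the stated convention) and to run the cherry-reduction recursion only for $n\geq 2k$; then the only sub-$2k$ value the recursion ever touches is $n-1=2k-1$, where the binomial $\binom{k-2}{k-1}$ has nonnegative top and vanishes harmlessly, and the boundary value $f(2k,k)=2^k$ agrees with $f(2k-1,k)+2f(2k-2,k-1)=0+2\cdot 2^{k-1}$ (and with Remark \ref{rem2k}). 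With that bookkeeping made explicit, your induction goes through.
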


The following recent result from the literature characterizes the pairs with identical $A_k$-alignments in the case where $n\geq 4k$. 

\begin{theorem}[Theorem 2 in \cite{WildeFischer2023}]\label{thm:bound4k} Let $k \in \mathbb{N}_{\geq 1}$. Let $T$ and $\widetilde{T}$ be two binary phylogenetic $X$-trees with $\lvert X \rvert = n \geq 4k$. Then, $T \cong \widetilde{T}$ if and only if $A_k(T)=A_k(\widetilde{T})$.
\end{theorem}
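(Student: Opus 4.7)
The forward direction is trivial since isomorphic trees induce the same characters at every parsimony score, so my plan targets only the converse. I would proceed by induction, handling $k\in\{1,2\}$ upfront via Proposition~\ref{prop:A1A2} (which has no restriction on $n$) and then assuming $k\geq 3$. With $k$ fixed I induct on $n\geq 4k$, allowing myself the case $k-1$ on any taxon set of size at least $4(k-1)$ as a second inductive hypothesis. The small-$n$ base would be verified by a dedicated combinatorial argument that exploits the explicit length formula of Theorem~\ref{thm:lengthAk}.

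For the inductive step, assume $A_k(T)=A_k(\widetilde{T})$ with $n\geq 4k$. The strategy is to identify a cherry $[x,y]$ common to $T$ and $\widetilde{T}$ and then apply Corollary~\ref{cor:cherryred}: the corollary yields $A_k(T^1)=A_k(\widetilde{T}^1)$ on a taxon set of size $n-1\geq 4k-1$ and $A_{k-1}(T^2)=A_{k-1}(\widetilde{T}^2)$ on a taxon set of size $n-2\geq 4(k-1)+2$. The inductive hypothesis (on $n$ for the type-1 reduction, on $k$ for the type-2 reduction) then produces isomorphisms $T^1\cong\widetilde{T}^1$ and $T^2\cong\widetilde{T}^2$ that fix the respective taxa. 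Because $T$ is uniquely recovered from $T^1$ by reattaching the leaf $x$ next to $y$, and similarly for $\widetilde{T}$, these two isomorphisms combine into an isomorphism $T\cong\widetilde{T}$.

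The delicate part is the detection of a common cherry, which I would approach by contradiction. Pick any cherry $[x,y]$ of $T$ (one exists because $n\geq 3$) and suppose $[x,y]$ is not a cherry of $\widetilde{T}$. The natural invariant to compare is the number of $f\in A_k(T)=A_k(\widetilde{T})$ with $f(x)\neq f(y)$: Lemma~\ref{lem:cherryred1} pins this count to $2\,\vert A_{k-1}(T^2)\vert$ on the cherry side, an explicit polynomial via Theorem~\ref{thm:lengthAk}. On the non-cherry side, Proposition~\ref{prop:menger0} together with Lemma~\ref{lem:leafchoice} let me express the analogous count through edge-disjoint leaf-to-leaf path systems in $\widetilde{T}$, exactly one of whose paths has $x$ as an endpoint; the additional routing freedom available when $x$ and $y$ are separated by one or more internal edges in $\widetilde{T}$ should produce strictly more such characters than the cherry count permits.

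The main obstacle is that final counting comparison, which requires controlling the interaction between the distinguished $x$-path and the remaining $k-1$ edge-disjoint paths in $\widetilde{T}$, and verifying that the surplus is always strict. The hypothesis $n\geq 4k$ enters precisely here: it guarantees $\lfloor n/2\rfloor\geq 2k$ edge-disjoint leaf-to-leaf paths in either tree and therefore the manoeuvring room to realise enough extra characters with $f(x)\neq f(y)$ of parsimony score $k$. I would almost certainly need a strengthening of Lemma~\ref{lem:leafchoice} in which both endpoints of one path can be prescribed rather than just one—precisely the variant the authors flag as forthcoming in the excerpt.
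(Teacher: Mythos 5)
Two observations frame this review: the statement you were asked to prove is quoted from \cite{WildeFischer2023} and is not proved in the present manuscript, so the closest internal benchmark is the paper's sharper Theorem~\ref{thm:bound2k+3}; and your skeleton (common cherry, cherry reduction, induction) does match that benchmark in outline. However, the two steps that carry essentially all the difficulty are left open. First, the common-cherry detection. You reduce it to a counting comparison of the characters with $f(x)\neq f(y)$ and concede that the required strict surplus on the non-cherry side is unverified; that inequality \emph{is} the step, and nothing in your sketch (Lemma~\ref{lem:leafchoice} plus ``routing freedom'') establishes it. The paper's Proposition~\ref{prop:cherry} avoids counting altogether: using the two-endpoint strengthening (Lemma~\ref{lem:leafchoice2}, valid because $x,y$ do not form a cherry in $\widetilde{T}$), it produces $k$ edge-disjoint leaf-to-leaf paths with $x$ and $y$ endpoints of two \emph{different} paths and builds a single character $f$ with $f(x)=f(y)=a$ and $\lvert A_f\rvert=k$; this $f$ lies in $A_k(\widetilde{T})$ by Proposition~\ref{prop:menger0} but cannot lie in $A_k(T)$, since Lemma~\ref{lem:cherryred1} would force $l(f^1,T^1)=k$ while $\lvert A_{f^1}\rvert=k-1$. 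Note that this works already for $n\geq 2k+1$, and that the character class it exploits ($f(x)=f(y)$) is not the one your invariant counts.

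Second, the base of your induction on $n$. The type-1 reduction lowers the taxon set from $n$ to $n-1$, so the step is only available while $n-1\geq 4k$; the whole induction therefore bottoms out at $n=4k$, which you defer to ``a dedicated combinatorial argument'' via Theorem~\ref{thm:lengthAk} with no details --- but that case is where the substance of the theorem lies. The natural repair, a type-2 reduction with induction on $k$ (legitimate, since $n-2\geq 4(k-1)$), yields only $T^2\cong\widetilde{T}^2$, and this does \emph{not} recombine into $T\cong\widetilde{T}$: the cherry $[x,y]$ may be reattached to different edges of the common reduced tree. Resolving exactly this ambiguity is the content of Lemma~\ref{lem:preparation_for_proof} in the proof of Theorem~\ref{thm:bound2k+3}: one performs a second type-2 reduction on a disjoint common cherry, concludes that the two attachment edges are equal or both incident to the vertex $z$, and settles the latter (NNI-neighbor) case with Corollary~\ref{cor:bound2k+3_nni} (equivalently Proposition~\ref{prop:nni}). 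Your proposal contains no counterpart to this step, so as written the argument does not close; the remaining correct ingredients (forward direction, $k\leq 2$ via Proposition~\ref{prop:A1A2}, unique recoverability of $T$ from $T^1$, Corollary~\ref{cor:cherryred}) are the easy parts.
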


Moreover, in case that $T$ and $\widetilde{T}$ are NNI neighbors,  the following result concerning their $A_k$-alignments is known.

\begin{proposition}[Proposition 4 in \cite{WildeFischer2023}]\label{prop:nni}
Let $k \in \mathbb{N}_{\geq 2}$. Let $T$ be a binary phylogenetic $X$-tree with $n=\vert X\vert \geq 4$ and $A\vert B \in \Sigma^\ast(T)$ inducing the maximal pending subtrees $T_{A_1}$ and $ T_{A_2}$, whose leaves are subsets of $A$, as well as the maximal pending subtrees $T_{B_1}$ and $ T_{B_2}$, whose leaves are subsets of $B$, cf. Figure \ref{fig:treedecomp}. Let $n_1 = \vert A_1\vert$, $n_2 = \vert A_2\vert$, $n_3 = \vert B_1\vert$ and $n_4 = \vert B_2\vert$. Moreover, let $\widetilde{T}$ be the tree obtained from $T$ by exchanging $T_{A_2}$ with $T_{B_2}$ (i.e., $T$ and $\widetilde{T}$ are NNI neighbors). Let $s(T,\widetilde{T}) = \sum\limits_{i=1}^4 \left\lfloor \frac{n_i-1}{2}\right\rfloor$. Then, we have: 
\begin{align*} A_k(T) = A_k(\widetilde{T}) & \ \ \Longleftrightarrow \ \  
s(T,\widetilde{T}) < k-2.
\end{align*}
\end{proposition}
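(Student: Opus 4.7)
The plan is to compare the parsimony scores $l(f,T)$ and $l(f,\widetilde{T})$ character-by-character, using Menger's theorem (Proposition~\ref{prop:menger0}) to localize the comparison to the central region of the two trees. For each of the four pending subtrees $T_i\in\{T_{A_1},T_{A_2},T_{B_1},T_{B_2}\}$, I let $p_i=l(f|_{T_i},T_i)$ denote the number of edge-disjoint $a$-$b$-paths packed inside $T_i$, and I let $u_i\subseteq\{a,b\}$ encode which color states appear among the leaves of $T_i$ that are left unmatched by such a maximum packing. Proposition~\ref{prop:menger0} then yields decompositions $l(f,T)=\sum_i p_i + c(f,T)$ and $l(f,\widetilde{T})=\sum_i p_i + c(f,\widetilde{T})$, where $c(f,T)$ and $c(f,\widetilde{T})$ count the additional edge-disjoint $a$-$b$-paths obtainable by routing unmatched leaves through the central region under the $T$-pairing $\{T_{A_1},T_{A_2}\},\{T_{B_1},T_{B_2}\}$ versus the $\widetilde{T}$-pairing $\{T_{A_1},T_{B_2}\},\{T_{A_2},T_{B_1}\}$.

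The first step would be a finite case analysis over the profile $(u_1,u_2,u_3,u_4)$ to compute $c(f,T)$ and $c(f,\widetilde{T})$ explicitly. My expectation---and what the statement of the proposition predicts---is that the two quantities agree on every profile except on a small family of \emph{obstructing} profiles in which $c(f,T)\neq c(f,\widetilde{T})$, and that in every such obstructing profile each of the four subtrees contributes at least one unmatched leaf while $\max(c(f,T),c(f,\widetilde{T}))=2$. The obstructing profiles are, roughly, those in which the two $T$-paired subtrees have matching unmatched colors while the two $\widetilde{T}$-paired ones do not, or vice versa.

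For the direction $s(T,\widetilde{T})<k-2 \Rightarrow A_k(T)=A_k(\widetilde{T})$, I would argue by contraposition: any character $f$ with $l(f,T)=k$ but $l(f,\widetilde{T})\neq k$ must realize an obstructing profile, whence each $p_i\leq\lfloor(n_i-1)/2\rfloor$ (an unmatched leaf is forced in each $T_i$), and so $\sum_i p_i\leq s(T,\widetilde{T})$. Combining this with $c(f,T)\leq 2$ gives $k=\sum_i p_i+c(f,T)\leq s(T,\widetilde{T})+2$, contradicting $s(T,\widetilde{T})<k-2$. For the converse direction, I would construct an explicit witness: Lemma~\ref{lem:leafchoice} packs $\lfloor(n_i-1)/2\rfloor$ edge-disjoint internal leaf-to-leaf paths inside each $T_i$ while prescribing which leaf is left unmatched, and colouring the endpoints of these paths by opposite states and choosing the unmatched-leaf colors so as to realize an obstructing profile with $c(f,T)=2$ produces a character with $\sum_i p_i=s(T,\widetilde{T})$ and a score discrepancy between $T$ and $\widetilde{T}$. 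When $s(T,\widetilde{T})>k-2$, I would additionally discard (and re-colour) internal paths in one suitably chosen subtree to drive $\sum_i p_i$ down to exactly $k-2$, obtaining a character $f\in A_k(T)\setminus A_k(\widetilde{T})$.

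The main obstacle I anticipate is the book-keeping in the case analysis of $(u_1,u_2,u_3,u_4)$: one must check every pattern and verify both that non-obstructing profiles produce equal cross-central path counts under the $T$- and $\widetilde{T}$-pairings, and that the obstructing profiles produce $\max(c(f,T),c(f,\widetilde{T}))=2$. This specific value (as opposed to $1$ or $3$) is precisely what yields the threshold $k-2$ appearing in the statement. A secondary subtlety is ensuring that the discarding-and-recolouring step in the witness construction preserves the obstructing profile, which is where the ability to prescribe an endpoint afforded by Lemma~\ref{lem:leafchoice} becomes essential.
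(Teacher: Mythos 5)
First, a point of reference: this manuscript does not prove Proposition~\ref{prop:nni} at all -- it is imported verbatim as Proposition~4 of \cite{WildeFischer2023} -- so your argument can only be measured against the cited source and on its own merits, not against an in-paper proof.

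Your overall plan (localize the comparison of $l(f,T)$ and $l(f,\widetilde{T})$ to the central edge via Menger, bound the number of \enquote{cross} paths by $2$, and observe that a discrepancy forces an unmatched leaf in each pending subtree, whence $\sum_i p_i\leq s(T,\widetilde{T})$ and $k\leq s(T,\widetilde{T})+2$) is sound, and the arithmetic does reproduce the threshold $k-2$; the witness construction for $s\geq k-2$ is also workable. The genuine gap is the very first step: the identity $l(f,T)=\sum_i p_i+c(f,T)$ does \emph{not} \enquote{follow from Proposition~\ref{prop:menger0}}. Two separate facts are being used silently. (i) A no-trade-off claim: an optimal family of edge-disjoint $A_f$-$B_f$-paths might a priori use \emph{fewer} than $p_i$ internal paths in some subtree in order to free up an exportable leaf of the other colour; you must argue (e.g.\ by noting that each subtree's connector edge carries at most one cross path, so every internal path sacrificed buys at most one forbidden-colour export, and then a short case analysis over $c\in\{0,1,2\}$) that this never yields a strictly larger total. (ii) A routing lemma: even when an unmatched leaf of the right colour exists (its colour multiset is packing-independent, since each internal path uses one $a$ and one $b$), the path from that leaf to the subtree's root may share edges with the chosen maximum internal packing -- a caterpillar subtree with leaf colours $a,a,b,a$ along the spine already shows that the packing must be chosen appropriately. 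What you need is a \emph{rooted} strengthening of Lemma~\ref{lem:leafchoice} (a maximum internal packing together with an edge-disjoint path from an unmatched leaf of prescribed colour to the root); Lemma~\ref{lem:leafchoice} as stated does not provide this. Both ingredients are true and provable by induction, so your route can be completed, but as written the decomposition is asserted rather than proved, and it is precisely the load-bearing step; the same two ingredients are also needed to justify the upper bound $l(f,\widetilde{T})\leq k-1$ for your recoloured witness (where, additionally, the leaves not used by the construction must be coloured with the designated leftover colour of their subtree, lest they create zero-cost exports that restore $c(f,\widetilde{T})=2$).
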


Last but not least, Proposition \ref{prop:nni} leads to the following corollary which is a special case of a more general result which we will prove in the present manuscript (namely Theorem \ref{thm:bound2k+3}).

\begin{corollary}[Theorem 3 in \cite{WildeFischer2023}] \label{cor:bound2k+3_nni}
Let $k\in \mathbb{N}_{\geq 1}$ and $n\geq 2k+3$. Let $T, \widetilde{T}$ be a pair of binary phylogenetic $X$-trees with $\lvert X\rvert = n$ such that $\widetilde{T}$ is in the NNI neighborhood of $T$. Then $A_k(T) = A_k(\widetilde{T})$ if and only if $T\cong \widetilde{T}$.    
\end{corollary}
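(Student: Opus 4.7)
The backward direction is trivial: isomorphic phylogenetic $X$-trees induce the same $X$-splits by Theorem~\ref{thm:buneman}, so they agree on the parsimony score of every binary character and in particular share every $A_k$-alignment. For the forward direction I plan to split on $k$. When $k=1$, the claim is immediate from Proposition~\ref{prop:A1A2}. So assume $k\geq 2$ and argue by contrapositive: let $\widetilde{T}$ be a proper NNI neighbor of $T$ (proper NNI neighbors are always non-isomorphic, since the NNI move alters the $X$-split induced by the edge it acts on, and Buneman's theorem then forces $T\not\cong\widetilde{T}$), and show $A_k(T)\neq A_k(\widetilde{T})$. By Proposition~\ref{prop:nni}, this reduces to the purely numerical statement that
\[
s(T,\widetilde{T})\;=\;\sum_{i=1}^{4}\left\lfloor\frac{n_i-1}{2}\right\rfloor\;\geq\;k-2
\]
whenever $n=n_1+n_2+n_3+n_4\geq 2k+3$.

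I would prove this inequality by a parity analysis. Writing $\lfloor(n_i-1)/2\rfloor$ as $(n_i-1)/2$ when $n_i$ is odd and $(n_i-2)/2$ when $n_i$ is even, and letting $E\in\{0,1,2,3,4\}$ denote the number of indices with $n_i$ even, one obtains the closed form
\[
s(T,\widetilde{T})\;=\;\frac{n-4-E}{2},
\]
so the target inequality becomes $n\geq 2k+E$. The key observation is that exactly $4-E$ of the $n_i$ are odd, whence $n\equiv E\pmod{2}$. With this in hand I would dispatch the five values of $E$ individually: for $E\in\{0,1,3\}$ the hypothesis $n\geq 2k+3$ already yields $n\geq 2k+E$, while for $E\in\{2,4\}$ the parity identity forces $n$ to be even, so $n\geq 2k+3$ upgrades to $n\geq 2k+4\geq 2k+E$. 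This exhausts the cases and completes the proof.

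The delicate point---and the one I would flag as the main obstacle---is the parity bookkeeping above. The naive estimate $\lfloor(n_i-1)/2\rfloor\geq(n_i-2)/2$ alone only yields $s(T,\widetilde{T})\geq(n-8)/2$, which would force the weaker hypothesis $n\geq 2k+4$ and miss the target by one unit. The parity identity $n\equiv E\pmod 2$ is exactly what recovers this unit when $n$ is odd, and it also explains why the bound $2k+3$ is sharp: taking $n=2k+2$ with $k\geq 3$ and choosing all $n_i$ even, for instance $(n_1,n_2,n_3,n_4)=(2,2,2,2k-4)$, gives $E=4$ and $s(T,\widetilde{T})=k-3<k-2$, so by Proposition~\ref{prop:nni} the resulting NNI neighbors share the same $A_k$-alignment, witnessing optimality of the bound.
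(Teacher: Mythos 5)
Your argument is correct and follows essentially the route the paper intends: this corollary is imported from \cite{WildeFischer2023} and is presented as a direct consequence of Proposition~\ref{prop:nni}, obtained by exactly the kind of parity bookkeeping you carry out (your identity $s(T,\widetilde{T})=(n-4-E)/2$ with $n\equiv E\pmod 2$ is the same computation the paper itself performs later in Corollary~\ref{cor:cases}). Your handling of $k=1$ via Proposition~\ref{prop:A1A2}, the non-isomorphy of proper NNI neighbors via Theorem~\ref{thm:buneman}, and the sharpness example at $n=2k+2$ are all sound, so nothing is missing.
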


\begin{figure} 
\center
\includegraphics[width=0.9\textwidth]{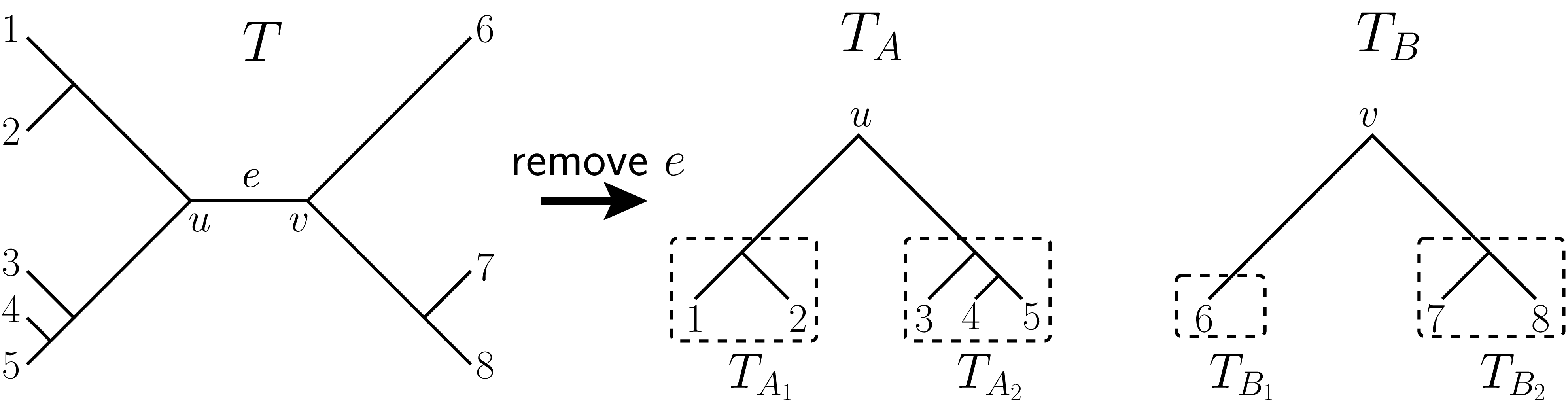}
\caption{  \scriptsize (taken from \cite{Fischer2019, Fischer2023}) By removing an edge $e=\{u,v\}$ from an unrooted phylogenetic tree $T$, it is decomposed into two subtrees, $T_A$ and $T_B$. If, as in this figure, both of them consist of more than one node, then we can further decompose them into their two maximal pending subtrees $T_{A_1}$ and $T_{A_2}$ or $T_{B_1}$ and $T_{B_2}$, by removing $u$ or $v$, respectively. 
}
\label{fig:treedecomp}
\end{figure}

As can be seen from all listed results, the characterization of pairs of binary phylogenetic trees with identical $A_k$-alignments is to-date incomplete. In particular, nothing is known for combinations of $k>2$ and $n<4k$ in case the trees under consideration are not NNI neighbors. The main goal of our manuscript is to fill this gap and give a complete characterization.

\section{Results}\label{sec:results}

The main purpose of our manuscript is twofold: First, we want to close the literature's gap concerning values of $n$ contained in the set $\{2k+3,\ldots,4k-1\}$. Note that by Theorem \ref{thm:bound4k} we know that for all values of $n$ larger than the ones contained in said set, we can guarantee that two trees are isomorphic if and only if their $A_k$-alignments coincide. Moreover, in \cite{WildeFischer2023} a construction for examples was presented with $n=2k+1$ and $n=2k+2$ in which two non-isomorphic trees have identical $A_k$-alignments. So this shows that improving the bound of $4k$ from Theorem \ref{thm:bound4k}  to $2k+3$ is best possible.

The second main aim of our manuscript is then to characterize pairs of (non-isomorphic) trees which share the same $A_k$-alignment and whose leaf number is smaller than $2k+3$. We will do so in Section \ref{sec:n_smaller}.

However, in Section \ref{sec:improveThm3} we start with the improvement of Theorem \ref{thm:bound4k} to $2k+3$ as explained above.

\subsection{Improvement of the bound given by Theorem \ref{thm:bound4k}: the case \texorpdfstring{$n\geq 2k+3$}{n>=2k+3}}\label{sec:improveThm3}

We now state our first main result, which improves the bound given by Theorem \ref{thm:bound4k} from $4k$ to $2k+3$. 

\begin{theorem} \label{thm:bound2k+3}Let $k \in \mathbb{N}_{\geq 1}$. Let $T$ and $\widetilde{T}$ be two binary phylogenetic $X$-trees with $\lvert X \rvert = n \geq 2k+3$. Then, $T \cong \widetilde{T}$ if and only if $A_k(T)=A_k(\widetilde{T})$.
\end{theorem}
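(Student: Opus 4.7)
The forward direction is immediate. For the reverse direction, I would induct on $k$, with the base cases $k\in\{1,2\}$ supplied by Proposition \ref{prop:A1A2} and the case $n\geq 4k$ absorbed into Theorem \ref{thm:bound4k}. Thus only $k\geq 3$ in the gap $2k+3\leq n<4k$ remains; for fixed such $k$, I would run a secondary induction on $n$ starting from the base $n=2k+3$.

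The crux of the argument, and the step I expect to be the main obstacle, is the claim that whenever $A_k(T)=A_k(\widetilde T)$ with $n\geq 2k+3$, the trees $T$ and $\widetilde T$ share a common cherry. To prove it, I would suppose toward contradiction that $[x,y]$ is a cherry of $T$ but not of $\widetilde T$, so that the $x$-$y$-path in $\widetilde T$ has length at least $3$. Using the promised refinement of Lemma \ref{lem:leafchoice} (namely Lemma \ref{lem:leafchoice2}), which allows prescribing more than a single endpoint in the guaranteed family of edge-disjoint leaf-to-leaf paths, I would construct a binary character $f$ with $f(x)\neq f(y)$ whose values on the remaining taxa are chosen so that Proposition \ref{prop:menger0} forces $l(f,T)=k$ but $l(f,\widetilde T)\neq k$, exhibiting an element of $A_k(T)\triangle A_k(\widetilde T)$ and contradicting the hypothesis. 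The asymmetry exploited is that in $T$ the changes on the $x$-$v_x$-$y$ segment are confined to two edges, whereas in $\widetilde T$ the longer $x$-$y$-separation creates extra slack that can be leveraged (by an appropriate assignment of the other taxa) to shift the Menger-optimum between the two trees.

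Granted a common cherry $[x,y]$, Corollary \ref{cor:cherryred} yields $A_k(T^1)=A_k(\widetilde T^1)$ on $n-1$ taxa and $A_{k-1}(T^2)=A_{k-1}(\widetilde T^2)$ on $n-2$ taxa. Since $n-2\geq 2(k-1)+3$, the outer induction hypothesis on $k$ gives $T^2\cong\widetilde T^2$; if moreover $n\geq 2k+4$ then $n-1\geq 2k+3$ and the inner induction hypothesis on $n$ gives $T^1\cong\widetilde T^1$, so re-attaching the cherry finishes the case. The residual base case $n=2k+3$ is delicate because $n-1=2k+2$ drops below the bound. Here $T^2\cong\widetilde T^2\cong S$ forces $T$ and $\widetilde T$ to arise from the same $S$ by attaching the cherry $[x,y]$ at possibly different edges. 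I would rule out any genuine discrepancy either by invoking Corollary \ref{cor:bound2k+3_nni} when the two attachments are NNI-related, or by iterating the common-cherry lemma on a second disjoint common cherry (guaranteed for $n\geq 6$ by \cite[Proposition 1.2.5]{Semple2003}), reducing with respect to it, and thereby descending to a smaller alignment that the outer induction hypothesis on $k$ resolves. Both the cherry-existence lemma and this boundary argument are where I expect the bulk of the technical work to lie.
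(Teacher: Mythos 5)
Your skeleton (induction on $k$, a shared cherry, Corollary \ref{cor:cherryred}) is close to the paper's, and your observation that for $n\geq 2k+4$ a type-1 reduction plus an inner induction on $n$ finishes the step is sound: if $[x,y]$ is a cherry of both trees and $T^1\cong\widetilde{T}^1$, then re-attaching $x$ to the pendant edge of $y$ recovers $T\cong\widetilde{T}$. However, your sketch of the crucial shared-cherry claim does not work as stated. You propose a character with $f(x)\neq f(y)$; but if $[x,y]$ is a cherry of $T$ and $f$ assigns state $a$ to exactly $k$ taxa with $x\in A_f$, $y\in B_f$, then $l(f,T)\leq k$ with no obstruction to equality, so no element of $A_k(T)\,\triangle\,A_k(\widetilde{T})$ is exhibited; the heuristic about ``extra slack'' in $\widetilde{T}$ is not a proof. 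The construction that actually works (the paper's Proposition \ref{prop:cherry}, which holds already for $n\geq 2k+1$ and which you could simply cite) takes $f(x)=f(y)=a$, with $x$ and $y$ endpoints of two \emph{distinct} paths supplied by Lemma \ref{lem:leafchoice2} in $\widetilde{T}$ (possible precisely because they are not a cherry there): Proposition \ref{prop:menger0} gives $f\in A_k(\widetilde{T})$, while in $T$ a type-1 cherry reduction leaves only $k-1$ taxa in state $a$, so by Lemma \ref{lem:cherryred1} $f\notin A_k(T)$.

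The more serious gap is the base case $n=2k+3$. From $T^2\cong\widetilde{T}^2\cong S$ you know $T$ and $\widetilde{T}$ arise by attaching $[x,y]$ to edges $e_1,e_2$ of $S$, and you split into ``$e_1,e_2$ NNI-related'' (handled by Corollary \ref{cor:bound2k+3_nni}) versus ``otherwise, reduce at a second disjoint common cherry $[v,w]$ and let the outer induction resolve it.'' But reducing at $[v,w]$ only yields $S^2\cong\widetilde{S}^2$ via the induction on $k$; by itself this says nothing about $e_1$ and $e_2$, so $T\cong\widetilde{T}$ does not follow and your dichotomy is unjustified. What is missing is exactly the paper's Lemma \ref{lem:preparation_for_proof}: if attaching $[x,y]$ at $e_1$ and at $e_2$ to the same tree produces trees whose type-2 reductions at the common cherry $[v,w]$ are isomorphic, then $e_1=e_2$ or both lie in $E_z$ (the edges at the non-leaf neighbor of the cherry vertex), in which case $T$ and $\widetilde{T}$ are NNI neighbors. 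Establishing this requires a genuine split-compatibility/Buneman argument and is the technical heart of the paper's proof (which, incidentally, runs this two-cherry, two type-2-reduction argument uniformly for all $n\geq 2k+3$ and thereby avoids your inner induction on $n$ altogether). Without that lemma, or an equivalent argument ruling out distinct non-adjacent attachment edges, your base case does not close.
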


It is the main aim of this section to prove Theorem \ref{thm:bound2k+3}. Before we can do so, we first remark that the bound given by Theorem \ref{thm:bound2k+3} is sharp as examples for $n\leq 2k+2$ are known in which non-isomorphic phylogenetic $X$-trees share the same $A_k$-alignment \cite{WildeFischer2023}. However, Theorem \ref{thm:bound2k+3} does not only generalize Theorem \ref{thm:bound4k} by improving the given bound, it also generalizes Corollary \ref{cor:bound2k+3_nni}, which only holds for NNI neighbors, to general pairs of trees.

We now turn our attention to some preliminary work which is necessary in order to prove Theorem \ref{thm:bound2k+3} subsequently. 

\subsubsection{Preparation for the proof of Theorem \ref{thm:bound2k+3}}

Recall that Lemma \ref{lem:leafchoice} states that we can fix a leaf and claim the existence of a collection of leaf-to-leaf paths of which one contains the fixed leaf as an endpoint. Unfortunately, while this lemma was the main ingredient in the proof of Theorem \ref{thm:bound4k} in \cite{WildeFischer2023}, the existence of one such endpoint turns out to be insufficient to prove Theorem \ref{thm:bound2k+3}. However, the following lemma will fix this problem as it allows us to fix two endpoints.

\begin{lemma}\label{lem:leafchoice2}
Let $T$ be a binary phylogenetic tree on taxon set $X$ with $\vert X \vert = n\geq 3$. Let $x,y\in X$ such that $x\neq y$ and $x,y$ do not form a cherry in $T$. Let $q= \max\{2, \left\lfloor\frac{n-1}{2}\right\rfloor\}$. Then, $T$ contains $q$ edge-disjoint leaf-to-leaf paths $P_1,\ldots,P_{q}$ such that $x$ is an endpoint of $P_1$ and $y$ is an endpoint of $P_2$.
\end{lemma}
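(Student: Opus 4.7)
The plan is to proceed by strong induction on $n$. The base cases are $n \in \{3,4,5,6\}$: the case $n = 3$ is vacuous because every two leaves of a $3$-leaf binary phylogenetic tree form a cherry, and for $n \in \{4,5,6\}$ the small number of structural configurations of $(T,x,y)$ permits a direct verification by exhibiting the required $q = 2$ paths in each case.

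For the inductive step ($n \geq 7$), the strategy is to locate a cherry $[a,b]$ of $T$ with $\{a,b\} \cap \{x,y\} = \emptyset$ whose type-2 cherry reduction yields a tree $T^2$ in which $x,y$ still do not form a cherry; call such a cherry \emph{good}. Given a good cherry, I would apply the induction hypothesis to $T^2$ (which has $n-2 \geq 5$ leaves and contains the leaves $x, y$ not forming a cherry) to obtain $q' = q-1$ edge-disjoint leaf-to-leaf paths $P'_1, \ldots, P'_{q'}$ with $x$ an endpoint of $P'_1$ and $y$ an endpoint of $P'_2$. These paths lift naturally to $T$: any use of the new edge of $T^2$ created by the vertex-suppression step is replaced by the two corresponding edges of $T$ meeting at the suppressed vertex. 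Appending the path $P_q = a, v, b$ through the parent $v$ of the reduced cherry — which uses only edges incident to $v$ in $T$ and is therefore edge-disjoint from the lifted paths — yields the desired $q$ paths in total.

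The main obstacle is to prove that a good cherry always exists for $n \geq 7$, or else to handle the exceptional configurations directly. A local analysis of type-2 cherry reduction shows that reducing a cherry $[a,b]$ disjoint from $\{x,y\}$ creates the cherry $[x,y]$ in $T^2$ precisely when the parent of $[a,b]$ is the ``third'' (non-cherry) neighbor of the parent of $x$ or of the parent of $y$; in particular, this can occur only when the parents of $x$ and $y$ are themselves adjacent in $T$, and in that configuration neither $x$ nor $y$ can itself lie in a cherry of $T$ (since each of their parents already has two distinct internal neighbors). Hence the non-good cherries of $T$ — those either containing $x$ or $y$, or creating the forbidden cherry — number at most two. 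Combined with the standard facts that every binary tree on $n \geq 4$ leaves has at least two disjoint cherries and that trees with exactly two cherries are exactly the caterpillars, this argument confines the only obstruction to the case where $T$ is a caterpillar and $x, y$ occupy the two opposite end-cherries. For this remaining case I would construct the $q$ paths explicitly: take $P_1 = x, v_1, x'$ through the cherry containing $x$ and $P_2 = y, v_{n-2}, y'$ through the cherry containing $y$, and obtain the further $q-2$ paths by pairing consecutive interior leaves along the internal spine as $P_i = a_{2i-3}, v_{2i-4}, v_{2i-3}, a_{2i-2}$ for $i = 3, \ldots, q$. Edge-disjointness is immediate from this explicit description, and the inequality $q \leq n/2$ ensures that enough interior leaves and spine edges are available.
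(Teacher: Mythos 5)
Your strategy is sound and, as far as I can check, all the key steps go through, but it is a genuinely different argument from the one in the paper. The paper's proof is short and non-inductive: it deletes one internal edge $e=\{\beta_i,\beta_{i+1}\}$ of the $x$-$y$-path (which has at least two interior vertices since $x,y$ is not a cherry), observes that the two resulting components $T_A\ni x$ and $T_B\ni y$ each have at least two leaves, applies the already-known Lemma~\ref{lem:leafchoice} to each component to fix $x$ (resp.\ $y$) as an endpoint, and finishes with the one-line estimate $\left\lfloor \frac{a}{2}\right\rfloor+\left\lfloor \frac{b}{2}\right\rfloor\geq\left\lfloor\frac{n-1}{2}\right\rfloor$. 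Your proof instead re-proves the statement from scratch by induction via type-2 cherry reductions: you must (i) characterize when reducing a cherry disjoint from $\{x,y\}$ creates the cherry $[x,y]$ (your characterization -- parent of the reduced cherry adjacent to $p_x$ or $p_y$, with $p_x$ and $p_y$ adjacent -- is correct), (ii) count the non-good cherries, and (iii) treat the caterpillar with $x,y$ in the two end cherries by hand. What the paper's route buys is brevity and no case analysis, at the price of invoking the external Lemma~\ref{lem:leafchoice}; what your route buys is a self-contained argument that fits the cherry-reduction machinery used elsewhere in the paper, at the price of a delicate structural analysis and an extra special case. Three minor repairable blemishes: the base cases $n\in\{4,5,6\}$ are asserted rather than checked (they do hold, and in fact the $q=2$ case follows in general by the paper's edge-deletion trick); your parenthetical claim that in the bad configuration ``neither $x$ nor $y$ can itself lie in a cherry'' is not literally true when only one bad cherry exists (e.g.\ $t_y$ a leaf), though your conclusion ``at most two non-good cherries'' survives because a leaf $t_y$ cannot carry a bad cherry; and the indices in your explicit caterpillar paths are off by one (as written, $a_{2i-3}$ is not adjacent to $v_{2i-4}$) -- the intended paths $a_{2i-4},v_{2i-4},v_{2i-3},a_{2i-3}$ work and the counting $q-2\leq\left\lfloor\frac{n-4}{2}\right\rfloor$ is fine.
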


\begin{proof}
    Let $x,y\in X$ such that $x\neq y$ and $x,y$ do not form a cherry in $T$, and let $P = x, \beta_1, \ldots, \beta_m, y$ be the unique $x$-$y$-path in $T$. Note that as $x$ and $y$ do not form a cherry in $T$, we must have $m\geq 2$. Let $e=\{\beta_i, \beta_{i+1}\}$ for some $i\in \{1, \ldots, m-1\}$. If we delete $e$, we get two rooted binary phylogenetic trees $T_A$ and $T_B$. We denote by $A$ the taxon set of $T_A$ and by $B$ the taxon set of $T_B$. Then, we know that $\vert A\vert, \vert B\vert \geq 2$. This is due to the fact that leaf-to-leaf paths in a tree only contain two leaves (namely their endpoints), but contain only inner nodes in-between. This implies that both $\beta_i$ and $\beta_{i+1}$ are inner nodes, which due to binarity implies that they both have degree 3. Thus, $\beta_i$ does have one more neighbor (other than $x$ and $\beta_{i+1}$), say $v_a$, and $\beta_{i+1}$ does have one more neighbor (other than $y$ and $\beta_{i}$), say $v_b$, and both $v_a$ and $v_b$ are either leaves or can be regarded as roots of two rooted subtrees of the given tree $T$, both of which again contain leaves. So in any case, both $A$ and $B$ must contain more than one leaf. Now, let $a = \vert A\vert$ and $b = \vert B \vert$. By Lemma \ref{lem:leafchoice}, we know that $T_A$ contains $p := \left\lfloor \frac{a}{2} \right\rfloor$ leaf-to-leaf-paths $P_1, \ldots, P_p$ such that $x$ is an endpoint of $P_1$. Similarly, we know that $T_B$ contains $p':= \left \lfloor \frac{b}{2}\right \rfloor$ leaf-to-leaf-paths $Q_1, \ldots, Q_{p'}$ such that $y$ is an endpoint of $Q_1$. If we take these collections together we get $p + p'$ leaf-to-leaf-paths $P_1, \ldots, P_p, Q_1, \ldots, Q_{p'}$ such that $x$ is an endpoint in one of them and $y$ is an endpoint in another one. Thus, after re-naming the paths to make $y$ an endpoint of $P_2$ (i.e., we can merely swap paths $P_2$ and $Q_1$), it only remains to show $p + p' \geq q := \max\{2, \left\lfloor\frac{n-1}{2}\right\rfloor\}$. As $p \geq 1$ and $p' \geq 1$, it is clear that $p +p' \geq 2$. Furthermore, we have $p = \left\lfloor \frac{a}{2} \right\rfloor \geq \frac{a-1}{2}$ and $p'= \left\lfloor \frac{b}{2} \right\rfloor \geq \frac{b-1}{2}$. This implies $p + p' \geq \frac{a-1}{2} + \frac{b-1}{2} = \frac{a+b - 2}{2} = \frac{n-2}{2}$, where the latter equality holds as $a+b=\vert A \vert + \vert B \vert = \vert X \vert =n$. As $\left\lfloor\frac{n-1}{2}\right\rfloor - \frac{n-2}{2} \in \{0, \frac{1}{2}\}$ and as $p+p'$ is an integer, $p + p' \geq \frac{n-2}{2}$ implies $p + p' \geq \left \lfloor \frac{n-1}{2} \right \rfloor$ as desired. This completes the proof.
\end{proof}

Lemma \ref{lem:leafchoice2} has an important consequence. In fact, as the following proposition shows, it leads to a partial characterization of pairs of trees with identical $A_k$-alignments in case $n\geq 2k+1$. 

\begin{proposition}\label{prop:cherry}
Let $k \in \mathbb{N}_{\geq 1}$. Let $T$ and $\widetilde{T}$ be two binary phylogenetic $X$-trees with $\lvert X \rvert = n \geq 2k+1$. If $A_k(T) = A_k(\widetilde{T})$, then two leaves $x,y \in X$, $x\neq y$, form a cherry in $T$ if and only if they form a cherry in $\widetilde{T}$.
\end{proposition}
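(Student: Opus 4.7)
I will argue by contradiction. By symmetry it suffices to assume that $[x,y]$ forms a cherry in $T$ but not in $\widetilde{T}$; my plan is to construct an explicit binary character $f$ that belongs to exactly one of $A_k(T)$ and $A_k(\widetilde{T})$, which contradicts the hypothesis that they are equal.

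The construction proceeds as follows. Since $[x,y]$ is not a cherry in $\widetilde{T}$ and since $n \geq 2k+1$ gives $\lfloor (n-1)/2\rfloor \geq k$, Lemma \ref{lem:leafchoice2} supplies $k$ edge-disjoint leaf-to-leaf paths $P_1,\ldots,P_k$ in $\widetilde{T}$ with $x$ an endpoint of $P_1$ and $y$ an endpoint of $P_2$. Denote the remaining endpoint of $P_1$ by $z_1$, that of $P_2$ by $z_2$, and the two endpoints of $P_i$ by $u_i, v_i$ for $i \geq 3$; these $2k$ leaves are pairwise distinct because leaves have degree one in a binary phylogenetic tree. I then define $f$ by $f(x)=f(y)=b$, $f(v_i)=b$ for $3\leq i\leq k$, and $f=a$ on all remaining leaves, so that $B_f = \{x,y,v_3,\ldots,v_k\}$ has size $\max\{k,2\}$. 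By design each $P_i$ is an $A_f$-$B_f$-path, so Proposition \ref{prop:menger0} yields $l(f,\widetilde{T}) \geq k$; combined with the trivial bound $l(f,\widetilde{T}) \leq |B_f|$, this gives $l(f,\widetilde{T})=k$ whenever $k \geq 2$, placing $f$ in $A_k(\widetilde{T})$.

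The core of the argument is showing $l(f,T) < k$. Let $v_{xy}$ be the common neighbor of the cherry leaves $x,y$ in $T$ and $w$ the third neighbor of $v_{xy}$. I examine an arbitrary edge-disjoint collection of $A_f$-$B_f$-paths in $T$. Any path ending at $x$ must exit via the edge $\{x,v_{xy}\}$ and then continue at $v_{xy}$ toward either $y$ or $w$. The option $y$ is ruled out because $y$ is a leaf lying in $B_f$, which would force both endpoints of the path to lie in $B_f$. Hence every $x$-ending path traverses $\{v_{xy},w\}$, and the symmetric argument shows the same for every $y$-ending path. Edge-disjointness therefore forbids an $x$-ending and a $y$-ending path from coexisting, so at most $|B_f|-1 \leq k-1$ leaves of $B_f$ serve as endpoints in the collection. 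Proposition \ref{prop:menger0} then gives $l(f,T) \leq k-1 < k$, so $f \notin A_k(T)$ whenever $k \geq 2$. For $k=1$ the same character yields $l(f,T)=1$ while $l(f,\widetilde{T})\geq 2$, placing $f$ in $A_1(T) \setminus A_1(\widetilde{T})$; in either case $A_k(T) \neq A_k(\widetilde{T})$, giving the desired contradiction.

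The main obstacle I anticipate is establishing the cherry-based bound $l(f,T) \leq k-1$: the \emph{forbidden predecessor} step, which compels every $x$- or $y$-ending path to leave the cherry through $\{v_{xy},w\}$, is the precise point at which the cherry hypothesis in $T$ is used. Everything else is a clean pairing of Menger's theorem (Proposition \ref{prop:menger0}) with the path-existence guarantee of Lemma \ref{lem:leafchoice2}, so the key technical insight is really just identifying the right character $f$ that simultaneously saturates Menger in $\widetilde{T}$ and falls short of it in $T$.
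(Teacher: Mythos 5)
Your proof is correct, and up to swapping the roles of $a$ and $b$ you construct exactly the same character as the paper: one endpoint of each of the $k$ paths from Lemma \ref{lem:leafchoice2} (with $x$ and $y$ among them) gets one state, everything else the other, and Proposition \ref{prop:menger0} together with the bound $l(f,\widetilde{T})\leq \lvert B_f\rvert$ shows $f\in A_k(\widetilde{T})$. Where you genuinely diverge is the step $f\notin A_k(T)$: the paper invokes the cherry-reduction machinery (Lemma \ref{lem:cherryred1}), passing to $T^1$ and noting $\lvert A_{f^1}\rvert = k-1$ forces $l(f^1,T^1)\leq k-1$, whereas you stay inside $T$ and argue directly via Menger that any $A_f$-$B_f$-path ending at $x$ or at $y$ must cross the single edge $\{v_{xy},w\}$ behind the cherry, so an edge-disjoint collection uses at most $k-1$ of the $k$ leaves of $B_f$ as endpoints, giving $l(f,T)\leq k-1$. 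Your bottleneck argument is more elementary and self-contained (it also lets you dispose of $k=1$ directly, where the paper cites Proposition \ref{prop:A1A2}), while the paper's route reuses Lemma \ref{lem:cherryred1}, which it needs anyway for the inductive arguments later in the manuscript; the one implicit step you should make explicit is that distinct paths in an edge-disjoint collection cannot share a leaf endpoint (a shared leaf would force a shared pendant edge), which is what bounds the number of paths by the number of $B_f$-leaves used as endpoints.
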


\begin{proof} Seeking a contradiction, we assume that the statement does not hold, i.e., we assume that there are two leaves $x,y\in X, x\neq y$, which form a cherry in $T$ but not in $\widetilde{T}$ even though $T$ and $\widetilde{T}$ have the same $A_k$-alignment. By Lemma \ref{lem:leafchoice2}, we know that with $q= \max\{2, \left\lfloor\frac{n-1}{2}\right\rfloor\}$, we can find edge-disjoint leaf-to-leaf paths $P_1, \ldots, P_q$ in $\widetilde{T}$ such that $x$ is an endpoint of $P_1$ and $y$ is an endpoint of $P_2$. As the case $k=1$ is clear from Proposition \ref{prop:A1A2} we can from now on assume $k\geq 2$. Using $n\geq 2k+1$, we get $q \geq  \left\lfloor\frac{n-1}{2}\right\rfloor \geq \left\lfloor\frac{(2k+1)-1}{2}\right\rfloor = k \geq 2$. This implies the existence of $k$ edge-disjoint leaf-to-leaf paths paths $P_1, \ldots, P_k$ such that $x$ is an endpoint of $P_1$ and $y$ is an endpoint of $P_2$. The idea of the proof is now to use this collection of paths for the construction of a character $f\in A_k(\widetilde{T}) \setminus A_k(T)$. This will give the wanted contradiction, as we have  $A_k(T) = A_k(\widetilde{T})$ by assumption.

    So for each $P_i$ with $i\in \{1, \ldots, k\}$, denote by  $a_i$ and  $b_i$ the endpoints of $P_i$. By the above explanations, we may assume that $a_1 = x$ and $a_2 = y$. 
    Now for every $v\in X$, we define $f(v)$ in the following way:
    \begin{equation*} 
    f(v)=\begin{cases} a & \text{if $v = a_i$ for some $i\in \{1, \ldots, k\}$},\\ b  &\text{else.}\end{cases}
    \end{equation*}

    We now denote by  $A_f=\{v \in X: f(v)=a\}$ and $B_f=\{v \in X: f(v)=v\}$ the sets of leaves that are assigned $a$ or $b$, respectively, by character $f$. We then show $f\in A_k(\widetilde{T})$. Using Proposition \ref{prop:menger0}, the existence of paths $P_1, \ldots, P_k$, which are $A_f$-$B_f$ leaf-to-leaf paths in $\widetilde{T}$, implies $l(f, \widetilde{T}) \geq k$. On the other hand, as $\vert A_f \vert =k$, it is clear that $l(f,\widetilde{T})\leq k$ (as all inner nodes could be assigned state $b$ by an extension and then in the worst case, each leaf in state $a$ would require one change). Thus, we have $l(f,\widetilde{T})= k$ and thus $f \in A_k(\widetilde{T})$ as desired.

    However, it remains to show $f\not\in A_k(T)$. Seeking a contradiction, let us assume this does not hold, i.e., let us assume that $f\in A_k(T)$. As $[x,y]$ is a cherry of $T$ and as $f(x) = f(y) = a$, we can apply Lemma \ref{lem:cherryred1} and get $f^1\in A_k(T^1)$, which shows that $l(f^1,T^1)=k$. 
    However, $\lvert A_{f^1} \rvert = k-1$, as $x,y\in A$ and as exactly one of the vertices $x$, $y$ is not contained in the taxon set of $T^1$. Thus, by the same argument as used above, we know $l(f^1,T^1)\leq k-1$, as no binary character with $k-1$ $a$'s can ever require more than $k-1$ changes on any tree. This contradiction shows that we must indeed have $f \not\in A_k(T)$.
    
    Thus, in summary, we indeed have found a character $f \in A_k(\widetilde{T})\setminus A_k(T)$, which shows $A_k(T)\neq A_k(\widetilde{T})$. This contradicts the assumption of the proposition and thus completes the proof.
    \end{proof}

\begin{remark}\label{rem2k} Before we continue, we point out that for each $k\geq 4$, there exists a pair of binary phylogenetic $X$-trees $T$ and $\widetilde{T}$ with $|X| = 2k$ and $A_k(T) = A_k(\widetilde{T})$ such that there is no cherry contained both in $T$ and in $\widetilde{T}$. A possible construction for this scenario is depicted in Figure \ref{fig:Counterexample}. Here, with $n=2k$, Lemma \ref{lem:leafchoice} guarantees the existence of $k$ edge-disjoint leaf-to-leaf paths. A choice (in this case, the only one) of such paths is  highlighted by bold edges in the figure. These $k$ paths result in $2^k$ binary characters using Proposition \ref{prop:menger0} (by choosing $a$ for one endpoint and $b$ for the other endpoint of each of these paths). Moreover, by Theorem \ref{thm:lengthAk}, we know for $n=2k$ that $\vert A_k(T)\vert = \vert A_k(\widetilde{T})\vert=2^k$. Thus, the $2^k$ characters constructed using the $k$ paths are the only elements both of $A_k(T)$ as well as of  $A_k(\widetilde{T})$. This shows that these sets indeed coincide, even though the trees have no cherry in common. Thus, the lower bound for $n$ in Proposition \ref{prop:cherry} is best possible.
\end{remark}

\begin{figure}[ht]
    \centering
\includegraphics{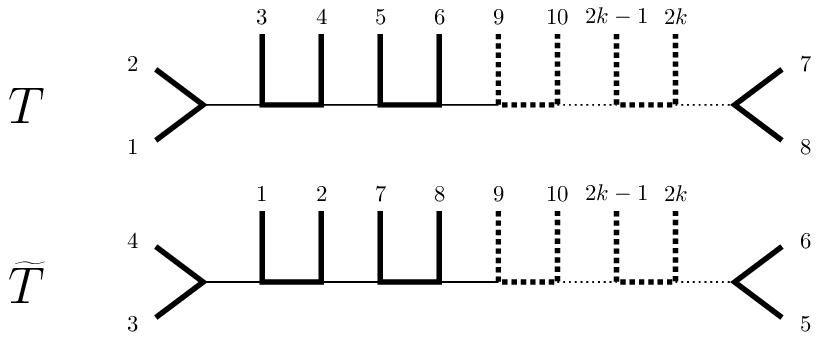}
\caption{A pair of binary phylogenetic $X$-trees with $|X|=2k$ and $k\geq 4$ and disjoint cherry sets. The dotted edges may or may not be there (depending on the chosen value of $k$). 
As described in Remark \ref{rem2k}, we have $A_k(T) = A_k(\widetilde{T})$, which can be shown using the edge-disjoint leaf-to-leaf paths highlighted in bold.  }
\label{fig:Counterexample}
\end{figure}

Proposition \ref{prop:cherry} together with Corollary \ref{cor:cherryred} suggests the following approach for the proof of Theorem \ref{thm:bound2k+3}: Given a pair of binary phylogenetic $X$-trees $T$ and $\widetilde{T}$ with $A_k(T) = A_k(\widetilde{T})$ and a value of $k\in \mathbb{N}_{\geq 1}$ as well as $n\geq 2k+3\geq 5$, by Proposition \ref{prop:cherry}, we know that there are at least two cherries which are contained in both trees. This is due to the fact that each phylogenetic tree with at least four leaves contains at least two disjoint cherries, and with $A_k(T) = A_k(\widetilde{T})$, both trees have all their cherries in common. Thus, the high-level idea of the proof of Theorem \ref{thm:bound2k+3} will use an inductive argument based on cherry reductions as follows: By Corollary \ref{cor:cherryred} and the inductive hypothesis, we will conclude that $T$ and $\widetilde{T}$ contain certain isomorphic subtrees with a large overlap. It then only remains to be shown that this overlap indeed implies  $T\cong \widetilde{T}$. However, for this last crucial step in the proof of Theorem \ref{thm:bound2k+3}, we require the following technical lemma.

\begin{lemma}\label{lem:preparation_for_proof}
Let $T$ be a binary phylogenetic $X$-tree with $|X| \geq 4$ and containing a cherry $[v,w]$. Denote the edges incident to $v$ and $w$ with $e_v$ and $e_w$, respectively. Furthermore, let $z'$ be the common neighbor of $v$ and $w$. Denote by $z$ the unique non-leaf vertex adjacent to $z'$ and by $E_z$ the set of edges incident with $z$. For $i=1,2$, let $T_i$ be a binary phylogenetic tree derived from $T$ by attaching cherry $[x,y]$ (with $x,y \not\in X$) to some edge $e_i \in E(T)\setminus\{e_v,e_w\}$. In particular, $T_i$ contains the cherry $[v,w]$ for $i=1,2$. Let $T_i^2$ denote the binary phylogenetic tree derived from $T_i$ by performing a cherry reduction of type $2$ using cherry $[v,w]$ for $i=1,2$. Then, we have: 

$$T_1^2 \cong T_2^2 \Longleftrightarrow (e_1=e_2 \mbox{ or }e_1, e_2 \in E_z).$$ 
\end{lemma}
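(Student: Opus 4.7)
The plan is to show that $T_i^2$ is determined, up to phylogenetic isomorphism, by a ``representative'' edge $\tilde e_i \in E(T^2)$: if $e_i \notin E_z$ then $\tilde e_i = e_i$, while if $e_i \in E_z$ then $\tilde e_i = e^*$, where $e^* := \{u_1, u_2\}$ is the unique edge of $T^2$ that is not an edge of $T$, and $u_1, u_2$ denote the two neighbors of $z$ distinct from $z'$. Once this is established, I will argue via Buneman's theorem that attaching the cherry $[x,y]$ to two distinct edges of $T^2$ always yields non-isomorphic phylogenetic trees; this reduces the claim of the lemma to the equality $\tilde e_1 = \tilde e_2$.

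First, I would make the local picture precise. Since $T$ is binary and $v,w$ are leaves, $z'$ has exactly three neighbors $v,w,z$, and $z$ has exactly three neighbors $z',u_1,u_2$, all distinct. Performing the cherry reduction of type~$2$ using $[v,w]$ on $T$ therefore deletes $v,w,z'$ and suppresses the resulting degree-$2$ vertex $z$, producing the new edge $e^* = \{u_1,u_2\}$; every edge of $T$ outside $\{e_v,e_w\}\cup E_z$ is preserved unchanged in $T^2$. I would then compute $T_i^2$ by case analysis on $e_i$. If $e_i\notin E_z$ (and $e_i \neq e_v,e_w$), neither the new subdivision vertex nor the new cherry-neighbor introduced by the attachment is adjacent to any of $v,w,z',z$, so the subsequent reduction of $[v,w]$ proceeds exactly as in $T$ and leaves the newly attached cherry untouched; the net effect is $T_i^2 = T^2$ with $[x,y]$ attached to $e_i$, giving $\tilde e_i = e_i$. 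If instead $e_i \in E_z$, I would verify by direct inspection that in each of the three subcases $e_i \in \{\{z',z\},\{z,u_1\},\{z,u_2\}\}$, the degree-$2$ vertex suppressed by the reduction is either the subdivision vertex introduced by the attachment (when $e_i = \{z',z\}$) or $z$ itself (when $e_i \in \{\{z,u_1\},\{z,u_2\}\}$); the resulting tree in every subcase contains a single interior node adjacent to $u_1$, $u_2$, and the common neighbor of $x,y$, which is exactly the tree produced by attaching $[x,y]$ to $e^*$ in $T^2$.

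For the converse direction I need the following fact: attaching $[x,y]$ at distinct edges $e\neq f$ of $T^2$ yields non-isomorphic phylogenetic trees on $X' \cup \{x,y\}$, where $X' := X\setminus\{v,w\}$ is the leaf set of $T^2$. By Buneman's theorem (Theorem~\ref{thm:buneman}) it suffices to compare the induced split sets. If $e$ induces the $X'$-split $A\vert B$ in $T^2$, then after attaching the cherry at $e$ one of the two edges into which $e$ is subdivided induces the $(X'\cup\{x,y\})$-split $A\cup\{x,y\}\vert B$ in the new tree. Since $\{x,y\}\cap X' = \emptyset$, this split uniquely recovers $A\vert B$ by restriction to $X'$, and since distinct edges of $T^2$ induce distinct $X'$-splits, distinct attachment edges produce distinct split sets on $X'\cup\{x,y\}$ and hence non-isomorphic phylogenetic trees.

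Putting these observations together, $T_1^2\cong T_2^2$ if and only if $\tilde e_1 = \tilde e_2$. The ``$\Leftarrow$'' direction is immediate: if $e_1 = e_2$ or if both $e_1,e_2\in E_z$, then $\tilde e_1 = \tilde e_2$. Conversely, if $e_1\in E_z$ but $e_2\notin E_z$, then $\tilde e_1 = e^*\notin E(T)$ while $\tilde e_2 = e_2\in E(T)$, so $\tilde e_1\neq \tilde e_2$; and if $e_1,e_2\notin E_z$ are distinct, then $\tilde e_1 = e_1\neq e_2 = \tilde e_2$. The main obstacle in this plan is the subcase analysis when $e_i\in E_z$: one has to carefully bookkeep the vertices introduced and deleted across all three subcases to confirm that the resulting local structures coincide at $e^*$. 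Once this bookkeeping is in place, the rest follows from Buneman's theorem in a routine fashion.
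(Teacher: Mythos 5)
Your overall architecture is sound and genuinely different from the paper's: you commute the cherry attachment with the type-2 reduction to realize both $T_1^2$ and $T_2^2$ as trees obtained from $T^2$ by attaching $[x,y]$ to a representative edge $\tilde e_i\in E(T^2)$ (with $\tilde e_i=e_i$ if $e_i\notin E_z$ and $\tilde e_i=e^*=\{u_1,u_2\}$ if $e_i\in E_z$), and this bookkeeping does check out in all subcases. The paper instead handles the hard direction by a direct split-compatibility argument inside $T_1^2$, $T_2^2$ and the re-attached tree $T_2$, so your reduction to ``cherry attachment at distinct edges of $T^2$ gives non-isomorphic trees'' is a cleaner organization in principle.

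However, your proof of that key injectivity fact has a genuine gap. You exhibit only the single split $A\cup\{x,y\}\mid B$ coming from one half of the subdivided edge $e$, and then conclude that distinct attachment edges give distinct split sets because this split ``recovers $A\mid B$ by restriction.'' That inference fails: if the cherry is attached to a different edge $f\neq e$ lying on the $A$-side of $e$, then $e$ is still an (unsubdivided) edge of the resulting tree and induces exactly the same split $A\cup\{x,y\}\mid B$ there; moreover, \emph{every} edge of $T^2$ contributes, in either tree, some split whose restriction to $X'$ is its original split, so restriction does not single out the attachment edge. What actually distinguishes the two trees is that attaching at $e$ produces \emph{both} extensions $A\cup\{x,y\}\mid B$ and $A\mid B\cup\{x,y\}$ (one from each half of the subdivided edge), whereas any tree obtained by attaching at $f\neq e$ contains exactly one of them; only with this pair in hand does Buneman's theorem give $\Sigma$-inequality and hence non-isomorphy. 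This is precisely the role played in the paper's proof by the pair of splits $\sigma_2$ and $\sigma_2'$ (respectively $\sigma_1$ and $\sigma_1'$ after re-attachment), so the fix is short, but as written your converse step does not go through.
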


\begin{proof}
Let $T$ with all of its vertices and edges as well as $T_1$, $T_2$, $T_1^2$ and $T_2^2$ be as described in the statement of the lemma. $T$ is schematically depicted in Figure \ref{fig:lem4T}. We now divide the proof into three parts.

\begin{figure}
\begin{center}
\includegraphics{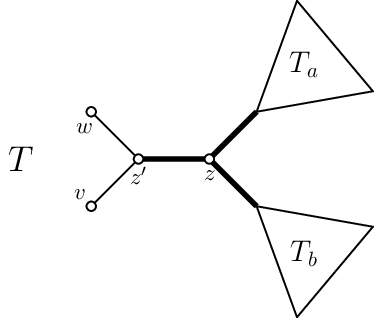}
\end{center}
\caption{Schematic depiction of tree $T$ containing cherry $[v,w]$ as described in Lemma \ref{lem:preparation_for_proof}. Here, the triangles $T_a$ and $T_b$ symbolize subtrees of $T$, each of which contains at least one leaf (as $\vert X \vert \geq 4$). The edges of $E_z$ are highlighted in bold.} \label{fig:lem4T}
\end{figure}

\begin{enumerate}
\item Assume $e_1=e_2$. We need to show that then $T_1^2 \cong T_2^2$. However, if $e_1=e_2$, by construction of $T_1$ and $T_2$, we  have $T_1 \cong T_2$, which immediately implies $T_1^2 \cong T_2^2$. So there remains nothing to show.
\item Now assume both $e_1$ and $e_2$ are contained in $E_z$. Again, we need to show that then $T_1^2 \cong T_2^2$. However, as $\vert E_z \vert = 3$ (as $T$ is binary), there are only three possibilities to attach cherry $[x,y]$ to edges in $E_z$, cf. Figure \ref{fig:lem4Txy}. So $T_1$ and $T_2$ are both contained in the set of three trees depicted in said figure. 

\begin{figure}
\begin{center}
\includegraphics{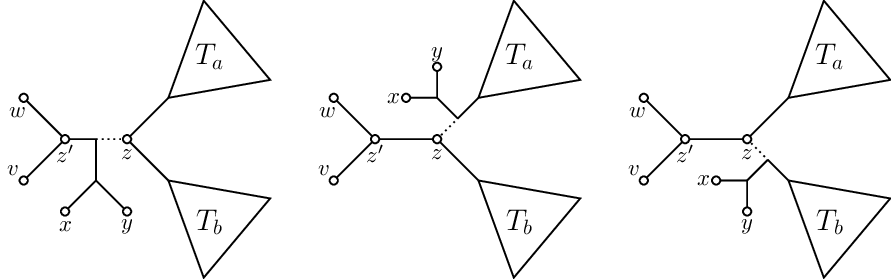}

\end{center}
\caption{Schematic depiction of the three possible trees resulting from attaching cherry $[x,y]$ to tree $T$ of Figure \ref{fig:lem4T}. Note that these trees are all NNI neighbors, which can be seen by considering subtrees swaps around the dotted edges.\label{fig:lem4Txy}}
\end{figure}

However, performing a cherry reduction of type 2 using cherry $[v,w]$ to any of these three trees results in the unique tree depicted in Figure \ref{fig:lem4T2}. This shows that $T_1^2 \cong T_2^2$ as desired.

\begin{figure}
\begin{center}
\includegraphics{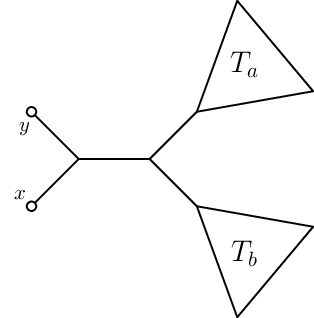}
\end{center}
\caption{Schematic depiction of the tree resulting from applying a cherry reduction of type 2 using cherry $[v,w]$ to any of the three trees from Figure \ref{fig:lem4Txy}.} \label{fig:lem4T2}
\end{figure}

\item Last, we need to show that if we have $e_1\neq e_2$ and additionally $\{e_1,e_2\}\not\subseteq E_z$, then $T_1^2 \not\cong T_2^2$. Without loss of generality, we may assume $e_1\not\in E_z$. As by choice of $e_1$ we also know that $e_1$ is not incident with $v$ or $w$, we know that $e_1$ must be contained in one of the subtrees $T_a$ or $T_b$ (with taxon sets $X_a$, $X_b$, respectively)  from Figure \ref{fig:lem4T}. So without loss of generality, let us assume $e_1$ is an edge of $T_a$. This in particular implies that $T_a$ contains more than one leaf. 
 Moreover, the removal of $e_1$ from $T$ would subdivide the tree into two subtrees, one of which contains both $v$ and $w$ and one of which does not. We consider the latter tree (which is a subtree of $T_a$) and call it $T_a'$ and its taxon set $X_a'$. However, this implies that $T_1$ contains the splits $\sigma_1:=X_a' \cup \{x,y\} \ \vert \ X \setminus X_a'$ as well as $\sigma_1':=X_a'  \ \vert \ (X \cup \{x,y\}) \setminus X_a'$. As $v,w \in X \setminus X_a'$, we accordingly derive that $T_1^2$ contains the splits $\sigma_2:=X_a' \cup \{x,y\} \ \vert \ X \setminus (X_a'\cup \{v,w\})$ and $\sigma_2':=X_a'  \ \vert \ (X\cup \{x,y\}) \setminus (X_a'\cup \{v,w\})$. 

If $\Sigma(T_2^2)$ does not contain $\sigma_2$ or $\sigma_2'$, there remains nothing to show: then, by Theorem \ref{thm:buneman}, we have $T_1^2 \not\cong T_2^2$ as desired.

So now assume  $\Sigma(T_2^2)$ contains both $\sigma_2$ and $\sigma_2'$, too. We now investigate what happens when cherry $[v,w]$ is re-attached to $T_2^2$ in order to derive $T_2$. Let us start with $\sigma_2$. From $\sigma_2\in \Sigma(T_2^2)$ we conclude that $T_2$ must contain either $\sigma_1$ or $\widetilde{\sigma}_1:= (X_a' \cup \{x,y\} \cup \{v,w\}) \ \vert \ X \setminus (X_a'\cup \{v,w\})$. If the latter was the case, the deletion of cherry $[x,y]$ in order to derive $T$ from $T_2$ would induce the split $\widetilde{\sigma}_1':= (X_a' \cup \{v,w\}) \ \vert \ X \setminus (X_a'\cup \{v,w\})$. However, as $T_a'$ is a proper subtree of $T_a$ in $T$, we can find a taxon $s \in X_a\setminus X_a'$. As $v,w \not \in X_a$, we know $s \in X \setminus (X_a'\cup \{v,w\})$. We now use this fact to show that $X_a \vert X\setminus X_a$ and $\widetilde{\sigma}_1'$ are not compatible. We have $\emptyset \neq X_a' \subseteq (X_a'\cup \{v,w\})\cap X_a$, $\emptyset\neq X_b \subseteq (X\setminus(X_a'\cup \{v,w\})\cap (X\setminus X_a)$ as well as $v,w \in X_a' \subseteq (X_a'\cup \{v,w\})\cap (X\setminus X_a)$ and $s \in X_a \cap (X \setminus (X_a'\cup \{v,w\}))$. As none of these four intersections are empty, the splits $X_a \vert X\setminus X_a$ and $\widetilde{\sigma}_1'$ are indeed not compatible by definition, and by Theorem \ref{thm:buneman}, this implies that their corresponding edges cannot both be contained in $T$. However, as we know that $X_a \vert X\setminus X_a$ is a split induced by $T$ via the existence of subtree $T_a$, we cannot have $\widetilde{\sigma}_1' \in \Sigma(T)$. Therefore, we must have $\sigma_1 \in \Sigma(T_2)$. Using an analogous  argument, it can also be seen that  $\sigma_1'\in \Sigma(T_2)$. 

Thus, re-attaching cherry $[v,w]$ to $T_2^2$ in order to derive $T_2$ leads to $\sigma_1, \sigma_1' \in \Sigma(T_2)$. Note that $\sigma_1$ implies the existence of a subtree with leaf set $X_a' \cup \{x,y\}$ in $T_2$. Similarly, $\sigma_1'$ implies the existence of a subtree with leaf set $X_a'$ in $T_2$. Together, these subtrees show  that when $T_2$ was constructed from $T$ by attaching cherry $[x,y]$, said cherry must have been attached to the edge inducing the $X$-split $X_a' \ \vert \ X\setminus X_a'$. However, this edge is precisely $e_1$, which leads to $e_1=e_2$, a contradiction. Thus, $\{\sigma_1,\sigma_1'\}\not\subset\Sigma(T_2^2)$, but  $\{\sigma_1,\sigma_1'\}\subset\Sigma(T_1^2)$, which shows that $\Sigma(T_1^2)\neq \Sigma(T_2^2)$.  By Theorem \ref{thm:buneman}, this implies $T_1^2 \not\cong T_2^2$ as desired. This completes the proof.

\end{enumerate}
\end{proof}

\subsubsection{Proof of Theorem \ref{thm:bound2k+3}}

We are now finally in the position to prove Theorem \ref{thm:bound2k+3}.

\begin{proof}[Proof of Theorem \ref{thm:bound2k+3}]
We first note that if we have $T \cong \widetilde{T}$ for two binary phylogenetic $X$-trees, then obviously $A_k(T)=A_k(\widetilde{T})$. So for this direction, there is nothing to show. For the other direction, however, we use induction on $k$ and assume $n\geq 2k+3$. Note that due to Proposition \ref{prop:A1A2}, for $k=1$ and $k=2$ there remains nothing to show. So we can assume $k\geq 3$. Furthermore, by the inductive hypothesis we can assume that for all pairs of binary phylogenetic $X'$-trees $T'$ and $\widetilde{T}'$ with $\lvert X'\rvert \geq 2(k-1)+3$, the $A_{k-1}$-alignments are identical if and only if $T'$ and $\widetilde{T}'$ are isomorphic.

Now we consider a pair $T$,  $\widetilde{T}$ of phylogenetic $X$-trees with $\lvert X \rvert \geq 2k+3$ and identical $A_k$-alignments. Our goal is to show that $T$ and $\widetilde{T}$ are isomorphic. By Proposition \ref{prop:cherry}, we know that $T$ and $\widetilde{T}$ have identical cherries. We consider two disjoint cherries $[v,w]$ and $[x,y]$ of $T$ (which exist because $|X| \geq 4$ as $k\geq 3$). Then $[v,w]$ and $[x,y]$ are also cherries of $\widetilde{T}$. Now let $T^2$ and $\widetilde{T}^2$ be the trees resulting from a cherry reduction of type $2$ using $[x,y]$ applied to $T$ and $ \widetilde{T}$, respectively, and let $S^2$ and $\widetilde{S}^2$ be the trees resulting from a cherry reduction of type $2$ using $[v,w]$ applied to $T$ and $ \widetilde{T}$, respectively. As $A_k(T)=A_k(\widetilde{T})$ by assumption and using Corollary \ref{cor:cherryred}, we can conclude that $A_{k-1}(T^2) = A_{k-1}(\widetilde{T}^2)$ and $A_{k-1}(S^2) = A_{k-1}(\widetilde{S}^2)$. Using the inductive hypothesis, we can then conclude that $T^2\cong \widetilde{T}^2$ and $S^2 \cong \widetilde{S}^2$. 

So $T^2$ and $S^2$ are subtrees of $T$ as well as of $\widetilde{T}$. As $[v,w]$ and $[x,y]$ are disjoint, deleting $[x,y]$ from $T$ or $\widetilde{T}$  leaves $[v,w]$ intact, which implies that $[v,w]$ is a cherry of $T^2$, too. Let $z'$ be the common neighbor of $v$ and $w$ in $T^2$. As an inner node of a binary phylogenetic tree, $z'$ has degree $3$, so there is a third neighbor $z$ of $z'$ in $T^2$ with $z\not\in \{v,w\}$. Let $E_z$ be the set of edges in $E(T^2)$ which are incident with $z$. Note that $\vert E_z \vert =3$ as $z$ cannot be a leaf (as $\vert X \vert \geq 4$). Let $e_1\in E(T^2)$ be such that attaching cherry $[x,y]$ to $e_1$ results in $T$ and let $e_2\in E(T^2)$ be such that attaching cherry $[x,y]$ to $e_2$ results in $\widetilde{T}$.

Now we apply Lemma \ref{lem:preparation_for_proof} to tree $T^2$ (with $T$ corresponding to $T_1$ in the lemma and  $\widetilde{T}$ corresponding to $T_2$ as well as $S^2$ to $T_1^2$, and $\widetilde{S}^2$ to $T_2^2$). This lemma allows us to conclude from $S^2 \cong \widetilde{S}^2$ that $e_1 = e_2$ or $e_1, e_2\in E_z$. However, if $e_1=e_2$, then $T$ and $\widetilde{T}$ are obtained by attaching the cherry $[x,y]$ to the same edge of $T^2$, which obviously implies $T \cong \widetilde{T}$, which completes the proof in this case.

If, on the other hand, $e_1\neq e_2$, we can conclude from $S^2 \cong \widetilde{S}^2$ that $e_1, e_2\in E_z$. However, in this case, $e_1$ and $e_2$ are adjacent, which makes $T$ and $\widetilde{T}$ NNI neighbors, cf. Figure \ref{fig:lem4Txy}. In this case, we can use Corollary \ref{cor:bound2k+3_nni} to conclude that $A_k(T) = A_k(\widetilde{T})$ implies $T \cong \widetilde{T}$. This completes the proof.
\end{proof}

\subsection{Characterization of pairs of trees with identical \texorpdfstring{$A_k$}{Ak}-alignment in case \texorpdfstring{$2k \leq n\leq 2k+2$}{2k<=2k+2}} \label{sec:n_smaller}

In the previous section, we have shown in Theorem \ref{thm:bound2k+3} that in case $n\geq 2k+3$, a pair of binary phylogenetic $X$-trees with $\vert X \vert = n$ has identical $A_k$-alignments if and only if the two trees are isomorphic. We have also explained that based on pre-knowledge from the literature, this result is best possible in the sense that constructions for pairs of trees with $n\leq 2k+2$ are known such that these trees share the same $A_k$-alignment. This immediately leads to the question if such pairs of trees, i.e., trees with identical $A_k$-alignments, can be characterized. Deriving such a characterization is the main aim of the present section. Therefore, we now consider $n\leq 2k+2$.

 However, note that the case $n<2k$ is not very interesting as $A_k(T) = \emptyset$ for every binary phylogenetic $X$-tree with $\lvert X \rvert < 2k$. This can be easily seen using Proposition \ref{prop:menger0}, because if a tree has fewer than $2k$ leaves, there cannot be $k$ edge-disjoint leaf-to-leaf paths, which implies that such a tree does not allow for the existence of a binary character with parsimony score $k$. Thus, we indeed have $A_k(T) = \emptyset$ in all these cases, which implies that \emph{all} binary phylogenetic trees with $n<2k$ have identical $A_k$-alignments.

 Therefore, in the present section it only remains to analyze the cases $n=2k$, $n=2k+1$ and  $n=2k+2$ in order to complete the characterization of pairs of binary phylogenetic trees with identical $A_k$-alignments. 

For these cases, our characterization is motivated by Proposition \ref{prop:nni}. By this result, we know that two NNI neighbors $T$ and $\widetilde{T}$ have identical $A_k$-alignments if and only if the condition $s(T,\widetilde{T})<k-2$ is satisfied. In this case, we call an NNI move from $T$ to $\widetilde{T}$ a \emph{$k$-problematic move}, which we will sometimes simply refer to as a \emph{problematic move} whenever there is no ambiguity. Moreover, we say  that a binary phylogenetic $X$-tree $\widetilde{T}$ can be obtained from binary phylogenetic $X$-tree $T$ by a \emph{series of $k$-problematic moves} (or shortly \emph{series of problematic moves}) if either $T \cong \widetilde{T}$ or if $\widetilde{T}$ is isomorphic to a binary phylogenetic $X$-tree obtained from $T$ by an iterative application of NNI moves, each of which is a $k$-problematic move.

Note that we use the term \enquote{problematic} in this context because such a series of moves changes the tree without changing the $A_k$-alignment, thereby destroying the uniqueness of the $A_k$-alignment under consideration. Thus, by Proposition \ref{prop:nni}, the existence of such a series of moves is a sufficient condition for arbitrary pairs of binary phylogenetic $X$-trees to have identical $A_k$-alignments. This observation immediately raises the question if \emph{every} pair of binary phylogenetic trees with identical $A_k$-alignments is characterized by the existence of problematic moves. It is the main aim of this section to answer this question affirmatively, which is done by the following theorem.

\begin{theorem}\label{thm:problematicmoves}
    Let $k\in \mathbb{N}_{\geq 2}$ and let $T$, 
 $\widetilde{T}$ be binary phylogenetic $X$-trees. Then, $A_k(T) = A_k(\widetilde{T})$ if and only if $\widetilde{T}$ can be obtained from $T$ by a series of $k$-problematic moves.
\end{theorem}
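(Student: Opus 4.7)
The ``if'' direction is immediate from Proposition \ref{prop:nni}: each $k$-problematic NNI preserves $A_k$ by definition, so any finite composition of such moves preserves $A_k$ as well. For the ``only if'' direction, I would proceed by case analysis on $n = \vert X \vert$. When $n \geq 2k+3$, Theorem \ref{thm:bound2k+3} directly yields $T \cong \widetilde{T}$, so the empty series of moves suffices. When $n \leq 2k-1$, Proposition \ref{prop:menger0} forces $A_k(T) = A_k(\widetilde{T}) = \emptyset$, since a tree with fewer than $2k$ leaves cannot house $k$ edge-disjoint leaf-to-leaf paths; moreover, for any NNI on such a tree one has $s \leq (n-4)/2 \leq (2k-5)/2$, and since $s$ is an integer this gives $s \leq k-3 < k-2$, so every NNI is $k$-problematic, and the NNI-connectedness of binary tree space delivers the required series. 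The base case $k = 2$ of the outstanding induction on $k$ is handled by Proposition \ref{prop:A1A2}.

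The substantive work is the inductive step for $k \geq 3$ and $n \in \{2k, 2k+1, 2k+2\}$. For $n \in \{2k+1, 2k+2\}$, Proposition \ref{prop:cherry} guarantees that $T$ and $\widetilde{T}$ share every cherry. Picking a common cherry $[x,y]$ and applying a cherry reduction of type $2$ yields trees $T^2, \widetilde{T}^2$ with $n' = n - 2 \in \{2(k-1)+1, 2(k-1)+2\}$ leaves, and Corollary \ref{cor:cherryred} gives $A_{k-1}(T^2) = A_{k-1}(\widetilde{T}^2)$. The inductive hypothesis then delivers a series of $(k-1)$-problematic NNIs $T^2 = T^2_0 \to \cdots \to T^2_m = \widetilde{T}^2$. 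I would lift this to a series in $T$ by tracking the edge $e^\ast_j$ of $T^2_j$ at which the cherry $[x,y]$ is currently re-attached. If the $j$-th NNI acts on an edge $f_j \neq e^\ast_j$, then $f_j$ is still an inner edge of $T_j$, and the same NNI can be performed there; since attaching $[x,y]$ raises exactly one of the four adjacent subtree sizes $n_i$ to $n_i + 2$ and $\lfloor (n_i+1)/2 \rfloor - \lfloor (n_i-1)/2 \rfloor = 1$, one obtains $s_{\mathrm{lift}} = s_{\mathrm{orig}} + 1 \leq (k-4) + 1 = k - 3 < k - 2$, so the lift is $k$-problematic. If instead $f_j = e^\ast_j$, I would first perform cherry-shifting NNIs in $T_j$ on edges incident to the support vertex of $[x,y]$; such moves relocate the attachment edge to an adjacent edge of $T^2_j$ without altering $T^2_j$ itself.

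The main obstacles are twofold. First, the cherry-shifting NNIs above only satisfy $s \leq (n-5)/2$, which for $n \in \{2k+1, 2k+2\}$ permits $s = k - 2$ in unfavorable configurations; I expect this to be resolvable by choosing the shift direction judiciously (shifting toward the smallest adjacent subtree) or by exploiting a second disjoint common cherry of $T$ and $\widetilde{T}$, whose existence follows from Proposition \ref{prop:cherry} together with the fact that every binary phylogenetic tree on at least four leaves contains two disjoint cherries \cite[Proposition 1.2.5]{Semple2003}. Second, the case $n = 2k$ must be handled without recourse to Proposition \ref{prop:cherry}, since Remark \ref{rem2k} shows that a common cherry may fail to exist. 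Here I would exploit the rigid structure exposed by Theorem \ref{thm:lengthAk}: $\vert A_k(T) \vert = 2^k$ and every character is balanced, corresponding to $k$ edge-disjoint leaf-to-leaf paths covering $X$. My plan is to show that any two trees with identical $A_k$-alignment and $n = 2k$ are connected by $k$-problematic NNIs via a direct construction, exploiting the fact that for $n = 2k$ an NNI satisfies $s \leq (n-4)/2 = k - 2$ with equality only when all four adjacent subtree sizes are odd; hence the non-$k$-problematic NNIs form a very constrained family that can be routed around via a careful path analysis of the concrete configurations, such as those depicted in Figure \ref{fig:Counterexample}. This last step is the most delicate part of the plan and where I expect the greatest technical difficulty.
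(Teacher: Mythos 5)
Your skeleton (if-direction via Proposition \ref{prop:nni}, the easy cases $n<2k$ and $n\geq 2k+3$, induction on $k$ with base case from Proposition \ref{prop:A1A2}, cherry reduction plus Corollary \ref{cor:cherryred}, and a ``lifting'' of the $(k-1)$-problematic series through the cherry attachment) matches the paper's strategy, and your computation for the lift in the case $f_j\neq e^\ast_j$ is sound. However, there are genuine gaps. The largest one is the missing final step: after lifting the series $T^2\to\cdots\to\widetilde{T}^2$, you arrive at a tree $T^\ast$ in which the cherry $[x,y]$ is attached to \emph{some} edge of $\widetilde{T}^2$, which need not be the edge at which $\widetilde{T}$ has it attached. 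Your plan never reconciles $T^\ast$ with $\widetilde{T}$. In the paper this reconciliation is exactly Lemma \ref{lem:SPR_for_cherry}: a characterization, for two attachments of the same cherry to the same base tree, of when the $A_k$-alignments coincide (for $n=2k$ always; for $n=2k+1$ at most one odd subtree along the path between the attachment edges; for $n=2k+2$ either adjacent attachment or all such subtrees even), together with the equivalence to a series of problematic moves. This is the technical heart of the section (its $n=2k+2$ case is proved by a further cherry reduction down to the $n=2k+1$ case), and it cannot be skipped: without it your lifted series simply ends at the wrong tree.

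Two further gaps: (i) in the case $f_j=e^\ast_j$ you yourself note that your cherry-shifting moves only satisfy $s\leq (n-5)/2$, which can equal $k-2$; your proposed fixes (shift direction, a second common cherry) are speculative and not obviously sufficient. The paper resolves this differently (case $e=f$ of Lemma \ref{lem:series_problematic_moves}): the shift is performed around an edge adjacent to the problematic NNI edge, and it is shown to be $(k+1)$-problematic by transferring the parity condition of Corollary \ref{cor:cases} from the smaller tree's problematic move via the quadruple $n_1,n_2,n_3+n_4,2$ --- no generic bound on $s$ suffices here. (ii) The case $n=2k$ without a common cherry is left as a hope (``careful path analysis''); the paper instead develops concrete machinery --- uniqueness of the set of $k$ edge-disjoint leaf-to-leaf paths (Lemma \ref{lem:2kuniquepaths}), the characterization $A_k(T)=A_k(T')$ iff the endpoint-pair sets coincide (Proposition \ref{prop:characterization2k}), the even-subtree consequences (Corollaries \ref{cor:even_subtrees} and \ref{lem:cherry_n=2k}), and Lemma \ref{lem:oneleafmoves} --- to move the two leaves of a cherry of $T$ together inside $\widetilde{T}$ by problematic moves and thereby reduce to the common-cherry case. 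So while your outline points in the right direction, the three steps above are precisely where the real work of the theorem lies, and they are not carried out in your proposal.
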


Before we can formally prove this theorem, we still need to present some preliminary results. As explained above, these will be mainly concerned with the cases $n=2k$, $n=2k+1$ and $n=2k+2$. Regarding Theorem \ref{thm:problematicmoves}, this is justified, as in all other cases, there remains nothing to show: 
\begin{itemize}
\item If $n<2k$, then $A_k(T) = \emptyset$ for every binary phylogenetic $X$-tree with $\lvert X \rvert = n$ as explained above. In this case, \emph{all} NNI moves are problematic as none of them changes the $A_k$-alignment. Note that the space of binary phylogenetic $X$-trees is connected under NNI \cite[Proposition 2.6.1]{Semple2003}, which implies that every such tree  $\widetilde{T}$ can be  obtained from any such tree $T$ by a series of NNI moves. As all of them are problematic, this proves Theorem \ref{thm:problematicmoves} in case $n<2k$.
\item If, on the other hand, $n\geq 2k+3$, then Theorem \ref{thm:bound2k+3} together with Proposition \ref{prop:nni} implies that there are no problematic moves at all. So in this case, Theorem \ref{thm:bound2k+3} and Theorem \ref{thm:problematicmoves} have the same meaning, which means that in the light of the previous section, there remains nothing to show in this case.
\end{itemize}

Therefore, it subsequently will suffice to prove Theorem \ref{thm:problematicmoves} for the cases $n=2k, n=2k+1$ and $n=2k+2$. Before we can do this, however, we need to turn our attention to some technical results which will be needed for the proof later on.

\subsubsection{Preparation for the proof of Theorem \ref{thm:problematicmoves}}

The function $s(T, \widetilde{T})$, which was defined in Proposition \ref{prop:nni}, plays a decisive role concerning  problematic moves.   We will use this function as well as Proposition \ref{prop:nni} now to derive the following result,  which will make it  easier to decide whether a certain NNI move is a problematic move. 

\begin{corollary}
   \label{cor:cases}
Let $k \in \mathbb{N}_{\geq 2}$, let $a\in \{0,1,2\}$ and let $n=2k+a$. Let $T$ be a binary phylogenetic $X$-tree with $\vert X\vert=n$. Let $\sigma=A\vert B \in \Sigma^\ast(T)$ be an $X$-split corresponding to an inner edge of $T$ and inducing subtrees $T_{A_1}$ and $ T_{A_2}$, whose leaves are subsets of $A$, as well as $T_{B_1}$ and $ T_{B_2}$, whose leaves are subsets of $B$, cf. Figure \ref{fig:treedecomp}. Let $n_1 = \vert A_1\vert$, $n_2 = \vert A_2\vert$, $n_3 = \vert B_1\vert$ and $n_4 = \vert B_2\vert$. Moreover, let $\widetilde{T}$ be the NNI neighbor of  $T$ obtained by exchanging $T_{A_2}$ with $T_{B_2}$. Then, we have: $A_k(T) = A_k(\widetilde{T})$ if and only if at most $2-a$ of the numbers $n_1, n_2, n_3, n_4$ are odd.
\end{corollary}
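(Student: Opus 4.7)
The statement is essentially a direct translation of Proposition \ref{prop:nni} into a parity condition on the leaf counts of the four maximal pending subtrees, exploiting the fact that $n_1+n_2+n_3+n_4 = n = 2k+a$ is fixed. So my plan is to reduce the inequality $s(T,\widetilde{T}) < k-2$ to a condition on the number of odd $n_i$'s.

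\textbf{Step 1: Invoke Proposition \ref{prop:nni}.} Since $\widetilde{T}$ is obtained from $T$ by exchanging $T_{A_2}$ and $T_{B_2}$ along the inner edge inducing $\sigma = A|B$, the trees $T$ and $\widetilde{T}$ are NNI neighbors. Proposition \ref{prop:nni} therefore yields the equivalence
\[
A_k(T) = A_k(\widetilde{T}) \ \Longleftrightarrow\ s(T,\widetilde{T}) = \sum_{i=1}^4 \left\lfloor \frac{n_i - 1}{2}\right\rfloor < k-2.
\]

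\textbf{Step 2: Rewrite $s(T,\widetilde{T})$ in terms of the number of odd $n_i$'s.} Let $o$ denote the number of indices $i \in \{1,2,3,4\}$ for which $n_i$ is odd. Observe that $\lfloor (n_i-1)/2\rfloor = (n_i-1)/2$ when $n_i$ is odd and $\lfloor (n_i-1)/2\rfloor = (n_i-2)/2$ when $n_i$ is even. Summing these contributions and using $n_1+n_2+n_3+n_4 = n = 2k+a$ gives
\[
s(T,\widetilde{T}) = \tfrac{1}{2}\bigl(n - o - 2(4-o)\bigr) = \tfrac{1}{2}(n + o - 8) = k + \tfrac{a+o}{2} - 4.
\]
(Parity of $n$ forces $o \equiv a \pmod{2}$, so this expression is an integer.) Substituting into the inequality of Step 1 yields
\[
s(T,\widetilde{T}) < k-2 \ \Longleftrightarrow\ a + o < 4.
\]

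\textbf{Step 3: Sharpen via the parity of $o$.} The condition $a+o<4$ would naively translate to $o \leq 3-a$, but the congruence $o \equiv a \pmod 2$ forces $a+o$ to be even, so $a+o<4$ is equivalent to $a+o \leq 2$, i.e.\ $o \leq 2-a$. Checking the three cases $a\in\{0,1,2\}$ confirms this: for $a=0$ one gets $o\in\{0,2\}$, for $a=1$ one gets $o=1$, and for $a=2$ one gets $o=0$, each of which matches $o\leq 2-a$. This completes the equivalence, since $o$ is by definition the number of odd values among $n_1,n_2,n_3,n_4$.

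The whole argument is essentially routine arithmetic; there is no real obstacle. The only subtle point is Step 3, where one must notice that the parity constraint $n_1+\cdots+n_4 \equiv a \pmod 2$ strengthens the naive bound $3-a$ to $2-a$. Without this observation, the stated bound in the corollary would look off by one.
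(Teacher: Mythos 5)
Your proof is correct and follows essentially the same route as the paper: invoke Proposition \ref{prop:nni}, rewrite $s(T,\widetilde{T})$ via floor-function arithmetic, and use the parity constraint coming from $n_1+n_2+n_3+n_4=n=2k+a$ to turn the inequality $s<k-2$ into the bound on the number of odd $n_i$'s. The only cosmetic difference is that you count odd summands and exploit $o\equiv a \pmod 2$ to sharpen $a+o<4$ to $o\leq 2-a$ in one step, whereas the paper counts even summands and finishes with a three-case check on $a$.
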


\begin{proof}
    Let $s(T,\widetilde{T}) = \sum\limits_{i=1}^4 \left\lfloor \frac{n_i-1}{2}\right\rfloor$ be the function defined in Proposition \ref{prop:nni}. It will be convenient for our proof to establish another formula for $s$. For $i\in \{1, \ldots, 4\}$,  let $\alpha_i = \left(\frac{n_i-1}{2}\right) - \left\lfloor \frac{n_i-1}{2}\right\rfloor$. Then, using $n = n_1 + n_2 + n_3 + n_4$, we obtain:
    \begin{equation*}
        s(T, \widetilde{T}) = \sum\limits_{i=1}^4 \left\lfloor \frac{n_i-1}{2}\right\rfloor = \sum\limits_{i=1}^4 \left(\frac{n_i-1}{2}\right) - \sum_{i=1}^4 \alpha_i = \frac{n}{2} - 2 - \sum_{i=1}^4 \alpha_i.
    \end{equation*}
    Now let $a\in \{0,1,2\}$ and $n=2k+a$. Then, we obtain: 
    \begin{equation*}
        s(T, \widetilde{T}) = \frac{n}{2} - 2 - \sum_{i=1}^4 \alpha_i = \frac{2k+a}{2} - 2 - \sum_{i=1}^4 \alpha_i = \left(k - 2\right) + \frac{a}{2} - \sum_{i=1}^4 \alpha_i.
    \end{equation*}
    By Proposition \ref{prop:nni},  we know that $A_k(T) = A_k(\widetilde{T})$ if and only if $s(T, \widetilde{T}) < k-2$. With the new formula for $s$ and using that $\alpha_i\in \left\{0, \frac{1}{2}\right\}$ for all $ i\in \{1, \ldots, 4\}$,  we conclude:
     \begin{align*}
        A_k(T) = A_k(\widetilde{T}) &\Longleftrightarrow s(T, \widetilde{T}) < k-2\\ & \Longleftrightarrow
        \left(k - 2\right) + \frac{a}{2} - \sum_{i=1}^4 \alpha_i  < k-2\\& \Longleftrightarrow 
        \sum_{i=1}^4 \alpha_i > \frac{a}{2}\\ &
       \Longleftrightarrow 
       \lvert \{i \in \{1, \ldots, 4\}: \alpha_i > 0\} \rvert > a\\ &
       \Longleftrightarrow
        \lvert \{i \in \{1, \ldots, 4\}: 
       \frac{n_i-1}{2} \not\in \mathbb{Z}\} \rvert > a\\ &
       \Longleftrightarrow 
       \lvert \{i \in \{1, \ldots, 4\}: n_i\text{ even}\} \rvert > a
    \end{align*}

    We now distinguish three cases depending on the value of $a$.
   \begin{itemize} 
   \item If $a=0$ and $n=2k$, then $\lvert \{i \in \{1, \ldots, 4\}: n_i\text{ even}\} \rvert$ has to be even, because $n$ is even (and $n = n_1 + n_2 + n_3 + n_4$). So $\lvert \{i \in \{1, \ldots, 4\}: n_i\text{ even}\} \rvert > 0$ implies $\lvert \{i \in \{1, \ldots, 4\}: n_i\text{ even}\} \rvert > 1$. In this case, $A_k(T) = A_k(\widetilde{T})$ if and only if $\lvert \{i \in \{1, \ldots, 4\}: n_i\text{ even}\} \rvert \in \{2,4\}$. Using the fact that $n$ is even, the latter is the case if and only if at most $2=2-a$ of the numbers $n_i$ with $i\in \{1,\ldots,n\}$ are odd.

   \item If $a=1$ and $n=2k+1$, then $\lvert \{i \in \{1, \ldots, 4\}: n_i\text{ even}\} \rvert$ has to be odd, because $n$ is odd. So $\lvert \{i \in \{1, \ldots, 4\}: n_i\text{ even}\} \rvert > 1$ implies $\lvert \{i \in \{1, \ldots, 4\}: n_i\text{ even}\} \rvert = 3$. In this case, $A_k(T) = A_k(\widetilde{T})$ if and only if exactly one of the numbers $n_i$ (with $i\in \{1, \ldots, 4\}$) is odd. Using the fact that $n$ is odd, the latter holds if and only if at most $1=2-a$ of the numbers $n_i$ with $i\in \{1,\ldots,n\}$ are odd.

  \item  If $a=2$ and $n=2k+2$, then $\lvert \{i \in \{1, \ldots, 4\}: n_i\text{ even}\} \rvert$ has to be even, because $n$ is even. So $\lvert \{i \in \{1, \ldots, 4\}: n_i\text{ even}\} \rvert > 2$ implies $\lvert \{i \in \{1, \ldots, 4\}: n_i\text{ even}\} \rvert = 4$. In this case, $A_k(T) = A_k(\widetilde{T})$ if and only if all of the numbers $n_i$ (with $i\in \{1, \ldots, 4\}$) are even. This is equivalent to the requirement that at most $0=2-a$ of the numbers $n_i$ with $i\in \{1,\ldots,n\}$ are odd.
\end{itemize}
    This completes the proof.
\end{proof}

The main idea of the proof of Theorem \ref{thm:problematicmoves} will be as follows. Given two trees $T$ and $\widetilde{T}$ with $A_k(T) = A_k(\widetilde{T})$, we will first show that we can assume that $T$ and $\widetilde{T}$ have a common cherry, which will enable us to use Corollary \ref{cor:cherryred}. Second,  we will conclude inductively that $\widetilde{T}^2$ can be obtained from $T^2$ by a series of problematic moves and derive a similar statement about the pair $T$ and $\widetilde{T}$. For this purpose we need the following lemma.

\begin{lemma}\label{lem:series_problematic_moves}
    Let $k\in \mathbb{N}_{\geq 2}$ and let $S$ and $\widetilde{S}$ be binary phylogenetic $X$-trees with $\lvert X \rvert = n = 2k+a$ for some $a\in \{0,1,2\}$. Let $X^{\ast} = X\cup \{x,y\}$ such that $x\neq y$ and $x,y\not\in X$. Assume that $\widetilde{S}$ is obtained from $S$ by a series of $k$-problematic moves. Let $e_1\in E(S)$. Let $T$ be the binary phylogenetic $X^{\ast}$-tree obtained from $S$ by subdividing $e_1$ and attaching the cherry $[x,y]$ to $e_1$. 
    
    Then, there exists an edge $e_2\in E(\widetilde{S})$ for which the following statement is true: Let $\widetilde{T}$ be the binary phylogenetic $X^{\ast}$-tree obtained from $\widetilde{S}$ by attaching the cherry $[x,y]$ to $e_2$. Then, $\widetilde{T}$ can be obtained from $T$ by a series of $(k+1)$-problematic moves.
\end{lemma}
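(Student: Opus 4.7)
The plan is to induct on the number of $k$-problematic moves composing the series from $S$ to $\widetilde{S}$; the base case $S \cong \widetilde{S}$ is immediate with $e_2 = e_1$, so after concatenation it suffices to handle a single $k$-problematic move. I therefore assume $\widetilde{S}$ is an NNI neighbor of $S$ obtained by swapping the pending subtrees $T_{A_2}$ (adjacent to $u$) and $T_{B_2}$ (adjacent to $v$) around an inner edge $e = \{u,v\}$, with $T_{A_1}, T_{B_1}$ the two remaining pending subtrees and $n_1,n_2,n_3,n_4$ their leaf counts. By Corollary \ref{cor:cases}, $k$-problematicity is equivalent to at most $2-a$ of the $n_i$ being odd. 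Since $|X^{\ast}| = n+2 = 2(k+1)+a$, the parameter governing the $(k+1)$-problematic condition for NNIs in $T$ or $\widetilde{T}$ is also $a$.

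I would split the argument on the location of $e_1$. \emph{Case 1: $e_1 \neq e$.} Then $e_1$ lies inside one of the four pending subtrees or is one of the four ``stem'' edges joining $u$ or $v$ to the root of a pending subtree. Set $e_2$ to the naturally corresponding edge in $\widetilde{S}$: edges internal to a pending subtree and the stems on the $A_1$- and $B_1$-sides remain unchanged, while the stems $e_{A_2}$ and $e_{B_2}$ correspond to the NNI-created stems $\widetilde{e}_{A_2} = \{v, \text{root}(T_{A_2})\}$ and $\widetilde{e}_{B_2} = \{u, \text{root}(T_{B_2})\}$ in $\widetilde{S}$. A single NNI in $T$ around $e$, of the same type as the original, produces $\widetilde{T}$ (as a direct inspection of the resulting adjacencies confirms). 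The four subtree sizes controlling this new NNI agree with $(n_1,n_2,n_3,n_4)$ in three entries and exceed the fourth by exactly $2$, so their parity profile is identical to the original one; by Corollary \ref{cor:cases} the move is $(k+1)$-problematic.

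\emph{Case 2: $e_1 = e$.} Now $T$ has a subdivision vertex $w$ in the interior of $e$ with the cherry parent $c$ (itself adjacent to $x$ and $y$) hanging off $w$, and no single NNI of $T$ achieves the required swap of $T_{A_2}$ and $T_{B_2}$. I would choose $e_2 = e$ in $\widetilde{S}$ (the inner edge survives the NNI) so that $\widetilde{T}$ has the same central structure as $T$ but with $T_{A_2}$ and $T_{B_2}$ interchanged. The transformation $T \to \widetilde{T}$ is then realised by a three-move relay: (i) NNI around $\{u,w\}$ swapping $T_{A_2}$ with the cherry pendant, (ii) NNI around $\{w,v\}$ swapping $T_{A_2}$ (now at $w$) with $T_{B_2}$, (iii) NNI around $\{u,w\}$ swapping the cherry pendant (now at $u$) with $T_{B_2}$ (now at $w$). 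The four subtree sizes bordering these moves are respectively $(n_1,n_2,n_3+n_4,2)$, $(n_1+2,n_2,n_3,n_4)$ and $(n_1,2,n_4,n_3+n_2)$. In each list the number of odd entries is bounded by the number of odd $n_i$'s: the middle list has identical parities, while in the first and third lists a sum $n_i+n_j$ contributes an odd entry only if exactly one of the two summands is odd, so two odd summands collapse to an even sum and the fixed entry $2$ is harmless. Consequently each of the three moves satisfies the condition of Corollary \ref{cor:cases} with parameter $k+1$ and is $(k+1)$-problematic.

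The hard part is Case 2: the cherry attachment to the NNI edge itself erases the edge-to-edge correspondence that made Case 1 automatic, and the naive one-step NNI in $T$ simply does not exist. The crucial ingredient is the three-step relay, whose intermediate tree shapes are engineered so that the parity condition of Corollary \ref{cor:cases} is preserved throughout; the accompanying parity bookkeeping, though short, is where the proof stands or falls.
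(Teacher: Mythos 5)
Your proposal is correct and follows essentially the same route as the paper: induction on the number of $k$-problematic moves, reduction to a single NNI, a case split according to whether the cherry is attached to the NNI edge itself, and parity bookkeeping via Corollary \ref{cor:cases}. The only substantive difference is the case $e_1=e$: the paper performs one auxiliary $(k+1)$-problematic NNI that shifts the cherry onto an adjacent stem edge and then reduces to the case $e_1\neq e$, whereas you keep $e_2=e$ and realise the transformation by an explicit three-move relay; your relay and its parity check (quadruples $(n_1,n_2,n_3+n_4,2)$, $(n_1+2,n_2,n_3,n_4)$, $(n_1,2,n_4,n_2+n_3)$) are sound.
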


\begin{proof}
Let $d$ be the minimal number of $k$-problematic moves needed to obtain $\widetilde{S}$ from $S$. We use induction on $d$. For the base case $d=0$, for which we have $S \cong \widetilde{S}$, there is nothing to show as we can choose $e_1=e_2$.\footnote{Note that starting the induction with $d=0$ is possible as we included the case $S \cong \widetilde{S}$ in our definition of a series of $k$-problematic moves.} 

We now assume that the statement of the lemma already holds for all pairs of trees which are connected by at most $d-1$ $k$-problematic moves and consider a pair $S$, $\widetilde{S}$ such that $d>0$ $k$-problematic moves are needed to derive $\widetilde{S}$ from $S$. Let $S_2$ be some $X$-tree obtained from $S$ by a series of $d-1$ $k$-problematic moves such that $\widetilde{S}$ is obtained from $S_2$ by a single $k$-problematic move; i.e., $S_2$ is the last tree on a shortest path from $S$ to $\widetilde{S}$ before reaching $\widetilde{S}$. We can then apply the inductive  hypothesis to the pair $S$, $S_2$. Hence, we can assume that there exists some $f\in E(S_2)$ with the following property: Let $T_2$ be the $X^{\ast}$-tree obtained from $S_2$ by attaching cherry $[x,y]$ to $f$. Then, $T_2$ is obtained from $T$ by a series of $(k+1)$-problematic moves. So now the only thing that remains to be shown is the existence of some $e_2\in E(\widetilde{S})$ as described in the lemma. 

Let $e = \{v,w\}\in E(S_2)$ be the inner edge used by a $k$-problematic move applied to $S_2$ to obtain $\widetilde{S}$. Consider the $X$-split $A\vert B$ induced by $e$, and let $S_A$ and $S_B$ with taxon sets $A$ and $B$, respectively, be the two rooted binary subtrees resulting from $S_2$ when $e$ is removed. Without loss of generality, $v$ is the root of $S_A$ and $w$ is the root of $S_B$. Furthermore, let $S_{A_1}, S_{A_2}$ be the maximal pending subtrees of $S_A$ and $S_{B_1}, S_{B_2}$ be the maximal pending subtrees of $S_B$.  
Then there are edges $e_1 = \{v, v_1\}, e_2 = \{v,v_2\}, e_3 = \{w,w_1\}, e_4 = \{w, w_2\}$ such that $v_1$ is the root of $S_{A_1}$, $v_2$ is the root of $S_{A_{2}}$, $w_1$ is the root of $S_{B_1}$ and $w_2$ is the root of $S_{B_2}$. Without loss of generality, we may assume that the $k$-problematic move applied to $S_2$ to obtain $\widetilde{S}$ deletes $e_1$ and $e_3$ and replaces them with $f_1 = \{w,v_1\}$ and $f_3 = \{v, w_1\}$.

Now we show how to obtain a binary phylogenetic $X^{\ast}$-tree $\widetilde{T}$ as described in the lemma by a series of $(k+1)$-problematic moves from $T_2$. We distinguish the case in which the edges $e$ (the one used for the $k$-problematic move from $S_2$ to $\widetilde{S}$) and $f$ (the one to which the cherry $[x,y]$ gets attached to derive $T_2$ from $S_2$) are different from the case in which they are the same.

\emph{Case $e \neq f$:} In this case, we have either $f\in E(S_A)$ or $f\in E(S_B)$. Without loss of generality, we may assume $f\in E(S_A)$. As attaching $[x,y]$ to $f$ in order to form $T_2$ does not affect $e$, we have $e\in E(T_2)$. Now deleting $e$ from $T_2$ leads to two rooted phylogenetic trees $T_{A^{\ast}}$ and $T_B$ with taxon sets $A^{\ast} = A\cup \{x,y\}$ and $B$, respectively. $T_{A^{\ast}}$ has two maximal pending  subtrees $T_{A_1^{\ast}}$ and $T_{A_2^{\ast}}$ with taxon sets $A_1^{\ast}$ and $A_2^{\ast}$. Then either $A_1^{\ast} = A_1$ and $A_2^{\ast} = A_2\cup \{x,y\}$ or $A_1^{\ast} = A_1\cup \{x,y\}$ and $A_2^{\ast} = A_2$. Similarly, $T_B$ has maximal pending subtrees $T_{B_1}$ and $T_{B_2}$, which actually coincide with $S_{B_1}$ and $
S_{B_2}$, respectively. 

Consider the case $A_1^{\ast} = A_1$ and $A_2^{\ast} = A_2\cup \{x,y\}$.   Let $n_1 = \vert A_1\vert = \vert A^{\ast}_1\vert$, $n_2 = \vert A_2\vert$, $n_3 = \vert B_1\vert$, $n_4 = \vert B_2\vert$,  and $n^{\ast}_2 = \vert A^{\ast}_2\vert = \vert A_2\vert + 2$. In the following,  let $\widetilde{T}$ be the result of the NNI move operating on $T_2$ defined by $e$ and by switching $T_{A_1^\ast}$ with $T_{B_1}$. We want to show that this NNI move is a $(k+1)$-problematic move (because this move, together with the $(k+1)$-problematic moves from $T$ to $T_2$, will form the desired series of $(k+1)$-problematic moves from $T$ to $\widetilde{T}$). 

Now, as $e$ is defining a $k$-problematic move from $S_2$ to $\widetilde{S}$ and as we have $n=2k+a$ by assumption, by Corollary \ref{cor:cases} we know that at most $2-a$ elements of the numbers $n_1, n_2, n_3, n_4$ are odd. As adding $2$ to one of them does not change the parity, this immediately implies that at most $2-a$ elements of the numbers $n_1, n^{\ast}_2, n_3, n_4$ are odd. But this is exactly the condition of Corollary \ref{cor:cases} for $\vert X^{\ast} \vert = n+2 = 2(k+1)+a$. So the NNI move from $T_2$ to $\widetilde{T}$ is indeed a $(k+1)$-problematic move. Using the assumption that $T_2$ is obtained from $T$ by a series of problematic moves, we conclude that $\widetilde{T}$ is obtained from $T$ by a series of problematic moves. The same argument works if we consider the case $A_1^{\ast} = A_1\cup \{x,y\}, A_2^{\ast} = A_2$ and replace the quadruple $n_1, n^{\ast}_2, n_3, n_4$ with $n^{\ast}_1, n_2, n_3, n_4$ and $n^{\ast}_1 = n_1 + 2$ in the course of the argument.

It remains to show that $\widetilde{T}$ is obtained from $\widetilde{S}$ and some $e'\in E(\widetilde{S})$ by attaching the cherry $[x,y]$ to $e'$. We know that $T_2$ is obtained from $S_2$ and $f$ in this way. Now let us assume again at first $A_1^{\ast} = A_1$ and $A_2^{\ast} = A_2\cup \{x,y\}$. This assumption implies that $f\in E(S_{A_2}) \cup \{e_2\}$ and that $T_{A^{\ast}_2}$ is obtained from $S_{A_2}$ by attaching cherry $[x,y]$ to $f$. Remember that the previously defined NNI move from $S_2$ to $\widetilde{S}$ leaves all edges unchanged with the exception of $e_1$ and $e_3$. But $e_1, e_3\not\in E(S_{A_2}) \cup \{e_2\}$ by definition of $S_{A_2}$, and hence $E(S_{A_2}) \cup \{e_2\} \subset E(\widetilde{S})$. Thus, the NNI move from $S_2$ to $\widetilde{S}$ leaves the subtree with edge set $E(S_{A_2}) \cup \{e_2\}$ unchanged. Exactly the same argumentation implies that the NNI move from $T_2$ to $\widetilde{T}$ (which was defined by $e$ and leaves all edges unchanged with the exception of $e_1$ and $e_3$)  leaves the subtree with edge set $E(T_{A^{\ast}_2}) \cup \{e_2\}$ unchanged. As $T_2$ was obtained from $S_2$ by attaching $[x,y]$ to $f\in E(S_{A_2}) \cup \{e_2\}$, we can now set $e' = f$ and see that $\widetilde{T}$ is obtained from $\widetilde{S}$ by attaching $[x,y]$ to $e'$. 

The argumentation for the case $A_1^{\ast} = A_1\cup \{x,y\}, A_2^{\ast} = A_2$ contains an additional subtlety due to the possibility that we might have $f=e_1$, in which case $f$ would be deleted by the NNI operation leading from $S_2$ to $\widetilde{S}$. However, let us first consider the case $f\neq e_1$. Then, $f\in E(S_{A_1})$, and the NNI move from $S_2$ to $\widetilde{S}$ leaves $S_{A_1}$ unchanged. As before, one can argue that the NNI move from $T_2$ to $\widetilde{T}$ (which is defined by $e$ and leaves all edges unchanged with the exception of $e_1$ and $e_3$)  leaves the rooted subtree $T_{A^{\ast}_1}$ unchanged. In this case, again we can set $e' = f$ and see that $\widetilde{T}$ is obtained from $\widetilde{S}$ by attaching $[x,y]$ to $e'$. 

Now we consider the case $f=e_1$, which is depicted in Figure \ref{fig:LemmaProblematicMovesA}. In this case, let $u$ be the vertex resulting from subdividing $f$ such that $u$ is adjacent to the common neighbor of $x$ and $y$ in $T_2$. Then $u$ is root of $T_{A^{\ast}_1}$, and the NNI move from $T_2$ to $\widetilde{T}$ can be described as leaving all edges unchanged with the exception of $e^{\ast}_1 = \{u, v\}$ and $e_3$ which are replaced with $f^{\ast}_1 = \{u, w\}$ and $f_3=\{v,w_1\}$. As $T_2$ is obtained from $S_2$ by attaching $[x,y]$ to $e_1$,  we have that $\widetilde{T}$ is obtained from $\widetilde{S}$ by attaching $[x,y]$ to $f_1$. But this implies that we can set $e' = f_1$ and see that $\widetilde{T}$ is obtained from $\widetilde{S}$ by attaching $[x,y]$ to $e'$. 

\emph{Case $e=f$:} We will show that this case can be reduced to the previous case. Let us start with subdividing edge $f=e=\{v,w\}$ with a degree-2 vertex $u$, which results in two new edges $f_1 = \{v,u\}$ and $f_2 = \{u,w\}$. After attaching cherry $[x,y]$ to $u$, the common neighbor $z$ of $x$ and $y$ is also adjacent to $u$. Deleting $f_1$ leads to two rooted subtrees $T_A$ and $T_{B^{\ast}}$ such that $T_A$ has taxon set $A$ and root $v$ and $T_{B^{\ast}}$ has taxon set $B^{\ast} = B \cup \{x,y\}$ and root $u$. Deleting $u,v$ and the edges incident with $u$ or $v$ leads to the rooted subtrees $T_{A_1}, T_{A_2}, T_B$ and $T_{x,y}$ such that $T_{A_1}$ has taxon set $A_1$, $T_{A_2}$ has taxon set $A_2$, $T_B$ has taxon set $B$ and $T_{x,y}$ has taxon set $\{x,y\}$ and root $z$. Using Corollary \ref{cor:cases}, we will now show that $f_1$ defines a $(k+1)$-problematic move. 
In order to do so, consider the quadruple $n_1 = \vert A_1 \vert, n_2 = \vert A_2 \vert, n_3+n_4 = \vert B\vert, 2 = \vert\{x,y\}\vert$ consisting of the sizes of the taxon sets of $T_{A_1}, T_{A_2}, T_B$ and $T_{x,y}$, respectively. 

Using the assumption $n = 2k+a$ for some $a\in \{0,1,2\}$, we conclude $\vert X^{\ast} \vert = n+2 = 2(k+1) + a$. By Corollary \ref{cor:cases}, we see that $A_k(S_2) = A_k(\widetilde{S})$ implies that at most $2-a$ elements of the quadruple $n_1, n_2, n_3, n_4$ are odd. But then by a simple parity argument, at most $2-a$ elements of the triple $n_1, n_2, n_3+n_4$ are odd. This implies that at most $2-a$ elements of the quadruple $n_1, n_2, n_3+n_4, 2$ are odd, which is exactly the condition of Corollary \ref{cor:cases}. Thus, $f_1$ defines only $(k+1)$-problematic moves.

One of the two possible NNI moves defined by $f_1$ can be characterized as replacing the edges $\{v,v_1\}, \{u,z\}\in E(T_2)$ with the edges $\{v,z\}, \{u,v_1\}$. Let $T'_2$ be the binary phylogenetic $X^{\ast}$-tree obtained in this way (cf. Figure \ref{fig:LemmaProblematicMovesB}). As $T_2$ is obtained from $T$ by a series of $(k+1)$-problematic moves and as $T'_2$ is obtained from $T_2$ by a $(k+1)$-problematic move, clearly $T'_2$ is obtained from $T$ by a series of $(k+1)$-problematic moves. But now there is another way to obtain $T'_2$: We could have first subdivided the edge $e_2=\{v, v_2\} \in E(S_2)$ and then attached $[x,y]$ to this edge. So now if we consider $T$ and $T_2'$ instead of $T$ and $T_2$, this implies that we have reduced the case $e=f$ to a case where the NNI edge and the attachment edge for the cherry are different, i.e., $e \neq f$. As this case was already proven, this completes the proof.
\end{proof}

\begin{figure}
    \centering
\includegraphics{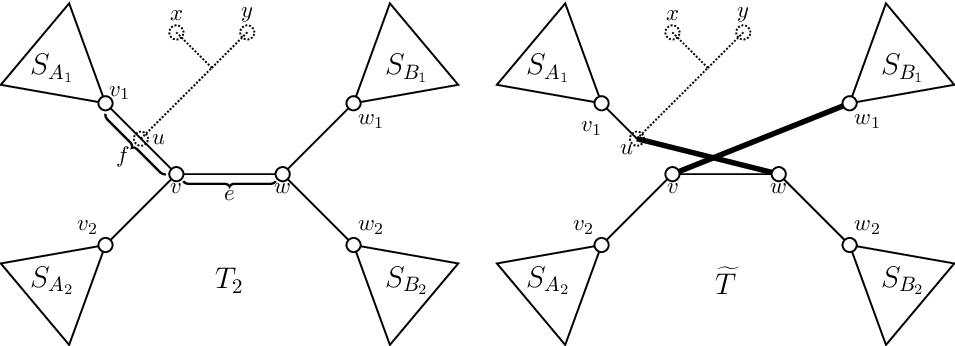}
\caption{One of the subcases of the case $e\neq f$ in the proof of Lemma \ref{lem:series_problematic_moves}. Here, $T_2$ is transformed to $\widetilde{T}$ by replacing the edges $\{u, v\}$ and $\{w,w_1\}$ with $\{u, w\}$ and $\{v, w_1\}$ (thick lines). If we delete the edges indicated by dotted lines and suppress the resulting degree-$2$ vertex $u$, we obtain $S_2$ (left) as well as $\widetilde{S}$ (right).}
\label{fig:LemmaProblematicMovesA}
\end{figure}

\begin{figure}
\includegraphics{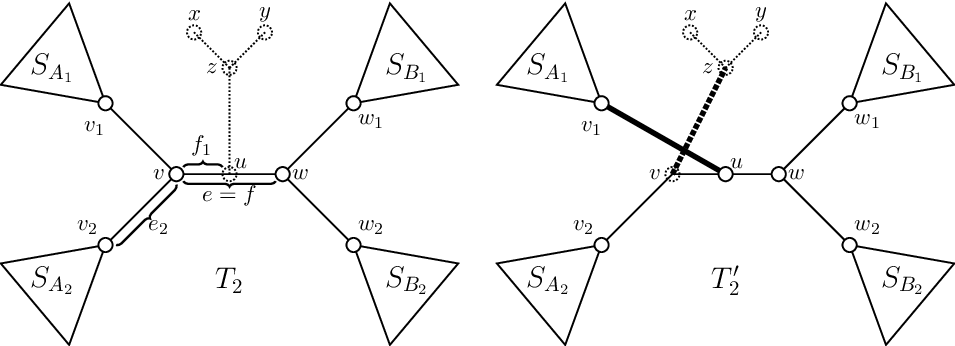}
\caption{Case $e = f$ in the proof of Lemma \ref{lem:series_problematic_moves}. Here,  $T_2$ is turned into $T'_2$ by replacing the edges $\{u, z\}$ and $\{v,v_1\}$ with $\{u, v_1\}$ and $\{v, z\}$ (thick edges). If we delete the edges indicated by dotted lines and suppress the resulting degree-$2$ vertex ($u$ in the left tree, $v$ in the right tree), then we obtain $S_2$ in both cases. However, cherry $[x,y]$ is attached to different edges of $S_2$ in each case ($\{v,w\}$ on the left and $\{u,v_2\}$ on the right). } 
\label{fig:LemmaProblematicMovesB}
\end{figure}

Recall that Proposition \ref{prop:cherry} stated that in case $n \geq 2k+1$, if $A_k(T) = A_k(\widetilde{T})$ for a pair of binary phylogenetic $X$-trees $T$ and $\widetilde{T}$, then they have all their cherries in common. We have already seen in Remark \ref{rem2k} that the same does not necessarily hold if $n=2k$. Therefore, we will require different tools to tackle the case $n=2k$, which will be introduced by the following subsection.

\paragraph{The case \texorpdfstring{$n=2k$}{n=2k}}

\par\vspace{0.5cm}\mbox{}

We start with the following slightly technical lemma, which links $A_k$-alignments to leaf-to-leaf paths.

\begin{lemma}
    \label{lem:alignment_n=2k}
    Let $k\in \mathbb{N}_{\geq 1}$ and let $T$ be a binary phylogenetic $X$-tree with $\vert X\vert = n = 2k$. Let $P_1, \ldots, P_k$ be edge-disjoint leaf-to-leaf paths (which exist due to Lemma \ref{lem:leafchoice}) such that each $P_i$ has endpoints $a_i, b_i$ (for $i\in \{1, \ldots, k\}$). For each $p\in \{0,1\}^k$, we then construct the following binary character $f_p: X \rightarrow \{a,b\}$:
    \begin{equation*} 
    f_p(v)=\begin{cases} a & \text{if ($v = a_i$ and $p_i = 0$) or ($v= b_i$ and $p_i = 1$) for some $i\in \{1, \ldots, k\}$ },\\ b  &\text{else.}\end{cases}
    \end{equation*}
    Then, $p\mapsto f_p$ gives us a bijection between $\{0,1\}^k$ and $A_k(T)$.
\end{lemma}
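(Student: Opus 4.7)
The plan is to prove the statement in three steps: first verify that each $f_p$ really lies in $A_k(T)$, then show the map $p \mapsto f_p$ is injective, and finally use a counting argument via Theorem \ref{thm:lengthAk} to upgrade injectivity to bijectivity. A preliminary observation to record: since the $k$ paths $P_1,\ldots,P_k$ are edge-disjoint leaf-to-leaf paths in a binary tree and each leaf has degree $1$, every leaf can be an endpoint of at most one of these paths. As there are exactly $2k$ endpoints counted with multiplicity and $|X|=2k$ leaves in total, the multiset $\{a_1,b_1,\ldots,a_k,b_k\}$ exhausts $X$ with no repetitions, so the pairs $\{a_i,b_i\}$ form a partition of $X$.

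For step one, fix $p\in \{0,1\}^k$ and consider the sets $A_{f_p} = f_p^{-1}(\{a\})$ and $B_{f_p}=f_p^{-1}(\{b\})$. By the partition property above, $f_p$ selects exactly one leaf out of each pair $\{a_i,b_i\}$ for state $a$ and the other for state $b$, hence $|A_{f_p}|=|B_{f_p}|=k$. The upper bound $l(f_p,T)\leq k$ follows from assigning all interior vertices state $b$ in an extension, which costs at most one change per leaf in state $a$. For the matching lower bound, note that each $P_i$ has one endpoint in $A_{f_p}$ and one in $B_{f_p}$ by construction, so $P_1,\ldots,P_k$ constitute $k$ edge-disjoint $A_{f_p}$--$B_{f_p}$-paths; by Proposition \ref{prop:menger0}, this forces $l(f_p,T)\geq k$. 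Consequently $l(f_p,T)=k$ and $f_p\in A_k(T)$.

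For step two, suppose $p\neq q$ and pick $i$ with $p_i\neq q_i$. Exactly one of $a_i,b_i$ is assigned $a$ by $f_p$, and the opposite one is assigned $a$ by $f_q$; hence $f_p$ and $f_q$ disagree on at least one leaf, so $f_p\neq f_q$. Thus $p\mapsto f_p$ is injective.

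For step three, substitute $n=2k$ into Theorem \ref{thm:lengthAk} to compute
\[
|A_k(T)| = \frac{2n-3k}{k}\binom{n-k-1}{k-1}\cdot 2^k = \frac{k}{k}\binom{k-1}{k-1}\cdot 2^k = 2^k = \bigl|\{0,1\}^k\bigr|.
\]
Since $p\mapsto f_p$ is an injection between finite sets of equal cardinality, it is a bijection. The only step requiring real care is step one, where one must verify that each pair $\{a_i,b_i\}$ is split by $f_p$ so that the supplied paths genuinely witness the Menger lower bound; everything else is either a straightforward combinatorial observation or a direct application of the counting formula.
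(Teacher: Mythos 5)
Your proposal is correct and follows essentially the same route as the paper's proof: establish $f_p\in A_k(T)$ via Proposition \ref{prop:menger0}, check injectivity, and then invoke the count $|A_k(T)|=2^k$ from Theorem \ref{thm:lengthAk} to conclude surjectivity. Your preliminary observation that the endpoint pairs $\{a_i,b_i\}$ partition $X$ (and the resulting explicit upper bound $l(f_p,T)\leq k$) is a slightly more careful spelling-out of facts the paper uses implicitly, but it is not a different argument.
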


\begin{proof}
    We first show that $f_p\in A_k(T)$ for all $p\in \{0,1\}^k$. From the definition of $f_p$, we can immediately conclude that $f_p(a_i) \neq f_p(b_i)$ for each $i\in \{1,\ldots, k\}$. Also, $P_1, \ldots, P_k$ were chosen to be edge-disjoint leaf-to-leaf-paths, which by the definition of $f_p$ now also are edge-disjoint $A_{f_p}$-$B_{f_p}$-paths. Thus,  Proposition \ref{prop:menger0} implies $l(f_p, T) = k$ and thus $f_p \in A_k(T)$. 

    The map $p\mapsto f_p$ is clearly injective. To see this, let $p, q\in \{0,1\}^k$ be such that $p\neq q$, so there must be some $i\in \{1, \ldots, k\}$ for which $p$ and $q$ differ. Without loss of generality, assume  $p_i = 0$ and $ q_i = 1$. Then, by definition of $f_p$ and $f_q$, we conclude $f_p(a_i) = a \neq b = f_q(a_i)$. Thus,  $f_p \neq f_q$, which shows injectivity. 

However, this implies that the $2^k$ possible choices of $p$ induce $2^k$ different choices of elements of $A_k(T)$. Using Theorem \ref{thm:lengthAk} with $n=2k$, we know that $\vert A_k(T) \vert = 2^k$. So in fact, \emph{all} elements of $A_k(T)$ are covered by the different choices of $p$, which shows surjectivity and thus completes the proof.
\end{proof}

Next, we show that the choice of paths in Lemma \ref{lem:alignment_n=2k}, whose existence is guaranteed by Lemma \ref{lem:leafchoice}, is actually unique if $n=2k$.

\begin{lemma}\label{lem:2kuniquepaths} Let $k\in \mathbb{N}_{\geq 1}$ and let $T$ be a binary phylogenetic $X$-tree with $\vert X\vert = n = 2k$. Let $\mathcal{P}=\{P_1, \ldots, P_k\}$ as well as $\mathcal{Q}=\{Q_1,\ldots, Q_k\}$ be sets of edge-disjoint leaf-to-leaf paths in $T$. Then, we have $\mathcal{P}=\mathcal{Q}$. In other words, there is only one set of $k$ edge-disjoint leaf-to-leaf paths in $T$.
\end{lemma}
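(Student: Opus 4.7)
The plan is to proceed by induction on $k$, with the base case $k=1$ being trivial since a tree with two leaves consists of a single edge and hence admits a unique leaf-to-leaf path. For the inductive step, suppose the statement holds for all trees on $2(k-1)$ leaves, and let $T$ be a binary phylogenetic $X$-tree with $|X| = 2k$ and sets $\mathcal{P}, \mathcal{Q}$ as in the statement. A preliminary counting argument will show that every leaf must serve as an endpoint of exactly one path in $\mathcal{P}$ (and analogously in $\mathcal{Q}$): each of the $k$ paths contributes exactly two endpoint-slots, giving $2k = n$ slots in total, and no leaf can be the endpoint of two paths because its unique incident edge could not then be shared.

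The central structural step, and what I expect to be the heart of the argument, is to show that for every cherry $[x,y]$ in $T$, the three-vertex path $x, v_x, y$ must lie in both $\mathcal{P}$ and $\mathcal{Q}$. The plan is to argue by contradiction: if the unique path $P\in\mathcal{P}$ having $x$ as an endpoint did not end at $y$, then $P$ would begin $x, v_x, z', \ldots$, consuming the edge $\{v_x, z'\}$, where $z'$ denotes the third neighbor of $v_x$. But the unique path $P'\in \mathcal{P}$ having $y$ as endpoint must begin with the edge $\{y,v_x\}$ and then leave $v_x$ via some other edge incident to $v_x$; since both $\{v_x,x\}$ and $\{v_x, z'\}$ are already used by $P$, no such edge is available, contradicting edge-disjointness. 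Thus the cherry-path $x,v_x,y$ is forced into every such collection.

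With this observation in hand, I would then apply a cherry reduction of type 2 using $[x,y]$ to obtain $T^2$, a binary phylogenetic tree on $2(k-1)$ leaves. The remaining $k-1$ paths of $\mathcal{P}\setminus\{x,v_x,y\}$ use none of the three edges incident to $v_x$ (the third edge $\{v_x,z'\}$ cannot be used by any other path, since that would require the path to pass through $v_x$ and use a second edge at $v_x$, but the other two are already consumed). Hence these $k-1$ paths live entirely in the subgraph of $T$ obtained by deleting $x, y, v_x$ and the incident edges, and they project to $k-1$ edge-disjoint leaf-to-leaf paths in $T^2$ (the suppression of the degree-2 vertex $z'$ merely merges two consecutive edges along any path passing through it, if any). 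The same holds for $\mathcal{Q}$, and by the inductive hypothesis these two projected collections coincide in $T^2$. Re-attaching the cherry $[x,y]$ together with the common cherry-path $x,v_x,y$ then yields $\mathcal{P}=\mathcal{Q}$ in $T$.

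The main obstacle I anticipate is the cherry-forcing argument of the second paragraph, since it is the only place where a genuine structural property of $\mathcal{P}$ is exploited, as opposed to mere counting or induction. Once this lemma-like observation is secured, the inductive reduction via cherry reduction of type 2 should go through cleanly.
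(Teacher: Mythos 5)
Your proposal is correct and takes essentially the same route as the paper: induction on $k$ via a type-2 cherry reduction, with the key step being that the endpoint-counting argument plus the edge conflict at the vertex behind the cherry forces the path $x,v_x,y$ into every collection of $k$ edge-disjoint leaf-to-leaf paths. The only cosmetic difference is that the paper builds the unique collection bottom-up from the reduced tree and then argues no other collection is possible, whereas you compare two arbitrary collections directly; the substance is identical.
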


\begin{proof} We prove the statement by induction on $k$. For $k=1$, $T$ consists of only one edge, which clearly shows that there is precisely one leaf-to-leaf path, so there is nothing to show. So let us assume the statement is true for $k$ and let us assume we now have a binary phylogenetic tree $T$ with $n=2(k+1)=2k+2\geq 4$ leaves. Then, $T$ has a cherry $[x,y]$, and we perform a cherry reduction of type 2 using this cherry in order to derive tree $T'$. $T'$ has $2k+2-2=2k$ leaves, so by the inductive hypothesis, there is precisely one set of $k$ edge-disjoint leaf-to-leaf paths $\mathcal{P}'=\{P_1,\ldots,P_k\}$ in $T'$. Clearly, each of these paths corresponds to a path in $T$. We now add path $P_{k+1}=x,z,y$ to $\mathcal{P}'$ in order to derive a set $\mathcal{P}$ of $k+1$ edge-disjoint leaf-to-leaf paths in $T$, where $z$ is the unique vertex adjacent to both $x$ and $y$. So the only remaining thing to show is uniqueness. However, this is clear as \emph{every} collection of $k+1$ edge-disjoint leaf to leaf paths in $T$ with $n=2(k+1)$ needs to cover \emph{all} leaves of $T$, so every such collection needs to contain path $P_{k+1}=x,z,y$. This is due to the fact that if $x$ or $y$ were endpoints of any other path $P'$ in such a collection, the respective other leaf of the cherry could not be reached anymore by any other leaf-to-leaf path $P''$ edge-disjoint to $P'$ (as both $P'$ and $P''$ would necessarily contain the edge $e=\{z,z'\}$, where $z'$ is the unique vertex with the properties that $z'$ is adjacent to $z$ and $z\neq x,y$). Thus, the way $\mathcal{P}$ was constructed is the only way to derive $k+1$ edge-disjoint leaf-to-leaf paths in $T$, which completes the proof. \end{proof}

The following proposition gives a first characterization of pairs of trees with identical $A_k$-alignments for the case $n=2k$.

\begin{proposition}\label{prop:characterization2k} Let $T$ and $T'$ be two binary phylogenetic $X$-trees with $\vert X \vert = 2k$, where $k \in \mathbb{N}_{\geq 1}$. Let $P_1,\ldots,P_k$ and $P_1',\ldots,P_k'$ be the unique collections of $k$ edge-disjoint leaf-to-leaf-paths for $T$ and $T'$, respectively, as derived by Lemma \ref{lem:2kuniquepaths}. Let $\mathcal{E}=\{\{ a_i,b_i\}: \mbox{ $a_i$ and $b_i$ are endpoints of $P_i$} \}$ and let $\mathcal{E}'=\{\{ a_i',b_i'\}: \mbox{ $a_i'$ and $b_i'$ are endpoints of $P_i'$} \}$. Then, we have $A_k(T)=A_k(T')$ if and only if $\mathcal{E}=\mathcal{E}'$.
\end{proposition}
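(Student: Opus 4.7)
The plan is to leverage Lemma \ref{lem:alignment_n=2k}, which exactly parametrizes $A_k(T)$ (and analogously $A_k(T')$) by $\{0,1\}^k$ using the unique collection $P_1,\ldots,P_k$ (respectively $P_1',\ldots,P_k'$). The key observation is that the image of the bijection $p \mapsto f_p$ depends only on the unordered endpoint pairs, i.e., on $\mathcal{E}$, and not on any particular labeling of $a_i$ versus $b_i$. Concretely, as $p_i$ ranges over $\{0,1\}$ while the remaining coordinates are held fixed, one obtains both possible assignments of $\{a,b\}$ to the pair $\{a_i,b_i\}$; hence the set $\{f_p : p\in \{0,1\}^k\}$ is precisely the collection of binary characters $f\colon X\to \{a,b\}$ which separate each pair in $\mathcal{E}$, meaning $f(a_i)\neq f(b_i)$ for all $i$.

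For the forward direction, if $\mathcal{E} = \mathcal{E}'$, the above characterization shows that $A_k(T)$ and $A_k(T')$ are both equal to the set of binary characters separating every pair in $\mathcal{E} = \mathcal{E}'$, so they coincide.

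For the reverse direction, I would introduce, for a generic alignment $A$ on $X$, the derived set
\[
\mathcal{E}(A) \;=\; \bigl\{\{x,y\} \subseteq X : f(x)\neq f(y) \text{ for all } f\in A\bigr\}.
\]
Then it suffices to prove $\mathcal{E}(A_k(T)) = \mathcal{E}$ (and likewise for $T'$), since this immediately yields $\mathcal{E} = \mathcal{E}(A_k(T)) = \mathcal{E}(A_k(T')) = \mathcal{E}'$ whenever $A_k(T) = A_k(T')$. The inclusion $\mathcal{E} \subseteq \mathcal{E}(A_k(T))$ is immediate from Lemma \ref{lem:alignment_n=2k}, since by construction every $f_p$ separates each pair $\{a_i,b_i\}$. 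For the opposite inclusion, suppose $\{x,y\}\notin \mathcal{E}$. Because $\mathcal{E}$ is a perfect matching of $X$ (the $k$ paths use all $2k$ leaves as endpoints), $x$ and $y$ must lie in two distinct pairs, say $x\in \{a_i,b_i\}$ and $y\in \{a_j,b_j\}$ with $i\neq j$. One then selects $p\in \{0,1\}^k$ by choosing $p_i$ and $p_j$ so that $f_p(x) = f_p(y)$ (which is possible by inspecting the four cases $x\in\{a_i,b_i\}$, $y\in\{a_j,b_j\}$), producing an $f_p\in A_k(T)$ witnessing $\{x,y\}\notin \mathcal{E}(A_k(T))$.

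There is no real obstacle in this proof: the whole argument is an unpacking of Lemma \ref{lem:alignment_n=2k} combined with the observation that the unordered matching $\mathcal{E}$ can be recovered as an intrinsic invariant of the alignment. The only mild subtlety is confirming that $\mathcal{E}$ is indeed a perfect matching on $X$ so that \enquote{not in $\mathcal{E}$} implies \enquote{in distinct pairs}; this follows from $\vert X\vert = 2k$ together with edge-disjointness of the $k$ paths, which forces them to use all $2k$ leaves.
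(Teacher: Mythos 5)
Your proposal is correct and follows essentially the same route as the paper: both arguments recover the endpoint matching intrinsically from the alignment (a pair of leaves lies in $\mathcal{E}$ exactly when every character in $A_k(T)$ separates it) using Lemma \ref{lem:alignment_n=2k} and the uniqueness from Lemma \ref{lem:2kuniquepaths}, and both note that the set of characters $f_p$ depends only on the unordered endpoint pairs. Your write-up merely makes explicit the perfect-matching observation and the construction of a non-separating character that the paper treats more briefly.
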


\begin{proof} We first note that two leaves $x,y \in X$ are endpoints of one of the paths $P_i$ (or $P_i'$, respectively) for some $i \in \{1,\ldots,k\}$ if and only if they are assigned a different state by every $f\in A_k(T)$ (or $f \in A_k(T')$, respectively). This is due to Lemma \ref{lem:alignment_n=2k}, which shows that there is a bijection from the $\{0,1\}^k$ to $A_k(T)$ (or $A_k(T')$, respectively),  and this bijection is based on the (by Lemma \ref{lem:2kuniquepaths}: unique) paths $P_1,\ldots,P_k$ (or $P_1',\ldots,P_k'$), which assigns different states to each endpoint of each path. So it is quite easy to derive the endpoints of each such path from an $A_k$-alignment: For each pair of leaves $x,y \in X$ which differ for \emph{every} character in the alignment, there has to be a path connecting them. All other pairs of leaves are not connected by paths in the unique set of $k$ edge-disjoint leaf-to-leaf paths. This immediately shows that if $A_k(T)=A_k(T')$, the sets $\mathcal{E}$ and $\mathcal{E}'$, which contain the pairs of endpoints of said paths, must be equal, too.
Moreover, if we have $\mathcal{E}=\mathcal{E}'$, we know that the pairs of leaves that get connected by said paths coincide. By Lemma \ref{lem:alignment_n=2k}, the elements of $A_k(T)$ are precisely the $2^k$ choices of $f_p$, which are identical for $T$ and $T'$ if  $\mathcal{E}=\mathcal{E}'$, as the set containing all $f_p$ only depends on the endpoints of the path set under consideration. This shows that $A_k(T)=A_k(T')$ and thus completes the proof.
\end{proof}

We need two more technical corollaries resulting from Lemma \ref{lem:2kuniquepaths} before we can continue.  

\begin{corollary} \label{cor:even_subtrees}
    Let $k\in \mathbb{N}_{\geq 1}$, and let $T$ be a binary phylogenetic $X$-tree with $\vert X\vert = n = 2k$. Furthermore, let $\mathcal{P}$ be the set of $k$ edge-disjoint leaf-to-leaf paths in $T$ (which is unique by Lemma \ref{lem:2kuniquepaths}), and let $P\in \mathcal{P}$. Assume $P = a, \beta_1, \ldots, \beta_m , b$, where $a, b$ are endpoints of $P$ and the $\beta_i$'s are internal vertices.  Denote by $T_{A_1}, \ldots, T_{A_m}$  the rooted binary phylogenetic trees resulting from deleting the vertices of $P$ from $T$, such that each $T_{A_i}$ has taxon set $A_i$ for $i\in \{1, \ldots, m\}$. Let $n_i = \vert A_i \vert$ for $i\in \{1, \ldots, m\}$. Then, we have: all of the numbers $n_i$ (for ($i\in \{1, \ldots, m\}$) are even.
\end{corollary}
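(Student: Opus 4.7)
The plan is to prove the parity of each $n_i$ by a two-step argument based on counting path-endpoints and on the edge-disjointness of $\mathcal{P}$. The collection $\mathcal{P}$ supplies exactly $2k = n$ endpoints, distributed over the $n = 2k$ leaves of $T$. Since every leaf has a unique incident edge, which can belong to at most one of the edge-disjoint paths in $\mathcal{P}$, each leaf is an endpoint of at most one path in $\mathcal{P}$. Matching the two counts, every leaf of $T$ is in fact the endpoint of \emph{exactly} one path in $\mathcal{P}$. In particular, $a$ and $b$ are used up by $P$ itself, so no $Q \in \mathcal{P} \setminus \{P\}$ has $a$ or $b$ as an endpoint.

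Next, I would localize the remaining paths: every path $Q \in \mathcal{P} \setminus \{P\}$ has both endpoints in the same subtree $T_{A_i}$. Note that $T_{A_1}, \ldots, T_{A_m}$ are precisely the connected components of the graph obtained from $T$ by removing the vertices of $P$. Hence, for any $u \in A_i$ and $v \in A_j$ with $i \neq j$, the unique $u$-$v$-path in $T$ must enter $P$ at $\beta_i$, traverse the subpath of $P$ from $\beta_i$ to $\beta_j$, and then exit into $T_{A_j}$. This subpath contains at least one edge of $P$, which would contradict the edge-disjointness of $Q$ with $P$. So indeed every $Q \in \mathcal{P} \setminus \{P\}$ has both endpoints in a single $A_i$.

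Combining these two ingredients yields the claim. For each $i \in \{1, \ldots, m\}$, let $k_i$ denote the number of paths in $\mathcal{P} \setminus \{P\}$ whose endpoints lie in $A_i$. By the first step, every leaf of $A_i$ is the endpoint of exactly one path, and by the second step those paths lie in $\mathcal{P} \setminus \{P\}$ and contribute two endpoints to $A_i$ each. Therefore $n_i = 2 k_i$ is even, as desired.

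The step I expect to require the most care is the localization argument, which hinges on the fact that crossing from $T_{A_i}$ to $T_{A_j}$ for $i \neq j$ necessarily consumes an edge of $P$. This is a straightforward consequence of the uniqueness of paths in the tree $T$ together with the observation that the $T_{A_i}$'s are exactly the components of $T$ minus the vertices of $P$; once this is clearly stated, the rest of the proof is pure bookkeeping on endpoint counts.
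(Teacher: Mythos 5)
Your proof is correct and rests on the same two ingredients as the paper's: every leaf is an endpoint of exactly one of the $k$ paths (since $n=2k$), and any path joining leaves in different components $T_{A_i}$, $T_{A_j}$ would have to use an edge of $P$, violating edge-disjointness. The paper phrases this as a contradiction (an odd $n_j$ forces such a crossing path), while you argue directly via pairing, but the argument is essentially identical.
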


\begin{proof}
    Seeking a contradiction, we assume that there is a $j \in \{1,\ldots,m\}$ such that the number of leaves $n_j$ of $T_{A_j}$ (the subtree adjacent to $\beta_j$) is odd. 
    Then, at least one leaf $z$ of $T_{A_j}$ is the endpoint of a path $\widehat{P} \in \mathcal{P}$ whose other endpoint is \emph{not} contained in $T_j$ (as again all leaves are endpoints of one of these paths as there are $k$ paths and we have $n=2k$). However, then $P$ and $\widehat{P}$ cannot be edge-disjoint  as both paths share at least one edge incident with $\beta_j$. This is a contradiction, as $P$ and $\widehat{P}$ are both contained in $ \mathcal{P}$, which is a set of edge-disjoint paths. This shows that the assumption was wrong, which completes the proof of the corollary.
\end{proof}

\begin{corollary} \label{lem:cherry_n=2k}
    Let $k\in \mathbb{N}_{\geq 1}$ and let $T$, $\widetilde{T}$ be a pair of binary phylogenetic $X$-trees with $\vert X\vert = n = 2k$ and such that $A_k(T) = A_k(\widetilde{T})$. Let $[x,y]$ be a cherry in $T$ and let $P = x, \beta_1, \ldots, \beta_m , y$ be the unique $x-y$-path in $\widetilde{T}$. Denote by $T_{A_1}, \ldots, T_{A_m}$  the rooted binary phylogenetic trees resulting from deleting the vertices of $P$, such that each $T_{A_i}$ has taxon set $A_i$ for $i\in \{1, \ldots, m\}$. Let $n_i = \vert A_i \vert$ for $i\in \{1, \ldots, m\}$. Then, we have: all of the numbers $n_i$ (for ($i\in \{1, \ldots, m\}$) are even.
\end{corollary}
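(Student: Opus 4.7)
The plan is to combine Proposition \ref{prop:characterization2k} (which translates the equality $A_k(T)=A_k(\widetilde{T})$ into the equality of endpoint sets of the unique collections of $k$ edge-disjoint leaf-to-leaf paths in $T$ and $\widetilde{T}$) with Corollary \ref{cor:even_subtrees} (which delivers exactly the parity conclusion we want, but for a path that genuinely belongs to the unique edge-disjoint collection). So the task reduces to showing that the specific path $P$ from $x$ to $y$ in $\widetilde{T}$ is indeed a member of the unique collection of $k$ edge-disjoint leaf-to-leaf paths in $\widetilde{T}$.

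First, I would argue that in $T$, the cherry path $x,z,y$ (where $z$ is the common neighbor of $x$ and $y$) is forced to be one of the $k$ edge-disjoint leaf-to-leaf paths of $T$: this is precisely the uniqueness argument that appeared in the inductive step of the proof of Lemma \ref{lem:2kuniquepaths}, which shows that any set of $k$ edge-disjoint leaf-to-leaf paths covering all $2k$ leaves must use the cherry as one of its paths, since no other leaf-to-leaf path through $x$ (or $y$) can leave an edge-disjoint route to the other leaf of the cherry. Hence $\{x,y\}\in \mathcal{E}$ in the notation of Proposition \ref{prop:characterization2k}.

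Next, invoking Proposition \ref{prop:characterization2k}, the hypothesis $A_k(T)=A_k(\widetilde{T})$ yields $\mathcal{E}=\mathcal{E}'$, so $\{x,y\}\in\mathcal{E}'$. This means that in $\widetilde{T}$ there exists a path $P'$ belonging to the unique collection of $k$ edge-disjoint leaf-to-leaf paths of $\widetilde{T}$ whose endpoints are exactly $x$ and $y$. Since $\widetilde{T}$ is a tree, the $x$-$y$-path in $\widetilde{T}$ is unique, so $P'=P$.

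Finally, I apply Corollary \ref{cor:even_subtrees} to the tree $\widetilde{T}$ and the path $P$, which is a member of its unique edge-disjoint collection: the corollary guarantees that all the pending subtree sizes $n_1,\ldots,n_m$ are even, completing the proof. I do not expect any serious obstacle here; the only subtlety is to notice that $P$ really sits inside the unique collection for $\widetilde{T}$ (rather than being just an arbitrary $x$-$y$-path), but this is handed to us for free by Proposition \ref{prop:characterization2k} together with the cherry argument from Lemma \ref{lem:2kuniquepaths}.
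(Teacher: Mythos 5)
Your proposal is correct and follows essentially the same route as the paper: forcing the cherry path into the unique edge-disjoint collection of $T$ via the argument from Lemma \ref{lem:2kuniquepaths}, transferring the endpoint pair $\{x,y\}$ to $\widetilde{T}$ via Proposition \ref{prop:characterization2k}, identifying $P$ with that path by uniqueness of tree paths, and concluding with Corollary \ref{cor:even_subtrees}. No gaps.
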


\begin{proof}
By Proposition \ref{prop:characterization2k}, the unique edge-disjoint leaf-to-leaf paths $P_1,\ldots P_k$ in $T$ and $P_1',\ldots, P_k'$ in $\widetilde{T}$, which exist by Lemma \ref{lem:2kuniquepaths}, share the same endpoints in $T$ and $\widetilde{T}$. Moreover, by the same argument as used in the proof of Lemma \ref{lem:2kuniquepaths}, $x$ and $y$ must be the two endpoints of one of the paths $P_i$ for some $i\in \{1,\ldots,k\}$ as there is no other way to turn both leaves of the cherry into endpoints  of $k$ edge-disjoint leaf-to-leaf paths in $T$, but they have to be such endpoints (as we have $k$ paths with two endpoints each, and as $n=2k$). Thus, by Proposition \ref{prop:characterization2k}, there must be a value $i\in \{1,\ldots,k\}$ with $P = P_i'$. 
Now, applying Corollary \ref{cor:even_subtrees} to $\widetilde{T}$ and $P\in \mathcal{P}:=\{P_1',\ldots, P_k'\}$ as the the unique set of $k$ edge-disjoint leaf-to-leaf paths in $\widetilde{T}$ given by Lemma \ref{lem:2kuniquepaths}, we get the desired result, which completes the proof.
\end{proof}

We require one more technical lemma concerning the case $n=2k$.

\begin{lemma}\label{lem:oneleafmoves} Let $k \in \mathbb{N}_{\geq 2}$. Let $X$ be a taxon set with $n=\lvert X \rvert =2k$ and $x,y \in X$, $x \neq y$. Let $S$ be a phylogenetic $X \setminus \{x,y\}$-tree. Let $T$ result from $S$ by choosing two edges $e,f \in E(S)$, $e \neq f$, and attaching leaf $x$ to $e$ by introducing a new vertex $v_x$ on $e$ as well as adding edge $\{v_x,x\}$ and attaching leaf $y$ to $f$ accordingly, introducing a new vertex $v_y$ on $f$. Let $P=x,v_x,\beta_1,\ldots,\beta_m,v_y,y$  denote the unique path from $x$ to $y$ in $T$. Let $T'$ result from $T$ by deleting $y$ and suppressing $v_y$ and re-attaching $y$ to any of the edges $\{x,v_x\}$, $\{v_x,\beta_1\}$, $\{\beta_1,\beta_2\}, \ldots, \{\beta_{m-1},\beta_m\}$. Moreover, let $S_{A_i}$ with taxon set $A_i$ and $\lvert A_i\rvert =n_i$ (with $i=0,\ldots,m+1$) be the set of trees resulting from the deletion of $P$ from $T$, cf. \ref{fig:lemoneleafmoves}. Then, if all $n_i$ are even, we have that $A_k(T)=A_k(T')$ and $T'$ can be obtained from $T$ by a series of $k$-problematic moves.
\end{lemma}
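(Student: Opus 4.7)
The plan is to exhibit $T'$ as the result of applying a sequence of NNI moves to $T$, each of which (i) slides the leaf $y$ one edge to the left along the path $P$ and (ii) is $k$-problematic. The equality $A_k(T)=A_k(T')$ will then follow immediately from Proposition \ref{prop:nni}, since every $k$-problematic move preserves the $A_k$-alignment.

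First I would establish the key structural claim that a single one-edge ``slide'' of $y$ is realized by a single NNI move. Suppose that in the current tree $v_y$ subdivides some edge $\{p,q\}$, with $p$ lying on the $x$-side of $v_y$, and let $p'$ denote the neighbor of $p$ further toward $x$. The NNI move on the inner edge $\{v_y,p\}$ that swaps $q$ (the non-$y$ neighbor of $v_y$) with $p'$ (the non-$v_y$, non-subtree-root neighbor of $p$) yields exactly the tree in which $v_y$ subdivides $\{p',p\}$, with the remaining adjacencies as forced; this can be verified by directly comparing the adjacency lists of the affected vertices before and after the swap. Iterating this slide, starting from $T$ (where $v_y$ subdivides the edge joining $\beta_m$ to the root of $S_{A_{m+1}}$) and proceeding leftward, one obtains after $m+1-j$ slides the tree in which $y$ is attached on the edge $\{\beta_{j-1},\beta_j\}$, under the convention $\beta_0:=v_x$ and $\beta_{-1}:=x$. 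In this way every $T'$ allowed by the lemma is reachable.

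The main step is verifying that each individual slide is $k$-problematic. By Corollary \ref{cor:cases} applied with $a=0$ (since $n=2k$), it suffices to show that at most two of the four pending subtree sizes at the NNI edge are odd. At the slide whose NNI edge is $\{v_y,\beta_j\}$, the four pending subtrees are: the singleton $\{y\}$, of size $1$ (odd); the ``right subtree'' that has accumulated at $v_y$ during previous slides, whose leaves are exactly $\bigcup_{i=j+1}^{m+1} A_i$ and hence of size $\sum_{i=j+1}^{m+1} n_i$, which is even as a sum of even $n_i$'s; the subtree $S_{A_j}$ itself, of even size $n_j$; and the ``left subtree'' containing $x$ together with $\beta_0,\ldots,\beta_{j-2}$ (if any) and the $S_{A_i}$'s for $i<j$, of size $1+\sum_{i=0}^{j-1} n_i$, which is odd. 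Exactly two of the four subtree sizes are therefore odd, so the $k$-problematic condition of Corollary \ref{cor:cases} is satisfied at every step.

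The main obstacle I expect is the bookkeeping for the evolving ``right subtree'' at $v_y$ across successive slides: with each slide one more $\beta_i$ (together with its hanging $S_{A_i}$ and everything already to its right) becomes absorbed into a growing, nested subtree on the $v_y$-side of the NNI edge. However, the $k$-problematic criterion only sees the leaf counts of the pending subtrees, and the assumption that every $n_i$ is even guarantees that this evolving right subtree always contributes an even leaf count to the count. A straightforward induction on the number of slides therefore yields the desired series of $k$-problematic NNI moves from $T$ to $T'$, completing both conclusions of the lemma.
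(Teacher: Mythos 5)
Your proof is correct, and it uses the same underlying construction as the paper -- transforming $T$ into $T'$ by a sequence of NNI moves that slide the leaf $y$ one step towards $x$ along the path $P$ -- but it verifies the key property differently. You check directly, at each NNI edge $\{v_y,\beta_j\}$, the parity criterion of Corollary \ref{cor:cases} with $a=0$: the four pending leaf counts are $1$, $\sum_{i=j+1}^{m+1} n_i$, $n_j$ and $1+\sum_{i=0}^{j-1} n_i$, so exactly two are odd when all $n_i$ are even, making every slide $k$-problematic; the equality $A_k(T)=A_k(T')$ then follows by chaining Proposition \ref{prop:nni}. The paper argues in the opposite direction: it uses the $n=2k$-specific machinery (uniqueness of the set of $k-1$ edge-disjoint leaf-to-leaf paths in $S$, Lemma \ref{lem:2kuniquepaths}, plus the endpoint characterization of Proposition \ref{prop:characterization2k}) to show that \emph{all} intermediate trees on the slide path share the same $A_k$-alignment, and only then invokes Proposition \ref{prop:nni} to conclude that each move is problematic. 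Your route is more local and self-contained -- it does not need Lemmas \ref{lem:alignment_n=2k}, \ref{lem:2kuniquepaths} or Proposition \ref{prop:characterization2k}, and the same parity computation would even work for $n=2k+a$ under a suitable hypothesis on the $n_i$ -- while the paper's route reuses path-characterization tools it needs elsewhere in the $n=2k$ analysis anyway. One trivial slip: at the NNI edge $\{v_y,\beta_j\}$ the left pending subtree contains the internal path vertices $\beta_0,\ldots,\beta_{j-1}$ (not only up to $\beta_{j-2}$); this does not affect your argument, since its leaf count $1+\sum_{i=0}^{j-1} n_i$, which is all that enters the parity check, is stated correctly.
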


\begin{figure}
    \centering

\includegraphics{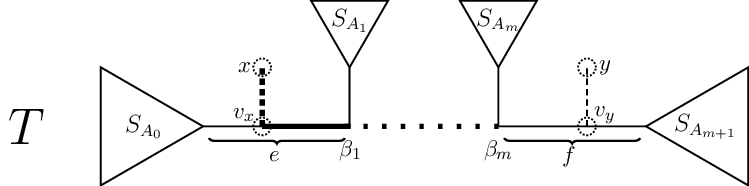}

\caption{Tree $T$ as described in Lemma \ref{lem:oneleafmoves}. When $x$ and $y$ get deleted, we obtain tree $S$. The bold path consists of the edges to which $y$ can be re-attached to form $T'$ (this includes the edge leading to $x$ (dashed), the edge from $v_x$ to $\beta_1$ and the entire path from $\beta_1$ to $\beta_m$ (dotted)).}
\label{fig:lemoneleafmoves}
\end{figure}

\begin{proof} We begin by noting that both $T$ and $T'$ can be obtained from $S$ by first attaching leaf $x$ to edge $e$ and subsequently attaching leaf $y$ (in case of $T'$, possibly on the edge leading to $x$). However, the attachment of $y$ to obtain  $T'$ is done such that $y$ is not added to any of the subtrees $S_{A_i}$ for $i=0,\ldots,m+1$, cf. Figure \ref{fig:lemoneleafmoves}. Thus, $T'$ can be reached from $T$ by a series of NNI moves by swapping the subtree consisting only of leaf $y$ with $S_{A_m},S_{A_{m-1}},\ldots$ until $T'$ is reached (note that if the last move swaps leaf $y$ with $S_{A_0}$, $T'$ contains the cherry $[x,y]$). We now argue that each of these NNI moves is $k$-problematic. 

Let $\mathcal{P}$ be the unique set of $k-1$ edge-disjoint leaf-to-leaf paths in $S$ (which must exist by Lemma \ref{lem:leafchoice} and which is unique by Lemma \ref{lem:2kuniquepaths} as $S$ has $2k-2=2(k-1)$ leaves). As all subtrees $S_{A_i}$ by assumption have an even number $n_i$ of leaves, none of the paths in $\mathcal{P}$ uses any of the edges that we used for the above mentioned NNI moves. This is due to the fact that if one leaf in, say, $S_{A_i}$ got connected to a leaf in $S_{A_j}$ for some $i\neq j$, then the remaining odd number of leaves in $S_{A_i}$ could not get connected in an edge-disjoint manner (either at least one leaf would remain unmatched or there would be two paths using the edge on which $S_{A_i}$ is pending). However, this shows that adding the unique path from $x$ to $y$ in $T$ and  $T'$ (which contains only edges that are not employed by any path in $\mathcal{P}$) to the set $\mathcal{P}$ of paths gives two sets $\mathcal{Q}$ and $\mathcal{Q}'$ of edge-disjoint leaf-to-leaf paths in $T$ and $T'$, respectively. However, as $\mathcal{Q}$ and $\mathcal{Q}'$ share the same set of endpoint pairs (namely $\{x,y\}$ as well as all endpoint pairs induced by $\mathcal{P}$), by Proposition \ref{prop:characterization2k}, we have $A_k(T)=A_k(T')$. By the same argument, we even have $A_k(T)=A_k(T'')$ for any tree $T''$ contained in the above described NNI-path from $T$ to $T'$. Thus, by Proposition \ref{prop:nni}, we know that $T'$ can be obtained from $T$ by a series of $k$-problematic moves. This completes the proof.
\end{proof}

Now that we have gained deeper understanding of the case $n=2k$, we can turn our attention again to the general case $n=2k+a$ with $a \in \{0,1,2\}$.

\subsubsection{Proof of Theorem \ref{thm:problematicmoves}}

Before we can proceed with the proof of Theorem \ref{thm:problematicmoves}, we first prove it for the special case in which two trees $T$ and $\widetilde{T}$ are almost identical except for one cherry being attached to different edges. This will be done by the following lemma, whose proof is quite technical. However, we will subsequently use this special case to prove the more general case.

\begin{figure}
    \centering
\includegraphics{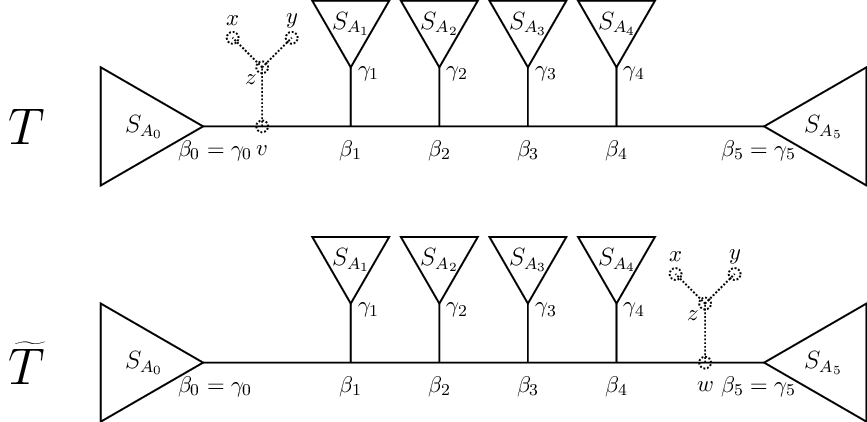}
\caption{$T$ and $\widetilde{T}$ as in Lemma \ref{lem:SPR_for_cherry} for $m=4$. Deleting the edges indicated by dotted lines will lead to the same tree $S$ in each case. }
\label{fig:ProofTheorem1}
\end{figure}

\begin{lemma}
    \label{lem:SPR_for_cherry}
    Let $k \in \mathbb{N}_{\geq 2}$. Let $n=2k+a$ with $a \in \{0,1,2\}$. Let $X$ be a set of taxa such that $\vert X \vert =n=2k+a$, and let $x,y \in X$. Moreover, let $S$ be a binary phylogenetic $X\setminus \{x, y\}$-tree. Now, let $T$ be the binary phylogenetic $X$-tree obtained from $S$ by attaching $[x,y]$ to an edge $e_1\in E(S)$, and let $\widetilde{T}$ be the binary phylogenetic $X$-tree obtained from $S$ by attaching $[x,y]$ to an edge $e_2\in E(S)$.
    Let $P = \beta_0, \beta_1, \ldots, \beta_m ,\beta_{m+1}$ be the unique path in $S$ with $e_1 = \{\beta_0, \beta_1\}$ and $e_2 = \{\beta_m, \beta_{m+1}\}$. Let $S_{A_0}, S_{A_1}, \ldots, S_{A_m}, S_{A_{m+1}}$ be the rooted binary phylogenetic trees resulting from deleting the internal vertices of $P$ from $S$ such that each $S_{A_i}$ has taxon set $A_i$ for $i\in \{0, \ldots, m+1\}$, cf. Figure \ref{fig:ProofTheorem1}. Let $n_i = \vert A_i \vert$ for $i\in \{0, \ldots, m+1\}$. Then, we have: 
    \begin{itemize}
        \item If $n=2k$, we have $A_k(T) = A_k(\widetilde{T})$.
        \item If $n=2k+1$, we have $A_k(T) = A_k(\widetilde{T})$ if and only if at most one of the numbers $n_i$ (with $i\in \{0, \ldots, m+1\}$) is odd. 

        \item If $n=2k+2$, we have $A_k(T) = A_k(\widetilde{T})$ if and only if either $m=0$ or none of the numbers $n_i$ (with $i\in \{0, \ldots, m+1\}$)  is odd.
    \end{itemize}
    Moreover, in each case we have $A_k(T) = A_k(\widetilde{T})$ if and only if $\widetilde{T}$ is obtained from $T$ by a series of problematic moves.
\end{lemma}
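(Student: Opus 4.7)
The plan is to view $\widetilde{T}$ as the endpoint of a sequence of NNI moves from $T$ that slide the cherry $[x,y]$ one edge at a time along the path $P$. For each $i\in\{0,1,\ldots,m\}$, let $T_i$ denote the binary phylogenetic $X$-tree obtained from $S$ by attaching the cherry $[x,y]$ to edge $\{\beta_i,\beta_{i+1}\}$; then $T=T_0$, $\widetilde{T}=T_m$, and consecutive trees $T_{i-1},T_i$ are NNI neighbors via a move on the internal edge joining the cherry-attachment vertex of $T_{i-1}$ (which subdivides $\{\beta_{i-1},\beta_i\}$) to $\beta_i$. The four pendant subtrees around this NNI edge have leaf counts $2$ (from the cherry), $n_0+\cdots+n_{i-1}$ (on the $\beta_{i-1}$-side), $n_i$ (from $S_{A_i}$), and $n_{i+1}+\cdots+n_{m+1}$ (on the $\beta_{i+1}$-side).

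By Corollary \ref{cor:cases}, the NNI move $T_{i-1}\to T_i$ is $k$-problematic precisely when at most $2-a$ of these four sizes are odd; since $2$ is always even, this is a condition only on the triple $(n_0+\cdots+n_{i-1},\,n_i,\,n_{i+1}+\cdots+n_{m+1})$. Requiring that every one of the $m$ moves be $k$-problematic and running a short parity argument using $\sum_{j=0}^{m+1}n_j=n-2=2k+a-2$ translates into exactly the case-by-case conditions on $(n_0,\ldots,n_{m+1})$ stated in the lemma: vacuous for $a=0$, at most one $n_j$ odd for $a=1$, and all $n_j$ even (or $m=0$) for $a=2$. This establishes the equivalence between the parity condition and $\widetilde{T}$ being reachable from $T$ by this specific series of $k$-problematic moves; the implication parity condition $\Rightarrow A_k(T)=A_k(\widetilde{T})$ then follows from Proposition \ref{prop:nni} applied step by step along the sequence.

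For the converse $A_k(T)=A_k(\widetilde{T})\Rightarrow$ parity condition, the plan is to use Corollary \ref{cor:cherryred} with the shared cherry $[x,y]$: since the type-2 reduction gives $T^2=S=\widetilde{T}^2$, the condition reduces to $A_k(T^1)=A_k(\widetilde{T}^1)$, where $T^1,\widetilde{T}^1$ are $S$ with a single pendant leaf $y$ attached at $e_1$ and $e_2$ respectively, each on $n-1$ leaves. The case $a=0$ (so $n-1=2k-1$) is immediate because $A_k$ is empty on trees with fewer than $2k$ leaves by Proposition \ref{prop:menger0}. For $a=1$ (so $n-1=2k$) I would invoke Lemma \ref{lem:2kuniquepaths}, Corollary \ref{cor:even_subtrees}, and Proposition \ref{prop:characterization2k}: the direction in which the unique path through $y$ leaves its attachment vertex is forced by the parity of $n_0$ in $T^1$ and of $n_{m+1}$ in $\widetilde{T}^1$, and tracing the paths shows that their endpoint sets coincide exactly when at most one $n_j$ is odd. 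For $a=2$ (so $n-1=2k+1$) I would iterate: Proposition \ref{prop:cherry} applied to $T^1,\widetilde{T}^1$ supplies a further common cherry of $S$ sitting away from the path $P$ (which exists since $S$ has at least two disjoint cherries whenever $|X\setminus\{x,y\}|\geq 4$), and a second application of Corollary \ref{cor:cherryred} reduces the problem to a strictly smaller instance already covered by the preceding cases. I expect the main obstacle to lie in this last step: one must check that the auxiliary reduction does not redistribute parities among the $n_j$ in a way that obscures the condition, which is precisely why the secondary cherry must be chosen outside $P$.
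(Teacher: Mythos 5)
Your sufficiency argument (sliding the cherry $[x,y]$ along $P$ one NNI move at a time and certifying each move with Corollary \ref{cor:cases} and Proposition \ref{prop:nni}) and your converse for $n=2k+1$ (type-1 reduction of $[x,y]$, then Lemma \ref{lem:2kuniquepaths}, Corollary \ref{cor:even_subtrees} and Proposition \ref{prop:characterization2k} applied to the path through $y$) follow the paper's proof in all essentials. The genuine gap is in the converse for $n=2k+2$. First, the auxiliary cherry you rely on need not exist: a cherry of $S$ survives in both $T$ and $\widetilde{T}$ only if neither of its two pendant edges equals $e_1$ or $e_2$, and in the case $n_i=1$ for all $i$ (so $S$ is a caterpillar whose spine is $P$) the only two cherries of $S$ are the end cherries whose pendant edges include $e_1$ and $e_2$, respectively; one is destroyed in $T$ and the other in $\widetilde{T}$, so no common cherry away from $P$ exists, and the fact that $S$ has two disjoint cherries does not help. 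This degenerate case needs a separate argument (the paper handles it by noting that after the type-1 reduction of $[x,y]$, $[y,\beta_0]$ is a cherry of $T^1$ but not of $\widetilde{T}^1$, contradicting Proposition \ref{prop:cherry} on $2k+1$ leaves via Corollary \ref{cor:cherryred}).

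Second, even when a cherry $[a,b]$ inside some $S_{A_j}$ with $n_j\geq 2$ does exist, your reduction is not yet conclusive. If you reduce by type 2 using $[a,b]$, then $k$ drops to $k-1$ and $n$ to $2(k-1)+2$, so you stay in the $a=2$ regime and the ``preceding cases'' do not apply without an induction on $k$ that you have not set up; so the reduction must be of type 1, which keeps $k$, lowers $n$ to $2k+1$, and --- this is exactly what your proviso ``choose the secondary cherry outside $P$'' does not address --- replaces $n_j$ by $n_j-1$, flipping the parity of that block. The already-proven $n=2k+1$ case then only yields that at most one of $n_0,\ldots,n_j-1,\ldots,n_{m+1}$ is odd, and converting this into ``$m=0$ or all $n_i$ even'' requires the case analysis the paper performs: if some $n_j>1$ is even, choose that $j$, so an assumed odd $n_{i^\ast}$ together with $n_j-1$ gives two odd entries, a contradiction; if every $n_j$ is odd, use $m>0$ to see there are at least three odd entries, of which at least two survive the reduction, again a contradiction. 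Without this parity bookkeeping and the caterpillar case above, the $n=2k+2$ claim remains unproved.
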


\begin{figure}
    \centering
\includegraphics{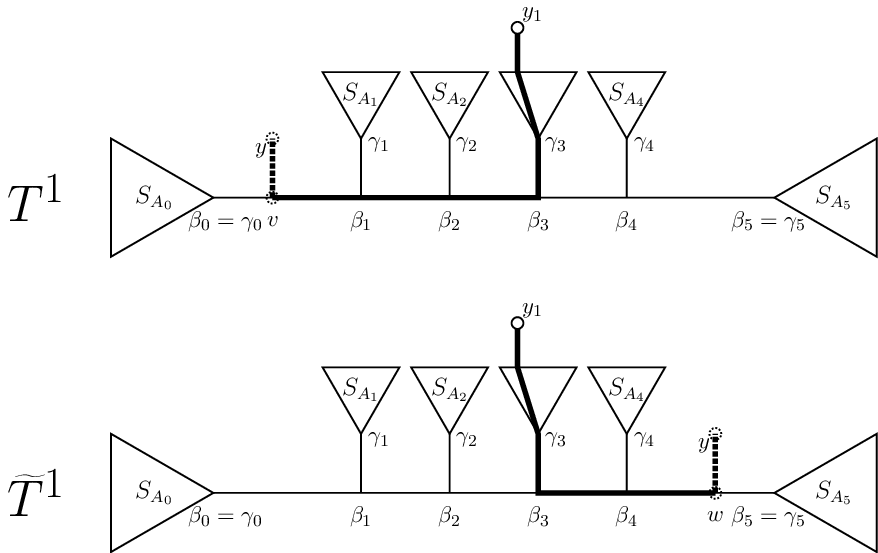}
\caption{This figure shows $T^1$ and $\widetilde{T}^1$ in the analysis of case $n=2k+1$ in the second part of the proof of Lemma \ref{lem:SPR_for_cherry} for $m=4$ and $i^{\ast} =3$. The thick lines in the upper image indicate path $Q$ and the thick lines in the lower image indicate path $Q'$. If we delete the internal vertices of $Q$ from $T^1$, we obtain a forest which contains, among others, $S_{A_0}, S_{A_1}$ and $S_{A_2}$ as components. Similarly, if we delete the internal vertices of $Q'$ from $\widetilde{T}^1$, then a forest is obtained which contains, among others, $S_{A_4}$ and $S_{A_5}$ as components.}
\label{fig:ProofTheorem2}
\end{figure}

\begin{proof}
    As a preparation for the proof we introduce some notation (cf. Figure \ref{fig:ProofTheorem1}).  Let $\gamma_i$ be the root of $S_{A_i}$ for $i\in \{0, \ldots, m+1\}$. This immediately leads to $\beta_0=\gamma_0$ and $\beta_{m+1}=\gamma_{m+1}$. Moreover, without loss of generality we can assume that $S_{A_i}$ is adjacent to $\beta_{i}$ in $S$ for $i=1,\ldots,m$, which implies that $\{\gamma_i, \beta_i\} \in E(S)$ for $i\in \{1, \ldots, m\}$. Next, let $v\in V(T)$ be the vertex resulting from subdividing $e_1$ such that $\{v,z\}\in E(T)$, where $z$ is the common neighbor of $x$ and $y$. Similarly, let $w\in V(\widetilde{T})$ be the vertex resulting from subdividing $e_2$ such that $\{w,z\}\in E(\widetilde{T})$.

We start by observing that $T$ can clearly be transformed into $\widetilde{T}$ by NNI moves as follows (cf. Figure \ref{fig:ProofTheorem1}): We first perform a move around the edge $\{v,\beta_1\}$; in particular, we swap the cherry $[x,y]$ with subtree $S_{A_1}$. We call the resulting tree $T^{NNI}_{A_1}$. Subsequently, we swap the cherry $[x,y]$ with subtree $S_{A_2}$ to derive tree $T^{NNI}_{A_2}$. We repeat this procedure until we reach $T^{NNI}_{A_m}=\widetilde{T}$. We consider the NNI path $T,T^{NNI}_{A_1},\ldots,T^{NNI}_{A_m}=\widetilde{T}$. The $i^{th}$ NNI move in this path can be described by defining trees $T_1^i$ with leaf set $\{x,y\}$, $T_2^i$ with leaf set $\bigcup_{j=0}^{i-1}A_{j}$, $T_3^i$ with leaf set $A_i$ and $T_4^i$ with leaf set $\bigcup_{j=i+1}^{m+1}A_{j}$, where the NNI move then swaps $T_1^i$ and $T_3^i$. Consequently, the number of leaves in these four subtrees affected by the $i^{th}$ NNI move are $n_1^i=2$ for $T_1^i$, $n_2^i=\sum_{j=0}^{i-1}n_i$ for $T_2^i$, $n_3^i=n_i$ for $T_3^i$, and $n_4^i=\sum_{j=i+1}^{m+1}n_i$ for $T_4^i$.

Note that for all $i=1,\ldots, m$,  $T^{NNI}_{A_i}$ can alternatively be derived from $S$ by attaching cherry $[x,y]$ at an appropriate position.
    
    \begin{enumerate}
    \item We first  consider the case $n=2k$. 
     As we have $\vert X \setminus \{x,y\} \vert = 2k-2=2(k-1)$, we know by Lemma \ref{lem:2kuniquepaths} that there is a unique set $\mathcal{P}=\{P_1,\ldots,P_{k-1}\}$ of edge-disjoint leaf-to-leaf paths contained in $S$. Moreover, adding the path $P=x,z,y$, where $z$ is the common neighbor of $x$ and $y$, to $\mathcal{P}$ clearly gives a set of edge-disjoint leaf-to-leaf paths for $T$ and $T^{NNI}_{A_i}$ (with $i=1,\ldots, m$), including $\widetilde{T}=T^{NNI}_{A_m}$. By Lemma \ref{lem:2kuniquepaths}, this set of paths $\mathcal{P} \cup \{P\}$ is a unique set of edge disjoint leaf-to-leaf paths for each tree under consideration. Moreover, for each such tree, these paths have the same sets of endpoint pairs. By Proposition \ref{prop:characterization2k}, this implies $A_k(T)=A_k\left(T^{NNI}_{A_i} \right)=A_k(\widetilde{T})$ for all $i=1,\ldots,m$. Additionally, using Proposition \ref{prop:nni}, this shows (as $T,T^{NNI}_{A_1},\ldots, T^{NNI}_{A_{m-1}},\widetilde{T}$ forms an NNI path from $T$ to $\widetilde{T}$) that $\widetilde{T}$ can be derived from $T$ by a series of $k$-problematic moves. This completes the proof for the case $n=2k$.
    
    \item So from now on, we may assume $n=2k+a$ with $a\in \{1,2\}$. We start by assuming that one of the following statements is true: 
    \begin{itemize}
        %\item $n=2k$.
        \item $n=2k+1$ and at most one of the numbers $n_i$ (for $i\in \{0, \ldots, m+1\}$) is odd.
        \item $n=2k+2$ and ($m=0$ or none of the numbers $n_i$ (for $i\in \{0, \ldots, m+1\}$) is odd).
    \end{itemize}
    In each of these cases, we need to show $A_k(T) = A_k(\widetilde{T})$ and that $\widetilde{T}$ is obtained from $T$ by a series of $k$-problematic moves. We will do so by showing that the above mentioned NNI path $T, T^{NNI}_{A_1},\ldots,T^{NNI}_{A_m}=\widetilde{T}$ is $k$-problematic.
  
As explained above, the $i^{th}$ NNI move in this path from $T$ to $\widetilde{T}$ can be described by considering trees $T_1^i$, $T_2^i$, $T_3^i$ and $T_4^i$ with $n_1^i=2$, $n_2^i=\sum_{j=0}^{i-1}n_i$, $n_3^i=n_i$, and $n_4^i=\sum_{j=i+1}^{m+1}n_i$ leaves, respectively, and then swapping $T_1^i$ and $T_3^i$. Using a simple parity argument, it can be easily seen that if at most $2-a$ of the values $n_0, n_1,\ldots, n_{m+1}$ are odd (for $a\in \{1,2\}$), then also at most $2-a$ of the values $n_1^i,n_2^i, n_3^i, n_4^i$ are odd. Using Corollary \ref{cor:cases}, we conclude that $A_k(T)=A_k\left(T^{NNI}_{A_i}\right)=A_k(\widetilde{T})$ for all $i=1,\ldots,m+1$. Thus, by Proposition \ref{prop:nni}, $\widetilde{T}$ can indeed be derived from $T$ by a series of $k$-problematic moves, and we have $A_k(T) = A_k(\widetilde{T})$ as desired. This completes the second part of the proof. 
    
    \item We continue with the third part of the proof. In this part, we want to show that $n=2k+a$ with $a\in \{1,2\}$ and $ m>0$ together with $A_k(T) = A_k(\widetilde{T})$ imply that at most $2-a$ of the numbers $n_i$ (with $i\in \{0, \ldots, m+1\}$) are  odd. We will first prove the statement for $a=1$ and subsequently reduce the case $a=2$ to this case.

\begin{enumerate}
\item We consider the case $n=2k+1$. Now, as $A_k(T) = A_k(\widetilde{T})$, we know from Corollary \ref{cor:cherryred} that $A_k(T^1) = A_k(\widetilde{T}^1)$, where $T^1$ and $\widetilde{T}^1$ are obtained by cherry reductions of type 1 using the cherry $[x,y]$. Note that $T^1$ and $\widetilde{T}^1$ are binary phylogenetic $X \setminus \{x\}$-trees with $\lvert X\setminus \{x\} \rvert = 2k$. 
    Therefore, by Lemma \ref{lem:leafchoice} we know that $T^1$ contains a set of $k$ edge-disjoint leaf-to-leaf paths $\mathcal{P}=\{P_1, \ldots, P_k\}$ and that $\widetilde{T}^1$ contains a set of $k$ edge-disjoint leaf-to-leaf paths $\mathcal{P}'=\{P'_1, \ldots, P'_k\}$. Each $z\in X\setminus \{x\}$ is an endpoint of exactly one element of $\mathcal{P}$ and of exactly one element of $\mathcal{P}'$. In particular, $y$ is an endpoint of some $Q \in \mathcal{P}$ and of some $Q'\in \mathcal{P}'$. Let $y_1$ be the other endpoint of $Q$ and let $y_2$ be the other endpoint of $Q'$, as depicted by Figure \ref{fig:ProofTheorem2}, respectively. Now, by Proposition \ref{prop:characterization2k}, we know from $A_k(T^1)=A_k(\widetilde{T}^1)$ that the pairs of endpoints of $\mathcal{P}$ coincide with the ones $\mathcal{P}'$. In particular, this shows that $y_1=y_2$.
    
    Next, we consider the rooted subtrees $T_1, \ldots, T_g$ obtained by deleting the vertices of $Q$ from $T^1$ such that each $T_i$ has taxon set $X_i \subset X\setminus \{x\}$ for $i\in \{1, \ldots, g\}$. Similarly, we consider the rooted subtrees $T'_1, \ldots, T'_h$ obtained by deleting the vertices of $Q'$ from $\widetilde{T}^1$ such that each $T'_i$ has taxon set $X'_i \subset X\setminus \{x\}$ for $i\in \{1, \ldots, h\}$. Applying Corollary \ref{cor:even_subtrees} to $T^1$ and $Q\in \mathcal{P}$, we can conclude that $\lvert X_i \rvert$ is even for all $i\in \{1, \ldots, g\}$. Similarly, applying  Corollary \ref{cor:even_subtrees} to $\widetilde{T}^1$ and $Q'\in \mathcal{P}'$,  we conclude that $\lvert X'_j\lvert$ is even for $j\in \{1, \ldots, h\}$. 
    
    Note that there is some $i^{\ast}\in \{0, 1, \ldots, m+1\}$ with $y_1\in A_{i^{\ast}}$. Now, if $i^{\ast} > 0$, then $Q$ contains the subpath $y, v, \beta_1, \ldots, \beta_{i^{\ast}}$. On the other hand, if $i^{\ast} < m+1$, then $Q'$ contains the subpath $y, w, \beta_m, \ldots, \beta_{i^{\ast}}$. Now, in the first case, $v, \beta_1, \ldots, \beta_{i^{\ast}-1}$ are internal vertices of $T$ adjacent to $\gamma_0, \ldots, \gamma_{i^{\ast}_1}$, which are not contained in $Q$. Then, deleting the vertices $y, v, \beta_1, \ldots, \beta_{i^{\ast}}$ leads to a forest containing, amongst others, the rooted subtrees $S_{A_0}, \ldots, S_{A_{i^{\ast}-1}}$ as components. However, this implies that the sets $A_i$ (for all $i: 0\leq i \leq i^{\ast}-1$) are contained in the collection of the sets $X_i$ (with $i: 1\leq i\leq g$). Similarly, in the second case we see that deleting vertices $y, w, \beta_m, \ldots, \beta_{i^{\ast}}$ from $\widetilde{T}$ leads to a forest containing, amongst others, the rooted subtrees $S_{A_{i^{\ast}+1}}, \ldots, S_{A_{m+1}}$. Thus, the sets $A_i$ (for all $i: i^{\ast} + 1\leq i \leq  m+1$) are contained in the collection of the sets $X'_i$ (with $i: 1\leq i\leq h$). 
    We introduce the notation $\mathcal{X}:=\{X_i: 1\leq i \leq g\} \cup \{X_i': 1 \leq i \leq h\}$ and conclude from our observations that for all $i\in \{0,\ldots, m+1\}$ with $i \neq i^\ast$ we have $A_i \in \mathcal{X}$.
   
As we have already shown above that all elements of $\mathcal{X}$ are of even cardinality, this completes the proof that at most one of the numbers $n_i$ for ($i\in \{0, \ldots, m+1\}$) is odd if $n=2k+1$.

   \item Now, we consider the only remaining case, namely $n=2k+2$. 
    Seeking a contradiction, we assume that $m>0$, that there is an odd number $n_{i^\ast}$ with $i^\ast \in \{0, \ldots, m+1\}$, and that $A_k(T) = A_k(\widetilde{T})$.
We consider two cases. 
    \begin{itemize}
        \item In case we have $n_i = 1$ for all $\{0, \ldots, m+1\}$, we know in particular that $n_0 = 1$ and $A_0 = \{\beta_0\}$. Let $T^1$ and $\widetilde{T}^1$ be the trees resulting from $T$ and $\widetilde{T}$ by a cherry reduction of type 1 using $[x,y]$. Then, $[y,\beta_0]$ is a cherry in $T^1$ but not in $\widetilde{T}^1$. As $T^1$ and $\widetilde{T}^1$ are binary phylogenetic $X\setminus \{x\}$-trees and $\vert X\setminus \{x\}\vert = n-1 = 2k+1$, we can apply Proposition \ref{prop:cherry} and conclude that $A_k(T^1) \neq A_k(\widetilde{T}^1)$. By Corollary \ref{cor:cherryred} it follows that $A_k(T) \neq A_k(\widetilde{T})$, a contradiction to our assumption. This shows that the case that $n_i=1$ for all $i=0,\ldots,m+1$ cannot hold.

        \item Now we consider the case in which there exists a value of $j\in \{0, \ldots, m+1\}$ such that $ n_j > 1$. In the following, we will fix a specific such value of $j$ (with certain properties which we will determine subsequently). But regardless of the value of $j$ with $n_j>1$, we can do the following construction  (whose outcome will depend on the specific choice of $j$ and $n_j$): As $n_j>1$, $S_{A_j}$ contains a cherry %$[v,w]$.
        $[a,b]$.  Note that $[a,b]$ must also be a cherry in $T$, $ \widetilde{T}$ and $S$. Let $T^1$, $\widetilde{T}^1$ and  $S^1$, respectively, denote the trees we obtain from a cherry reduction of type 1 using $[a,b]$. Then, $T^1$ and $ \widetilde{T}^1$ are binary phylogenetic $X\setminus \{a\}$-trees.  Note that $P = \beta_0, \beta_1, \ldots, \beta_m ,\beta_{m+1}$ is a path in $S^1$ which contains $e_1$ and $e_2$, and that $T^1$ is obtained from $S^1$ by attaching cherry $[x,y]$ to $e_1$ and that $\widetilde{T}^1$ is obtained from $S^1$ by attaching cherry $[x,y]$ to $e_2$. 
        
        Next, recall that Lemma \ref{lem:SPR_for_cherry} has already been proven for the case $n=2k+1$. Therefore, we can apply the lemma to $T^1$, $\widetilde{T}^1$ and  $S^1$. Then, let $S'_{A'_i}$ be the rooted subtrees of $S^1$ obtained from deleting the vertices of $P$ from $S^1$ such that each $S'_{A'_i}$ has taxon set $A'_i$ for all $i$ with $0\leq i\leq m'+1$. Let $n'_i = \lvert A'_i\rvert$ (for $0\leq i\leq m'+1$). It can easily be seen that we have $m' = m$, $n'_i = n_i$ (for all $i\neq j$ with $ 0\leq i\leq m+1$), and that $n'_j = n_j -1$ (where the latter is true as taxon $a$ has been deleted from $S$ to derive $S^1$). 
        
    We now consider two subcases. 
        \begin{itemize}\item Suppose $j$ can be chosen such that $n_j>1$ is even. Then we fix such a value of $n_j$.  
    By Corollary \ref{cor:cherryred} we know that $A_k(T) = A_k(\widetilde{T})$ implies $A_k(T^1) = A_k(\widetilde{T}^1)$. By Lemma \ref{lem:SPR_for_cherry} applied to the case $n=2k+1$, which we have already proven, we know that at most one of the numbers $n'_i$ (with $i: 0\leq i\leq m+1$) is odd. But we have assumed that there exists some $i^{\ast}\in \{0, \ldots, m+1\}$ such that $n_{i^{\ast}}$ is odd. As  $n_j$ was chosen to be even, this implies  $j\neq i^{\ast}$, and we conclude that $n'_{i^{\ast}}$ and $n'_j=n_j-1$ are both odd. However, this contradicts the case $n=2k+1$ of Lemma \ref{lem:SPR_for_cherry}, by which at most one of these values can be odd.
        \item Suppose there is no $j$ such that $n_j>1$ is even. In this case, we fix any of the values $n_j>1$. 
        Then, clearly all values of $n_i$ with $i\in \{0, \ldots, m+1\}$ are odd (as they are either 1 or, in case they are larger than 1, still cannot be chosen to be even). But as we are in the case where $m>0$, this implies that we have at least three $n_i$'s (with $i: 0\leq i\leq m+1$) which are odd, namely $n_0$, $n_m$ and $n_{m+1}$. However, the numbers $n'_i$ (with $i: 0\leq i\leq m+1$) are identical with the numbers $n_i$ with precisely one exception. But then, clearly at least two of the numbers $n'_i$ (with $i: 0\leq i\leq m+1$) are odd.  Again this is a contradiction to the case $n=2k+1$ of Lemma \ref{lem:SPR_for_cherry}, by which at most one of these values can be odd.
        \end{itemize}

So both subcases lead to a contradiction, which shows that the assumption that there exists a value of $j$ with $n_j>1$ cannot hold. 
\end{itemize}

   As all cases lead to contradictions, it is clear that our assumption was wrong, i.e., we cannot have $m>0$ and  $A_k(T)=A_k(\widetilde{T})$ if there is an odd number $n_{i^\ast}$ with $i^\ast \in \{0,\ldots,m+1\}$. This proves the case $n=2k+2$. As we have already proven the case $n=2k+1$, this completes the proof of the third part.
    \end{enumerate}
    \end{enumerate}

    Last but not least, we note that in all cases, we have $A_k(T)=A_k(\widetilde{T})$ if and only if $T$ and $\widetilde{T}$ are connected via a series of $k$-problematic moves. For $n=2k$, we have shown this equivalence explicitly in the first part of the proof. For $n=2k+a$ with $a \in \{1,2\}$, we have shown in the third part of the proof that if $A_k(T)=A_k(\widetilde{T})$, we have that at most $2-a$ of the $n_i$-values are odd, which by the second part of the proof implies that $T$ and $\widetilde{T}$ are connected by a series of $k$-problematic moves. On the other hand, if $\widetilde{T}$ can be reached from $T$ by a series of such moves, we know by Proposition \ref{prop:nni} that $A_k(T)=A_k(\widetilde{T})$. 
    This completes the proof.
\end{proof}

Now we are finally in a position  to prove Theorem \ref{thm:problematicmoves}.

\begin{proof}
    If $\widetilde{T}$ is obtained from $T$ by a series of problematic moves, then by Proposition \ref{prop:nni} we know that $A_k(T) = A_k(\widetilde{T})$. So one direction of the theorem is clear. For the other direction we have to show that $A_k(T) = A_k(\widetilde{T})$ implies that $\widetilde{T}$ is obtained from $T$ by a series of problematic moves. We have already mentioned that we know the theorem to be true for $n<2k$ and $n>2k+2$. So it suffices to consider the cases $n=2k+a$ for $a\in \{0,1,2\}$.

    For our proof we use induction over $k$. By Proposition \ref{prop:A1A2} we know that the $A_2(T) = A_2(\widetilde{T})$ implies $T\cong \widetilde{T}$. By definition of a series of $k$-problematic moves, this shows that the theorem holds for $k=2$ (as the series in this case is empty). Hence, we may subsequently assume that we have $k>2$ and that the theorem holds for $k-1$.

We distinguish two cases.

\begin{enumerate}
\item At first we consider the case that $T$ and $\widetilde{T}$ have a cherry $[x,y]$ in common. Let $T^2$ and $\widetilde{T}^2$ be the binary phylogenetic trees obtained by a cherry reduction of type 2 using $[x,y]$ from $T$ and $\widetilde{T}$, respectively. From $A_k(T) = A_k(\widetilde{T})$ we conclude by Corollary \ref{cor:cherryred} that $A_{k-1}(T^2) = A_{k-1}(\widetilde{T}^2)$. By induction, it follows that $\widetilde{T}^2$ is obtained from $T^2$ by a series of $k-1$-problematic moves.
    
    Now, remember that the inverse operation to a cherry reduction of type 2 consists of attaching a cherry $[x,y]$ to an edge. So there exists some $e_1\in E(T^2)$ such that we can derive $T$ from $T^2$ by attaching $[x,y]$ to $e_1$. Hence, we can apply Lemma \ref{lem:series_problematic_moves} and conclude that there exists some $e_2\in E(\widetilde{T}^2)$ and some binary phylogenetic $X$-tree $T^{\ast}$ obtained by attaching $[x,y]$ to $e_2$ such that $T^{\ast}$ is obtained from $T$ by a series of $k$-problematic moves. Moreover, by Proposition \ref{prop:nni}, this implies $A_k(T)=A_k\left(T^\ast\right)$. In particular, for all three trees $T$, $\widetilde{T}$ and $T^\ast$, we now know that their $A_k$-alignments are equal, i.e., we have $A_k(\widetilde{T})=A_k(T)=A_k\left(T^\ast\right)$.
    
    It remains to show that $\widetilde{T}$ is obtained from $T^{\ast}$ by a series of $k$-problematic moves (as we can then form a series of $k$-problematic moves from $T$ to $\widetilde{T}$ via $T^\ast$). Again as attaching a cherry is the inverse operation to a cherry reduction, we know that there exists some $e_3\in E(\widetilde{T}^2)$ such that $\widetilde{T}$ is obtained from $\widetilde{T}^2$ by attaching $[x,y]$ to $e_3$. So we know that  both $T^{\ast}$ and $\widetilde{T}$ can be constructed by attaching $[x,y]$ to some edge of $\widetilde{T}^2$, and, as explained above, we have $A_k(\widetilde{T})=A_k\left(T^\ast\right)$. That means that we can apply Lemma \ref{lem:SPR_for_cherry} and see that $A_k(T^{\ast}) = A_k(\widetilde{T})$ implies that $\widetilde{T}$ is obtained from $T^{\ast}$ by a series of $k$-problematic moves. As we have already proven that $T^{\ast}$ can be obtained from $T$ by a series of problematic moves, in total we get the desired result that $\widetilde{T}$ is obtained from $T$ by a series of problematic moves. This completes the first part of the proof.

    \item Now we assume that there is no cherry contained in both $T$ and $\widetilde{T}$. By Proposition \ref{prop:cherry} this is not compatible with our assumption $A_k(T) = A_k(\widetilde{T})$ except if $n=2k$. So it suffices to consider the case $n=2k$. Now $T$ contains at least one cherry $[x,y]$ as $k\geq 2$ and thus $n\geq 4$. Let $S$ be the tree resulting from $\widetilde{T}$ by deleting $x$ and $y$ and suppressing the resulting degree-2 vertices. By Corollary \ref{lem:cherry_n=2k}, we know from $A_k(T)=A_k(\widetilde{T})$ and the fact that $[x,y]$ is a cherry in $T$, that the subtrees $T_{A_i}$ we derive from $\widetilde{T}$ when removing the unique $x$-$y$-path from this tree all have an even number $n_i$ of leaves.
    
    Moreover, as $[x,y]$ is not a cherry in $\widetilde{T}$, there are $e_1, e_2\in E(S)$ with $e_1\neq e_2$ such that $\widetilde{T}$ is obtained from $S$ by attaching $x$ to $e_1$ and attaching $y$ to $e_2$. Let $T^{\ast}$ be the tree obtained from $S$ by attaching $[x,y]$ to $e_1$. Then, as all the $n_i$ are even as explained above, we can use Lemma \ref{lem:oneleafmoves} and conclude that  $T^\ast$ can be obtained from $\widetilde{T}$ by a series of $k$-problematic moves, and $A_k(T^{\ast})=A_k(\widetilde{T})$.
Using the assumption that $A_k(\widetilde{T}) = A_k(T)$,  we conclude that $A_k(T^{\ast}) = A_k(T)$. 

However, note that $T$ and $T^{\ast}$ have cherry $[x,y]$ in common, which means we can apply the first part of the proof to these two trees. We conclude that $T^{\ast}$ can be obtained from $T$ by a series of problematic moves. Together with the fact that $\widetilde{T}$ is obtained from $T^{\ast}$ by a series of problematic moves as explained, we conclude that $\widetilde{T}$ is obtained from $T$ by a series of problematic moves (namely via $T^{\ast}$), which completes the proof. 
    \end{enumerate}

\end{proof}

\section{Discussion and outlook}

In the present manuscript we have completely characterized all pairs of binary phylogenetic $X$-trees with identical $A_k$-alignments. For $|X|=n\geq 2k+3$, we have shown that two trees have identical $A_k$-alignments if and only if they are isomorphic. In case that $n<2k$, it is easy to see that $A_k(T) = \emptyset$ for all binary phylogenetic $X$-trees $T$. The three cases of $n\in \{2k,2k+1,2k+2\}$ are most involved as it was already shown in a previous manuscript that there exist pairs of non-isomorphic binary phylogenetic $X$-tree with non-identical $A_k$-alignments alongside pairs with identical $A_k$-alignments \cite[Proposition 4]{WildeFischer2023}.

The main contribution of the present manuscript is the introduction of a class of tree operations guaranteeing that for a given binary phylogenetic $X$-tree $T$, all binary phylogenetic $X$-trees $\widetilde{T}$ with $A_k(\widetilde{T}) = A_k(T)$ can be obtained by a composition of these operations, cf. Theorem \ref{thm:problematicmoves}. As this theorem contains a necessary as well as sufficient condition for the identity of $A_k$-alignments, it provides a complete characterization of all pairs of binary phylogenetic $X$-trees with identical $A_k$-alignments.

As the gap in the literature concerning cases in which the $A_k$-alignment uniquely characterizes a tree is now closed, future research can focus on aspects of when these unique trees can also be recovered by maximum parsimony when used as a tree reconstruction criterion. In fact, it is so far only known for $k\leq 2$ that such a reconstruction is possible \cite{Fischer2019} as well as for the restricted case of considering only trees in the so-called NNI neighborhood of the original tree \cite{fischer2024}. Now that it is clear when the necessary condition of the uniqueness of the $A_k$-alignment holds, the tree reconstruction question can finally be tackled. 

Another area for future research might be the generalization of our results for binary trees to non-binary ones. Some progress in this regard has very recently been made in \cite{Qian2024}.

\section*{Conflict of interest}
The authors state that there is no conflict of interest.

\section*{Data availability statement}
No data was used or generated within the scope of this study. 

\bibliographystyle{plainnat}
\bibliography{references}   

\end{document}